\documentclass[a4paper,oneside]{memoir}\setsecnumdepth{subsection}\settocdepth{subsection}
\addtolength\headheight{0.63503pt}
\setlength{\epigraphrule}{0pt}

\usepackage[utf8]{inputenc}
\usepackage[american,ngerman,brazilian,greek,british]{babel}
\usepackage[T1]{fontenc}
\usepackage[sc,osf]{mathpazo}\linespread{1.05}
\usepackage[scaled=0.86]{berasans}
\usepackage[scaled=1.03]{inconsolata}
\usepackage{mathtools}
\usepackage{layout}
\usepackage{amsmath,amssymb,amsthm}
\usepackage{xspace}
\usepackage{color}
\usepackage[colorlinks,linkcolor=magenta,urlcolor=blue,
]{hyperref}
\usepackage{graphicx}
\usepackage[babel]{microtype}
\usepackage{csquotes} 
\usepackage[backend=bibtex,style=links,maxnames=10,isbn=false,firstinits=true]{biblatex} 
\bibliography{biblio} 

\DeclareMathOperator{\tr}{tr}

\DeclareMathOperator{\sign}{sign}

\let\originalleft\left
\let\originalright\right
\renewcommand{\left}{\mathopen{}\mathclose\bgroup\originalleft}
\renewcommand{\right}{\aftergroup\egroup\originalright}


\newcommand{\ket}[1]{\left| #1 \right\rangle}
\newcommand{\ketpequeno}[1]{| #1 \rangle}
\newcommand{\braket}[2]{\left\langle #1 \middle| #2 \right\rangle}
\newcommand{\ketbra}[2]{\left|#1\middle\rangle\middle\langle#2\right|}
\newcommand{\exval}[3]{\left\langle #1 \middle| #2 \middle| #3 \right\rangle}
\newcommand{\prin}[2]{\left\langle#1, #2\right\rangle}
\newcommand{\norm}[1]{\left\|#1\right\|}

\newcommand{\abs}[1]{\left|#1\right|}
\newcommand{\mean}[1]{\left\langle#1\right\rangle}
\newcommand{\comm}[2]{\left[#1,#2\right]}

\newcommand{\de}[1]{\left(#1\right)}
\newcommand{\De}[1]{\left[#1\right]}
\newcommand{\DE}[1]{\left\{#1\right\}}

\newcommand{\mathand}{\quad\text{and}\quad}
\newcommand{\marginal}{\mathcal{C}}
\newcommand{\chsh}{\mathcal{B}}

\newcommand{\Hi}{\mathcal{H}}

\newcommand{\id}{\mathbb{1}}

\newcommand{\R}{\mathbb{R}}
\newcommand{\Q}{\mathbb{Q}}
\newcommand{\C}{\mathbb{C}}
\newcommand{\N}{\mathbb{N}}
\newcommand{\eff}{\mathcal{E}}
\newcommand{\pro}{\mathcal{P}}
\newcommand{\selfadjoint}{\mathcal{O}}
\newcommand{\state}{\mathcal{D}}
\newcommand{\north}{\mathcal{N}}
\DeclareMathOperator{\dint}{d\!}

\newcommand{\Chi}{\mathrm{X}}

\newcommand{\eg}{\textit{e.g.}\@\xspace}
\newcommand{\ie}{\textit{i.e.}\@\xspace}
\newcommand{\etal}{\textit{et al.}\@\xspace}

\newtheorem{theorem}{Theorem}
\newtheorem{lemma}[theorem]{Lemma}
\newtheorem{definition}[theorem]{Definition}
\newtheorem{corollary}[theorem]{Corollary}
\newtheorem{assumption}{Assumption}
\newtheorem{problem}{Problem}


\hypersetup{pdftitle={Quantum realism and quantum surrealism},pdfauthor={Mateus Araújo}}
\title{Quantum realism and quantum surrealism}
\author{Mateus Araújo}
\date{\today}

\newcommand*{\titleP}{\begingroup%
\raggedright
\vspace*{0.05\textheight}
\hspace*{0.15\textwidth}{\Huge QUANTUM}\\
\vspace*{0.06\textheight}
\hspace*{0.20\textwidth}{\Huge REALISM}\\
\vspace*{0.06\textheight}
\hspace*{0.45\textwidth}{\Huge AND}\\
\vspace*{0.06\textheight}
\hspace*{0.55\textwidth}{\Huge QUANTUM}\\
\vspace*{0.06\textheight}
\hspace*{0.60\textwidth}{\Huge SURREALISM}\\
\vspace*{0.13\textheight}

\begin{center}
\Large Master's thesis\\
\Large Presented to the Graduate Program in Physics of the Universidade Federal de Minas Gerais\footnote{This version incorporates further corrections.}
\end{center}

\vspace*{0.11\textheight}

{\large Author: Mateus Araújo Santos}\\
\vspace*{0.01\textheight}
{\large Supervisor: Marcelo O. Terra Cunha} \\
\vspace*{0.01\textheight}
{\large Examiners: Ernesto F. Galvão} \\
{\large \textcolor{white}{Examiners:} Carlos H. Monken} \\
\vspace*{0.03\textheight}
{\large June, 2012}
\vfill
\endgroup}

\setlrmarginsandblock{1.5in}{1in}{*}
\checkandfixthelayout

\begin{document}

\frontmatter*
\titleP
\thispagestyle{empty}
\cleardoublepage
\thispagestyle{empty}
\vspace*{0.32\textheight}
\epigraph{\ldots we always have had a great deal of difficulty in understanding the world view that quantum mechanics represents. At least I do, because I'm an old enough man that I haven't got to the point that this stuff is obvious to me. Okay, I still get nervous with it. And therefore, some of the younger students\ldots you know how it always is, every new idea, it takes a generation or two until it becomes obvious that there's no real problem. It has not yet become obvious to me that there's no real problem. I cannot define the real problem, therefore I suspect there's no real problem, but I'm not sure there's no real problem.}{Richard Feynman}

\selectlanguage{brazilian}
\chapter*{Agradecimentos}

	À minha Luciana, por ter me feito um homem feliz e por ter conseguido controlar seus ciúmes dessa minha amante.
	
	Aos meus pais, por serem quem são, e por me tornarem quem sou. Seu apoio foi e ainda é indispensável.

	Ao meu orientador Marcelo Terra Cunha, por ter me dado a liberdade de putanejar enquanto eu podia, e por ter me mandado trabalhar quando eu precisava.

	Ao meu grande amigo Marco Túlio Quintino, sem quem essa dissertação seria muito pior.

	Ao Marcelo França, pelas conversas fiadas que me impediam de trabalhar, e por me impedir de ignorar suas sugestões.

	A Gláucia Murta, pela ajuda indispensável em ler e reler a dissertação em busca de erros e passagens obscuras. Qualquer falha de matemática ou de estilo que tenha permanecido no texto é culpa dela. Também agradeço por ser um recurso local capaz de realizar protocolos inacessíveis a uma pessoa altamente não-local.

	Aos meus amigos da Pós, agradeço pelo bom ambiente. Vocês tornam possível ser feliz e aprender física.

	Aos professores da Pós, por tudo o que me ensinaram, e por tudo o que não me ensinaram.

\selectlanguage{british}
\tableofcontents
\chapter*{Abstract}

In this thesis we explore the questions of what should be considered a ``classical'' theory, and which aspects of quantum theory cannot be captured by any theory that respects our intuition of classicality.

This exploration is divided in two parts: in the first we review classical results of the literature, such as the Kochen-Specker theorem, von Neumann's theorem, Gleason's theorem, as well as more recent ideas, such as the distinction between $\psi$-ontic and $\psi$-epistemic ontological models, Spekkens' definition of contextuality, Hardy's ontological excess baggage theorem and the PBR theorem.

The second part is concerned with pinning down what should be the ``correct'' definition of contextuality. We settle down on the definition advocated by Abramsky and Branderburger, motivated by the Fine theorem, and show the connection of this definition with the work of George Boole. This definition allows us to unify the notions of locality and noncontextuality, and use largely the same tools to characterize how quantum mechanics violates these notions of classicality. Exploring this formalism, we find a new family of noncontextuality inequalities. We conclude by reviewing the notion of state-independent contextuality.
\chapter{Introduction}
	\epigraph{Quantum mechanics is magic.}{Daniel Greenberger}

	This thesis is meant to explore the question posed by Chris Fuchs: what is ``Zing!'' \cite{fuchs02}? What is the property of quantum mechanics which is essentially quantum, absent from any classical theory? Contrary to the goals of Chris Fuchs, our exploration is operationalist rather than axiomatic: our ``Zing!'' is not a deep axiom that reveals the essence of quantum theory, but rather logically connected sets of probability distributions that cannot be reproduced by any classical theory. Although finding his axiom would be nice, we feel that our approach is more useful, as these sets of probability distributions are the resources needed for \textit{quantum magic}: quantum computing and quantum key distribution.

	This is emphatically not a historical account of the subject: these are plentiful, and another one is unnecessary. Therefore, we shall try to keep references to the great works of von Neumann, Bell, Kochen, and Specker to a bare minimum, while emphasising the newer\footnote{As a result, the median year of publishing of our references is 2002.} works of Abramsky, Busch, Cabello, Hardy, Pitowsky, and Spekkens. The sole exception shall be the work of George Boole, that although very old is still very unknown.

	Given a general picture of my motivations and goals, let me now give a more detailed account of the structure of this thesis.

	Chapter \ref{cha:onto} presents introductory material\footnote{The reader that is already well-acquainted with the subject (or a mathematician) may find it better to skip it.} on the question ``is quantum mechanics really different from `classical' theories?''. It begins by capturing some notions of classicality within the framework of ontological theories; then this question is made more precise as ``is there an ontological embedding of quantum theory?''. 

	The chapter proceeds by detailing specific ontological models, and showing which problems arise in trying to reproduce the results of quantum mechanics within them. These problems are then understood as their failure to respect noncontextuality, a notion that we argue to be fundamental in defining classicality. After giving a precise definition of noncontextuality, we proceed to prove Spekkens' theorem of the impossibility of embedding quantum theory within a preparation noncontextual ontological model.

	We proceed then to revisit our assumptions, and try to find whether a less ambitious notion of classicality can embed quantum theory. To do that, we revisit the historical theorems of von Neumann and Gleason, culminating with the recent version of Busch. In each of their frameworks, a ``classical'' formulation of quantum mechanics is again ruled out.

	The next stop is the famous theorem of Kochen and Specker, that uses the weakest assumptions yet. We present three recent versions of it, by Cabello \etal, Yu and Oh, and Peres and Mermin, that are considerable simplifications of the original proof.

	The chapter concludes by presenting a recent theorem of Hardy, that ``any ontological embedding of quantum theory is very uncomfortable'', and two specific contextual ontological embeddings of quantum theory.

	Our conclusion is then that any \emph{reasonable} ontological embedding of quantum theory is impossible; therefore there \emph{is} something more in quantum mechanics that classical theories cannot quite capture. Chapter \ref{cha:prob} is then dedicated to detail what this something is. 

	We begin by constructing our final definition of noncontextuality. Based on the recent work of Abramsky and Brandenburger, we show that the Fine theorem admits a natural generalization that applies to any set of observables, without regard to spatial separation. This generalization in its turn motivates a definition of noncontextuality that is a natural generalization of the definition of locality, with mostly the same mathematical structure -- this allows us to consider generalizations of Bell inequalities that test noncontextuality instead of locality. Interestingly, this ``new'' definition was already implicit in the ancient works of Boole (and in the more recent works by Pitowsky), which motivates us to call these generalized Bell inequalities \textit{Boole} inequalities.

	This ``new'' approach is then formalized via a classical problem in mathematics, the marginal problem. Using its formalism, we gain access to powerful tools to separate contextual from noncontextual probability distributions, and with them derive a new result: a set of Boole inequalities that completely describes an infinite family of noncontextual polytopes.

\chapter{Notation and definitions}

	The purpose of this part of the thesis is only to establish notation, not to teach quantum mechanics to anyone. If one needs such an introduction, we recommend the excellent book of Michael Nielsen and Isaac Chuang \cite{chuang00}.

	We say that an operator $A$ is self-adjoint, \ie, $A = A^*$, if $\braket{\phi}{A\psi} = \braket{A\phi}{\psi} = \exval{\phi}{A}{\psi}$ for all $\ket{\phi},\ket{\psi}$. We shall only deal with finite-dimensional operators. The set of all self-ajoint operators is $\selfadjoint(\Hi)$.

	A quantum-mechanical observable is a self-adjoint operator.

	We say that an operator $A$ is positive, \ie, $A\ge0$, if $\exval{\psi}{A}{\psi}\ge 0$ for all $\ket{\psi}$. 

	A quantum state $\rho$ is a positive operator such that $0 \le \tr \rho \le 1$ \cite{hardy01}. Since we shall have no use for states such that $\tr\rho < 1$, we can omit the normalization of our quantum states without ambiguity. The set of all quantum states is $\state(\Hi)$. A pure quantum state is an extremal point of $\state(\Hi)$, a rank-one projector $\psi$. The vector of a pure quantum state will be denoted by $\ket{\psi}$, and the vectors are connected to the projectors by \[\psi = \ketbra{\psi}{\psi}.\]
	The set of all pure states is $\pro\Hi$.

	An effect $E$ is a positive operator smaller than identity, \ie, $0 \le E \le \id$. The set of all effects is $\eff(\Hi)$. A set of effects $\{E_i\}$ such that $\sum_i E_i = \id$ describes a measurement\footnote{Except for the post-measurement state.} and is called a POVM.

	A projector $\Pi$ is a self-adjoint operator such that $\Pi^2 = \Pi$. The set of all projectors is $\pro(\Hi)$. A set of projectors $\{\Pi_i\}$ such that $\sum_i\Pi_i = \id$ describes a measurement and is called a PVM. Note that a PVM is a special case of a POVM.

	The Born rule is the quantum mechanical rule for associating measurement probabilities with states and effects. We say that
	\[ p(i|\rho,E) = \tr \rho E_i. \]

\mainmatter*
\chapter{Ontological embeddings of quantum theory}\label{cha:onto}
\setlength{\epigraphwidth}{1.4\epigraphwidth}
	\epigraph{Classical measurements reveal information. Quantum measurements produce information.}{Marcelo Terra Cunha}
\setlength{\epigraphwidth}{0.8\epigraphwidth}

	The quest for embedding quantum mechanics in a ``classical'' theory is almost as old as quantum theory itself. People were disturbed with the role of measurement in the theory, particularly with its intrinsic randomness and non-repeatability. So they tried to explain away these features as emergent, rather than fundamental, as if they appeared because of a lack of control and understanding of a more refined theory, that would describe the ``deeper'' physics behind quantum phenomena. We call this refined theory an \emph{ontological} theory.

	But despite being familiar, the words ``classical'' and ``ontological'' have very fuzzy meanings. In the next section we shall pin them down and clarify them.

\section{What is an ontological theory?}

	The first ontological models that appeared tried to ``solve'' the problem of non-determinism. They postulated that $\psi$ was not the real state of nature, but rather some kind of shadow of it. So they postulated that there \emph{was} a real state, an ontic state\footnote{The reader that is well-acquainted with the subject might be wondering when the expression ``hidden-variable'' will appear. Well, it won't.}, called $\lambda$, that if known would render all measurement outcomes deterministic. That is, given a PVM\footnote{Even the most determined determinist can't hope for a POVM to be deterministic. We'll explain why in a while.} $M = \DE{M_k}$, the probability of outcome $k$ given $\lambda$ would be either $0$ or $1$, that is, we can define a response function \[\xi_{k|M} : \Lambda \to \DE{0,1},\] such that $\xi_{k|M}(\lambda)$ is the probability of outcome $k$. Here, $\Lambda$ is any space in which our ontic states $\lambda$ are defined, and to account for the fact that $\sum_k M_k = \id$, we require that $\sum_k \xi_{k|M}(\lambda) = 1$ for all $\lambda$. This is just the requirement that some outcome must occur in a measurement.

	Then the \emph{subjective} indeterminism of quantum theory would be recovered by the ignorance of which ontic states were really present in a experiment. That is, a quantum state $\psi$ would determine a probability distribution $\mu_\psi(\lambda)$ over $\Lambda$. This property can be thought of as ``you were trying to generate state $\psi$, but you ended up generating an ensemble of ontic states $\mu_\psi(\lambda)$''. As in quantum (and classical) mechanics, we shall call the ensemble $\mu_\psi(\lambda)$ itself a state, while reserving the term pure ontic state for the individual $\lambda$, which can of course be represented as an ensemble with a $\delta$ distribution.

	Of course, we want this subjective indeterminism to agree with the predictions of quantum mechanics, so 
	\begin{equation}\label{eq:recoverqmdet}
	p(k|\psi,M)= \int_\Lambda\dint\lambda\, \mu_\psi(\lambda) \xi_{k|M}(\lambda) = \tr\psi M_k.
	\end{equation}

\subsection{On mixed states and POVMs}

	The early literature of ontological theories did not do this separation between states and measurements\footnote{With the honourable exception of the Kochen-Specker model, discussed in section \ref{sec:kochenspeckermodel}.} \cite{bell66,mermin93}; instead they tried to define a deterministic value function $v(M_k,\psi,\lambda)$ that would answer with certainty the outcome of an experiment, given the quantum state and the ontic state, and recover the quantum statistics by averaging over $\lambda$. This is quite problematic, since it can only describe models in which $\psi$ itself has an ontic status\footnote{See section \ref{sec:psiontic} for further discussion of this point.}; it therefore can never describe experiments where the quantum state is explicitly epistemic, \eg, a mixed state. For instance, let's say we have two pure states $\psi$ and $\phi$ with different deterministic outcomes $v(M_k,\psi,\lambda)$ and $v(M_k,\phi,\lambda)$. Then if I prepare state $\psi$ with probability $p$ or state $\phi$ with probability $(1-p)$, corresponding to the mixed state $\rho = p \psi + (1-p)\phi$, the outcome must be
	\[ v(M_k,\rho,\lambda) = pv(M_k,\psi,\lambda)+ (1-p)v(M_k,\phi,\lambda), \]
	which is neither $0$ nor $1$ for non-trivial $p$, a contradiction.

	Using probability distributions like we do, this can be accommodated in a very natural manner:
	\begin{lemma}\label{lem:mixedontic}
	If one prepares the quantum states $\psi_i$ with probabilities $p_i$, then the corresponding ontic state is \[\mu_{(p_i,\psi_i)}(\lambda) = \sum_i p_i \mu_{\psi_i}(\lambda) \]
	\end{lemma}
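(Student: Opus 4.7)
The plan is to derive the formula by applying the law of total probability to the preparation procedure itself, and then to verify consistency with the Born rule as a sanity check. Operationally, ``prepare $\psi_i$ with probability $p_i$'' is a two-stage procedure: first one samples an index $i$ with probability $p_i$, then one performs the preparation $\psi_i$, which, by the operational reading of $\mu_\psi$ established in the preceding paragraphs, produces an ontic state drawn from $\mu_{\psi_i}(\lambda)$. Marginalising over the (unknown) index $i$ yields exactly $\sum_i p_i \mu_{\psi_i}(\lambda)$, which is the claimed formula.

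To confirm that this assignment is compatible with the quantum predictions, I would plug it into the consistency requirement \eqref{eq:recoverqmdet} for an arbitrary PVM $M=\{M_k\}$. Using linearity of integration, \eqref{eq:recoverqmdet} applied to each $\psi_i$, and linearity of the trace,
\[
\int_\Lambda \dint\lambda\, \mu_{(p_i,\psi_i)}(\lambda)\, \xi_{k|M}(\lambda) = \sum_i p_i \int_\Lambda \dint\lambda\, \mu_{\psi_i}(\lambda)\, \xi_{k|M}(\lambda) = \sum_i p_i \tr(\psi_i M_k) = \tr(\rho M_k),
\]
which is precisely the Born rule for the mixed state $\rho = \sum_i p_i \psi_i$. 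So the proposed $\mu_{(p_i,\psi_i)}$ is internally consistent with the framework.

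There is no real obstacle here: the result is almost definitional once one accepts the operational interpretation of $\mu_\psi$. The one subtle point worth flagging, and what makes the lemma non-vacuous, is that a single density matrix $\rho$ admits many decompositions $(p_i,\psi_i)$, and the lemma does \emph{not} claim that $\mu_{(p_i,\psi_i)}$ depends only on $\rho$; different decompositions may in general induce different ontic distributions, even though they all yield the same measurement statistics. This is precisely what rescues the framework from the contradiction derived immediately above for deterministic value functions $v(M_k,\rho,\lambda)$: there, insisting that $v$ depend on $\rho$ alone forced a convex combination of $\{0,1\}$-valued functions to itself be $\{0,1\}$-valued, which is impossible, whereas here no such constraint is imposed on $\mu$.
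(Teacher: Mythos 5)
Your proof is correct, but it takes a genuinely different route from the paper's. The paper argues by \emph{necessity}: it starts from the quantum identity $p(k|(p_i,\psi_i),M)=\sum_i p_i\,p(k|\psi_i,M)$, writes it ontologically as $\int_\Lambda \mu_{(p_i,\psi_i)}\xi_{k|M}=\sum_i p_i\int_\Lambda \mu_{\psi_i}\xi_{k|M}$, and then concludes that the two measures must coincide because $\xi_{k|M}$ is ``positive and arbitrary'' --- i.e.\ it claims the convex-combination rule is the \emph{only} assignment compatible with the quantum statistics. You instead argue by naturality plus sufficiency: the law of total probability applied to the two-stage preparation (sample $i$, then prepare $\psi_i$) yields $\sum_i p_i\mu_{\psi_i}$ directly, and you then verify the Born rule as a consistency check. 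Each approach buys something: the paper's version, when it works, establishes uniqueness of the assignment, not just its admissibility; but its separation step silently assumes the family of response functions is rich enough to distinguish measures on $\Lambda$, which need not hold in a given model (in the na\"ive model of section \ref{sec:naive}, for instance, $\xi_{k|E}(\lambda)=\tr\lambda E_k$ only probes the barycenter of $\mu$). Your total-probability derivation needs no such assumption, at the price of resting on the operational reading of $\mu_\psi$ rather than deriving the formula from the statistics alone. Your closing observation that $\mu_{(p_i,\psi_i)}$ may depend on the decomposition and not merely on $\rho$ is exactly the point the paper develops immediately afterwards under the name of preparation contextuality, so it fits the surrounding text well.
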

	\begin{proof}
	Quantum mechanics tells us that $p(k|(p_i,\psi_i),M) = \sum_i p_i p(k|\psi_i,M)$. Writing these probabilities ontologically, we have\footnote{When doing calculations we shall often omit the integration variable $\lambda$, but only when there's no risk of ambiguity.}
	\[ \int_\Lambda \mu_{(p_i,\psi_i)} \xi_{k|M}  = \sum_i p_i \int_\Lambda \mu_{\psi_i}\xi_{k|M}.\]
	Since $\xi_{k|M}$ is positive and arbitrary, this implies that \[\mu_{(p_i,\psi_i)}(\lambda) = \sum_i p_i \mu_{\psi_i}(\lambda). \]
	\end{proof}
	Note that this same rule is used to describe convex combinations of states in quantum and classical mechanics.

	The issue with POVMs is similar: one can implement the POVM \[E = \DE{p\ketbra{0}{0},p\ketbra{1}{1},(1-p)\ketbra{+}{+},(1-p)\ketbra{-}{-}} \] simply by measuring the PVM $M = \DE{\ketbra{0}{0},\ketbra{1}{1}}$ with probability $p$ and the PVM $N = \DE{\ketbra{+}{+},\ketbra{-}{-}}$ with probability $1-p$ \cite{ariano05}; we must have then $\xi_{0|E}(\lambda) = p\xi_{0|M}(\lambda)$, which is obviously not deterministic. We must accept, then, that for these kinds of ``mixed'' POVMs\footnote{Following \cite{ariano05}, we are calling ``mixed'' the POVMs that can be written as a convex combination of different POVMs, and ``pure'' those who can't.} the response functions must be modified to
	\[ \xi_{k|E} : \Lambda \to [0,1], \]
	that is, allowing the whole interval $[0,1]$ as image.

	For ``pure'' POVMs, this argument does not apply, and we can not decide \textit{a priori} whether to demand them to be deterministic. In fact, it is fruitful to allow even PVMs to be objectively non-deterministic\footnote{However discomforting that may seem for some people, it's certainly a milder discomfort than abandoning the notion of reality altogether as in quantum mechanics. See section \ref{sec:naive}.}, so we shall not exclude this possibility.

	The most general case is, therefore,
	\begin{equation}\label{eq:recoverqmrandom}
	p(k|\rho,E) = \int_\Lambda\dint\lambda \mu_\rho(\lambda) \xi_{k|E}(\lambda) = \tr\rho E_k,
	\end{equation}
	and this is what an ontological theory should strive to reproduce, only falling back to pure states and PVMs when unavoidable.

\section{Ontological models}

	With the definitions given in the previous section, it is already possible to construct some examples of ontological theories, to examine their features in a more concrete manner.

\subsection{The naïve ontology}\label{sec:naive}

	If we allow an ontological model to have objective non-determinism, what we gain in relation to quantum mechanics? Not much, actually. This ontological model is so similar to quantum mechanics that it can be confounded with a naïve interpretation of it, that ascribes ontological status to the pure states. Nevertheless, it is quite useful to examine meticulously this ontological model, to be aware of the problems that such a naïve interpretation has. This particular model was first proposed by \cite{beltrametti95}, and further explored in \cite{harrigan10}.

	In this model, we are considering the pure states $\psi$ to be the ontic states $\lambda$, so we identify the ontic state space $\Lambda$ with $\pro\Hi$, and define
	\[ \mu_\psi(\lambda) = \delta(\lambda-\psi). \]
	The response function is then
	\[ \xi_{k|E}(\lambda) = \tr \lambda E_k,\]
	and we recover the results of quantum mechanics by 
	\[ p(k|\psi,E) = \int_\Lambda \dint\lambda\, \delta(\lambda-\psi)\tr \lambda E_k = \tr \psi E_k. \]
	We can see, then, that mathematically this ontological model is quite trivial. One interesting thing to examine, though, is the representation of mixed states in this formalism. Following lemma \ref{lem:mixedontic}, we see that
	\[ \rho = \sum_i p_i \psi_i \quad\mapsto\quad \mu_\rho(\lambda) = \sum_i p_i \delta(\lambda-\psi_i), \]
	which trivially reproduces the required quantum statistics. The problem with this approach, however, is that the ontic state $\mu_\rho(\lambda)$ depends on which convex decomposition of $\rho$ we chose to use. This makes the the notation $\mu_\rho$ suspect, since it should actually be $\mu_{(p_i,\psi_i)}$, and blatantly violates the $C^*$-algebraic definition of state \cite{strocchi12}, that requires that states that gives rises to the same statistics to have the same mathematical representation. We call this (unwanted) feature preparation contextuality, which we shall define more carefully in section \ref{sec:spekkenscontextuality}.

	Remember that it is common for beginners to be surprised by the fact that it is impossible to know which convex combination was actually used to construct a given density matrix. Regarding the pure states as ontological, this feeling becomes quite natural, since the mystery is why should the state $\mu_{(p_i,\psi_i)}$ give the same statistics as the state $\mu_{(q_i,\phi_i)}$ when $\sum_i p_i \psi_i = \sum_i q_i \phi_i$.

	To solve this problem, one might be tempted to ignore common sense (and lemma \ref{lem:mixedontic}) and ascribe ontological status to mixed states, identifying $\Lambda$ with $\state(\Hi)$ instead of $\pro\Hi$; then the ontic states would be just
	\[ \mu_\rho(\lambda) = \delta(\lambda-\rho), \]
	relieving us of the basis-dependence. But this is in fact a terrible idea, since one can always write a mixed state $\rho$ as a convex combination of two different states $\sigma_0$ and $\sigma_1$, as 
	\[\rho = p\sigma_0 + (1-p)\sigma_1.\]
	If you want to regard every mixed state as ontological, you have, by lemma \ref{lem:mixedontic},
	\[\delta(\lambda-\rho) = p\delta(\lambda-\sigma_0) + (1-p)\delta(\lambda-\sigma_1),\]
	a flat-out contradiction.

	One can now begin to suspect that it is not possible to avoid preparation contextuality; this will be proved in section \ref{sec:preparationcontextuality}. For now, we see that even the most humble ontological model, that does not even provide determinism, already has some very undesirable features. It would be a question then if a deterministic ontological model is even possible; fortunately this question was answered a long time ago in the positive. We shall see how in the next subsection.

\subsection{Constructing a deterministic ontological model}\label{sec:bellmeu}

	In 1964, Bell had an idea on how to make a deterministic ontological model \cite{bell66}: hide the quantum mechanical probability of an outcome in the measure of the set of ontic states associated to that outcome. I shall present here a modified version of his model that makes this point quite clear. 

	This model can describe in a deterministic way the measurement of a one-qubit PVM $\Pi = \DE{\Pi_0,\Pi_1}$. The ontic space is $\Lambda = \pro\Hi \times [0,1]$, with ontic variable $\lambda = (\lambda_\psi,\lambda_x)$. The ontic state of a given quantum state $\psi$ is
	\[ \mu_\psi(\lambda_\psi,\lambda_x) = \delta(\lambda_\psi-\psi),\]
	and the response functions\footnote{Note that the response functions depend explicitly on the label of the projectors, so it would be desirable to set a consistent ordering convention to avoid giving different results to $\DE{\ketbra{0}{0},\ketbra{1}{1}}$ and $\DE{\ketbra{1}{1},\ketbra{0}{0}}$.} are
	\begin{align*}
	\xi_{0|\Pi}(\lambda_\psi,\lambda_x) &= \Theta(\tr\lambda_\psi \Pi_0-\lambda_x) \\
	\xi_{1|\Pi}(\lambda_\psi,\lambda_x) &= 1-\xi_{0|\Pi}(\lambda_\psi,\lambda_x),
	\end{align*}	
	where $\Theta$ is the Heaviside step function defined by
	\[ \Theta(x) = \begin{cases} 1&\text{ if }x \ge 0, \\
				     0&\text{ if }x < 0.\end{cases} \]
	One then recovers quantum statistics by uniform averaging over the ontic space:
	\begin{align*}
	p(0|\psi,\Pi)  &= \int_\Lambda \mu_\psi \xi_{0|\Pi}\\
		   &= \int_\Lambda \dint \lambda_\psi \dint\lambda_x \, \delta(\lambda_\psi-\psi)\Theta(\tr\lambda_\psi \Pi_0-\lambda_x) \\
		   &= \int_0^1 \dint\lambda_x\, \Theta(\tr\psi \Pi_0-\lambda_x) \\
		   &= \int_0^{\tr\psi \Pi_0} \dint\lambda_x = \tr\psi \Pi_0
	\end{align*}
	The reader might have noticed that although the model claims to only work for a qubit, the mathematical formalism does not make any reference to this, and one might be tempted to think that it actually works for any two-outcome PVM. The fact that it does not work is more subtle, and we shall see why in section \ref{sec:kochenspecker}.


\section{\texorpdfstring{$\psi$-ontic and $\psi$-epistemic models}{psi-ontic and psi-epistemic models}}\label{sec:psiontic}

	Both models presented in the previous section share a common feature: the quantum state has an ontological status. Either the ontic state is the quantum state itself, like in the naïve model, or it is the quantum state supplemented by real number in the unit interval, as in the Bell model. In both cases, knowing the (pure) ontic state $\lambda$ of the system is enough to determine uniquely the (pure) quantum state that was prepared. These kind of models are called\footnote{The concept of ontic and epistemic states was first introduced in \cite{hardy04}, and further formalized in \cite{spekkens05,harrigan10}. A nice discussion of these concepts can be found in \cite{leifer11}.} $\psi$-ontic, and have the equivalent but more operational definition:
	\begin{definition}
	An ontological model is $\psi$-ontic if for different quantum states $\phi$ and $\psi$ the ontic states have disjoint support, \ie,
	\[ \phi \neq \psi \quad\Rightarrow\quad \mu_\phi(\lambda)\mu_\psi(\lambda) = 0 \quad \forall \lambda \]
	\end{definition}
	To motivate this definition it might be useful to make an analogy with classical mechanics: in it, an ontic state is a point in phase space, and ontic properties of it (like energy, momentum) are functions of the phase space point. Likewise, anything that is uniquely determined by the ontic state in an ontological theory should be regarded as ontic itself, as a change in it requires a change of the underlying ontic states. As the quantum state is uniquely determined by the ontic state in $\psi$-ontic models, it has to be regarded as ontic, as it is not possible to change it without changing the underlying ontic states.

	Apart from conceptual clarity, a reason to make this definition is that it is easy to see that $\psi$-ontic models necessarily require instant transfer of information\footnote{Only in the formalism, of course; if they displayed an observable violation of causality that would be a contradiction with quantum mechanics.}. In the first case, where $\psi$ is the whole ontic state, it suffices to consider a measurement in an entangled state: Alice and Bob share $\ket{\phi_+} = \ket{00} + \ket{11}$ and are spatially separated, Alice then measures the PVM $\DE{\ketbra{0}{0},\ketbra{1}{1}}$ and obtains, \eg, the result $0$. Bob's state then changes instantly from $\id$ to $\ket{0}$, violating causality. Of course, if $\psi$ is not the whole ontic state, there is no need for a violation of causality: $\lambda$ can tell us that the state of Bob's system actually was $\ket{0}$ all along, and so the ontic state does not change during the measurement.

	To deal with this case, we need the \textsc{epr} \textit{gedankenexperiment}\footnote{The version presented here is Einstein's version, reproduced in \cite{harrigan10}.} \cite{einstein35}: consider that Alice can also measure the PVM $\DE{\ketbra{+}{+},\ketbra{-}{-}}$; then after her measurement Bob's state will belong to the set $\DE{\ket{0},\ket{1}}$ if she measures the first PVM, or to the set $\DE{\ket{+},\ket{-}}$ if Alice measures the second PVM. Even if the results of any given measurement can be predetermined by $\lambda$, it cannot tell \emph{which} measurement was made\footnote{Indeed, it could conceivably determine which measurement Alice \emph{will} make -- here we are using the assumption that she has \textit{free will}.}. Since Bob's quantum state does depend on which measurement was made (since the four possibilities are different), the formalism needs again instant transfer of information.

	Another way to avoid the violation of causality is to say that $\psi$ is not ontic, but merely the representation of Alice's knowledge of reality, \ie, epistemic. Then what changed after the measurement was actually just what Alice knew about Bob's state, which is in fact a quite reasonable proposition. But this amounts to give up $\psi$-ontic models in favour of $\psi$-epistemic ones\footnote{It is interesting to notice that although we've known this since 1935, the first ontological models were all $\psi$-ontic.}:

	\begin{definition}
	An ontological model is $\psi$-epistemic if it is not $\psi$-ontic.
	\end{definition}

	Again, an analogy with classical mechanics might be useful: the classical mixed state is a probability distribution over the phase space, and it is interpreted as epistemic, as it is merely an ignorance about which is the real phase space point that the system occupies. This is only possible as there is no restriction about the overlaps of different mixed states, \ie, the same phase space point can belong to numerous different mixed states. Notice that this definition is quite weak compared to the classical case: it only requires that there is one pair $\phi$, $\psi$ whose ontic states $\mu_\phi$ and $\mu_\psi$ share a single $\lambda$ in their support.

	The obvious question to ask: is there a $\psi$-epistemic model?

\subsection{The Kochen-Specker model}\label{sec:kochenspeckermodel}

	Even before this question was raised, it was already answered by Simon Kochen and Ernst Specker \cite{kochen67}, by the ontological model they constructed as a counterexample to von Neumann's theorem \cite{vonneumann32}. It seems that the authors were trying to make a model that was somewhat physically plausible, and ended up making a $\psi$-epistemic model. We presented it here as rendered in \cite{harrigan10}.

	The ontic space $\Lambda$ is the unit sphere $S^2$, and we shall use the Bloch vectors $\hat{\psi}$ and $\hat{\phi}$ to represent a pure state $\psi$ and a measurement projector $\phi$ in $S^2$ as well, defined via the isomorphism $\psi = \frac{1}{2}(\id+\hat{\psi}\cdot\sigma)$. The ontic state is then 
	\[ \mu_\psi(\lambda) = \frac{1}{\pi}\Theta(\hat{\psi}\cdot\lambda)\hat{\psi}\cdot\lambda,\]
	making the model clearly $\psi$-epistemic, since the only states that do not overlap are orthogonal states. The response function is given by
	\[ \xi_\phi(\lambda) = \Theta(\hat{\phi}\cdot\lambda). \]
	To recover the quantum statistics, notice that each of $\mu_\psi$ and $\xi_\phi$ has as support an hemisphere centred in $\hat{\psi}$ and $\hat{\phi}$, so their intersection defines a spherical lune. To take advantage of this, let's choose coordinates such that $\hat{\psi}$ and $\hat{\phi}$ lie in the equator of $S^2$, so that $\hat{\psi} = (\cos\psi,\sin\psi,0)$, $\hat{\phi} = (\cos\phi,\sin\phi,0)$, and $\lambda = (\sin\theta\cos\varphi,\sin\theta\sin\varphi,\cos\theta)$. We have then
	\begin{align*}
	p(\phi|\psi) &= \int_\Lambda \dint \lambda\, \frac{1}{\pi}\Theta(\hat{\psi}\cdot\lambda)\hat{\psi}\cdot\lambda \Theta(\hat{\phi}\cdot\lambda) \\
		     &= \frac{1}{\pi}\int_{S^2} \dint\Omega\, \Theta(\sin\theta\cos(\varphi-\psi)) \sin\theta\cos(\varphi-\psi) \Theta(\sin\theta\cos(\varphi-\phi))\\
		     &= \frac{1}{\pi}\int_0^\pi \dint \theta\, \sin^2\theta \int_0^{2\pi}\dint\varphi\, \Theta(\cos(\varphi-\psi))\cos(\varphi-\psi) \Theta(\cos(\varphi-\phi))\\
		     &= \frac{1}{2} \int_{\phi-\frac{\pi}{2}}^{\psi+\frac{\pi}{2}}\dint\varphi\,\cos(\varphi-\psi) \\
		     &= \frac{1}{2} \de{1-\sin(\phi-\psi-\pi/2)} \\
		     &= \frac{1}{2} \de{1+\cos(\phi-\psi)} \\
			 &= \tr \psi\phi.
	\end{align*}

	This model does seem to be the most ``natural'' of the ontological models yet considered, and there have even been attempts to understand it physically \cite{rudolph06}. In this same article, Terry Rudolph explores extensions of the Kochen-Specker model to higher dimensions, but fails to precisely reproduce quantum mechanics with them. A $\psi$-epistemic model for higher dimensions has since then been found (we discuss it in section \ref{sec:ljbr}), but it does not have the simplicity of the Kochen-Specker model, and so it would be unfair to call it an extension of it.

\subsection{\texorpdfstring{Two theorems on $\psi$-epistemic models}{Two theorems on psi-epistemic models}}

	We can see, then, that $\psi$-epistemic models are desirable and can actually be constructed. There are, however, two theorems that say that any such model, if it exists, has to be very unnatural. They are both based on the following idea:
	\begin{lemma}\label{lem:protopbr}
	If there are quantum states $\psi_i$ and measurements $E_i$ such that $\tr\psi_i E_i = 0$ $\forall i$, then there can be no $\lambda_0$ in the support of all $\mu_{\psi_i}$.
	\end{lemma}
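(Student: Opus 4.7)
The strategy is to translate the quantum statistics into ontological language via equation (2.2), and then exploit the fact that the response functions of a POVM must sum to $1$ pointwise on $\Lambda$.

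First I would interpret ``measurements $E_i$'' as meaning that $\{E_i\}$ is a POVM, so that $\sum_i E_i = \id$; this translates into the ontological requirement
\[ \sum_i \xi_{E_i}(\lambda) = 1 \quad \text{for every } \lambda \in \Lambda. \]
This is the normalization we will contradict.

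Next, for each fixed $i$, I would plug $\tr\psi_i E_i = 0$ into equation (2.2) to obtain
\[ \int_\Lambda \dint\lambda\, \mu_{\psi_i}(\lambda)\, \xi_{E_i}(\lambda) = 0. \]
Both $\mu_{\psi_i}$ and $\xi_{E_i}$ are non-negative, so the integrand must vanish (almost everywhere). In particular, on any $\lambda$ in the support of $\mu_{\psi_i}$ one is forced to have $\xi_{E_i}(\lambda) = 0$. Assuming a hypothetical $\lambda_0$ lies in the support of every $\mu_{\psi_i}$, applying this conclusion for each $i$ yields $\xi_{E_i}(\lambda_0) = 0$ for all $i$, whence $\sum_i \xi_{E_i}(\lambda_0) = 0$, contradicting the normalization above.

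The only genuinely delicate point is the passage from ``the integrand vanishes almost everywhere'' to ``$\xi_{E_i}(\lambda_0) = 0$ at a specific point $\lambda_0$ in the support of $\mu_{\psi_i}$''. Strictly speaking, this requires reading ``support'' in the informal sense that $\mu_{\psi_i}$ is positive at $\lambda_0$ (or, more carefully, that every neighbourhood of $\lambda_0$ has positive $\mu_{\psi_i}$-measure, together with mild regularity of $\xi_{E_i}$). Since the thesis treats these measure-theoretic subtleties informally throughout, I would simply note this once and otherwise argue pointwise; the rest of the proof is immediate from the two-line contradiction sketched above.
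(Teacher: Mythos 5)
Your proof is correct and follows exactly the same route as the thesis: translate $\tr\psi_i E_i = 0$ into the vanishing of the ontological integral, conclude $\xi_{i|E}=0$ on the support of each $\mu_{\psi_i}$, and contradict the normalization $\sum_i\xi_{i|E}(\lambda)=1$ at a common $\lambda_0$. Your closing remark on the measure-theoretic reading of ``support'' is a caveat the thesis leaves implicit, but it does not change the argument.
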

	\begin{proof}
	If these conditions are satisfied, then it must be true that
	\[ \int_\Lambda \dint\lambda\,\mu_{\psi_i}(\lambda)\xi_{i|E}(\lambda) = 0,\]
	and therefore that $\xi_{i|E}(\lambda) = 0$ for all $\lambda$ in the support of $\mu_{\psi_i}$. If there is a $\lambda_0$ in the support of all the $\mu_{\psi_i}$, making the model $\psi$-epistemic, then $\sum_i \xi_{i|E}(\lambda_0) = 0$, an absurd, since in the definition of the response functions we require that $\sum_i \xi_{i|E}(\lambda) = 1$ for all $\lambda$.
	\end{proof}
	Of course, if we could prove that for \emph{any} pair of states the hypothesis of the lemma are satisfied, we would have proven that no $\psi$-epistemic model is possible; but for a pair of states the hypothesis of the lemma are satisfied only if they are orthogonal, and by lemma \ref{lem:orthogonality} they must have disjoint support anyway:
	\begin{lemma}
	If there are quantum states $\psi_0,\psi_1$ and measurements $E,\id - E$ such that $\tr\psi_0 E = \tr\psi_1 (\id - E) = 0$, then $\psi_0\psi_1 = 0$
	\end{lemma}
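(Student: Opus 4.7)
The plan is to convert the two trace conditions into the operator-level identities $E\ket{\psi_0}=0$ and $E\ket{\psi_1}=\ket{\psi_1}$, and then exploit the self-adjointness of $E$ to conclude that $\braket{\psi_0}{\psi_1}=0$. Since $\psi_0$ and $\psi_1$ are pure, \ie, rank-one projectors, orthogonality of the underlying vectors immediately gives
\[ \psi_0\psi_1 = \braket{\psi_0}{\psi_1}\ketbra{\psi_0}{\psi_1} = 0, \]
which is what we want.

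For the first identity, I use that $E\ge 0$ admits a positive square root $B=\sqrt{E}$, so
\[ \tr\psi_0 E = \exval{\psi_0}{B^*B}{\psi_0} = \norm{B\ket{\psi_0}}^2. \]
The hypothesis $\tr\psi_0 E=0$ then forces $B\ket{\psi_0}=0$, and hence $E\ket{\psi_0}=B^2\ket{\psi_0}=0$. The identical argument applied to the positive operator $\id-E$ and the state $\psi_1$ yields $(\id-E)\ket{\psi_1}=0$, \ie, $E\ket{\psi_1}=\ket{\psi_1}$.

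Combining the two identities and using $E=E^*$, I compute
\[ \braket{\psi_0}{\psi_1} = \braket{\psi_0}{E\psi_1} = \braket{E\psi_0}{\psi_1} = 0, \]
where the first equality uses $E\ket{\psi_1}=\ket{\psi_1}$, the second uses self-adjointness, and the third uses $E\ket{\psi_0}=0$. This, together with the first display above, closes the proof.

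I do not anticipate any real obstacle: the argument rests only on the positivity of $E$ and of $\id-E$, together with the self-adjointness of $E$. The only delicate point is that $\tr\psi_0 E=0$ should genuinely yield the operator equation $E\ket{\psi_0}=0$, and not merely the scalar equation $\exval{\psi_0}{E}{\psi_0}=0$; the square-root trick above handles this (a spectral decomposition of $E$ would work equally well, and would moreover cover the case in which $\psi_0$ and $\psi_1$ are allowed to be mixed).
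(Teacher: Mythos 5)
Your proof is correct and follows essentially the same route as the paper's: both convert $\tr\psi_0 E=0$ and $\tr\psi_1(\id-E)=0$ into the statements that $\psi_0$ lies in the kernel of $E$ while $\psi_1$ lies in its support, and then conclude orthogonality. The only difference is that you make explicit, via the square-root trick, the step the paper merely asserts (that a vanishing trace against a positive operator forces disjoint supports), which is a welcome bit of added rigour.
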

	\begin{proof}
	$\tr\psi_1 (\id -E) = 0\quad\Rightarrow\quad\tr\psi_1 E = 1$, so the support of $\psi_1$ is contained in the support of $E$. But $\tr \psi_0 E=0$ implies that the supports of $\psi_0$ and $E$ are disjoint, and therefore the supports of $\psi_0$ and $\psi_1$ are disjoint, so $\psi_0\psi_1=0$
	\end{proof}
	Instead, the two theorems we shall present consider larger families: the first considers families of three states to show that there are non-trivial examples, and the second argues that the existence of some specific families implies that any $\psi$-epistemic model must be very unnatural.

	\begin{theorem}[Caves, Fuchs, Shack \cite{caves02}]
	If the convex hull of a family of states $\psi_i$ contains $\id/d$, where $d$ is the Hilbert space dimension, then there can be no $\lambda_0$ in the common support of all $\mu_{\psi_i}$.
	\end{theorem}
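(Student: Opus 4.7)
The plan is to reduce the statement to Lemma \ref{lem:protopbr}: if for each $\psi_i$ I can exhibit an effect $E_i$ with $\tr\psi_i E_i = 0$, while the family $\{E_i\}$ actually forms a POVM (i.e.\ $\sum_i E_i = \id$ with each $E_i \ge 0$), then the lemma immediately forbids any $\lambda_0$ from lying in the common support of all $\mu_{\psi_i}$. So the entire task is to build such a POVM out of the hypothesis $\id/d \in \conv\{\psi_i\}$.

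For a pure state $\psi_i$ the most natural effect annihilated by $\psi_i$ is proportional to the complementary projector $\id - \psi_i$. I would therefore try the ansatz $E_i = c_i\,(\id - \psi_i)$ for coefficients $c_i \ge 0$ to be determined. This ansatz automatically gives $E_i \ge 0$ and $\tr\psi_i E_i = 0$, so the only non-trivial condition left is normalization $\sum_i E_i = \id$. Writing the hypothesis as $\sum_i p_i \psi_i = \id/d$ with $p_i \ge 0$ and $\sum_i p_i = 1$, I would set $c_i = \tfrac{d}{d-1}\,p_i$; a one-line computation then gives
\[
\sum_i E_i \;=\; \frac{d}{d-1}\left(\sum_i p_i\,\id \;-\; \sum_i p_i \psi_i\right) \;=\; \frac{d}{d-1}\left(\id - \frac{\id}{d}\right) \;=\; \id,
\]
so $\{E_i\}$ is a bona fide POVM. (Individual bounds $E_i \le \id$ follow automatically from $E_i \ge 0$ and $\sum_j E_j = \id$.) Applying Lemma \ref{lem:protopbr} to this POVM and to the $\psi_i$'s closes the argument.

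The main obstacle is really conceptual rather than technical: one has to recognize that the \emph{maximally mixed} state $\id/d$ is precisely what is needed to turn the local orthogonality relations $\tr\psi_i(\id-\psi_i) = 0$ into a \emph{global} resolution of the identity, which is what Lemma \ref{lem:protopbr} requires. A secondary issue to flag is that the construction uses $(\id-\psi_i)$ as a projector orthogonal to $\psi_i$, which is only valid when the $\psi_i$ are pure (rank-one) states; for genuinely mixed $\psi_i$ the ansatz does not produce an effect with $\tr\psi_i E_i = 0$, so the theorem is naturally read as a statement about families of pure states.
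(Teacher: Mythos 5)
Your proof is correct and follows essentially the same route as the paper: the same ansatz $E_i \propto \id - \psi_i$, the same normalization forced by $\id/d \in \conv\{\psi_i\}$, and the same reduction to Lemma \ref{lem:protopbr}; you merely run the computation forward from the hypothesis where the paper works backward from the POVM condition. Your closing remark that the argument requires the $\psi_i$ to be pure is accurate and consistent with the paper's (implicit) convention that $\psi$ denotes a rank-one projector.
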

	\begin{proof}
	For any state $\psi_i$, it is true that $\tr \psi_i (\id-\psi_i) = 0$. If we can find coefficients $\alpha_i$ such that $\DE{\alpha_i(\id-\psi_i)}$ is a POVM, then lemma \ref{lem:protopbr} applies and we're done. What we need is \[\sum_i \alpha_i(\id-\psi_i) = \id,\] for $\alpha_i \ge 0$. Taking the trace on both sides we get that $\sum_i \alpha_i = \frac{d}{d-1}$. Simple algebra then shows us that \[\sum_i \frac{d-1}{d}\alpha_i\psi_i = \frac{1}{d}\id.\]
	\end{proof}

	This theorem was first proven in \cite{caves02}, with a different objective. While it does not exclude $\psi$-epistemic models, it shows there are a wide variety of families of states that can't have an overlap. If the number of states is three, there are already examples in any dimension where they are not orthogonal; see equations \eqref{eq:spekkensstates} for an example.

	The next theorem needs the following (very natural, in the author's opinion) assumption about the composition of different systems:
	\begin{assumption}\label{ass:pbr}
	If two quantum states $\phi$ and $\psi$ are prepared independently, such that their joint state is $\phi\otimes \psi$, then the corresponding ontic state for the joint system is $\mu_{\phi\otimes\psi}(\lambda_A,\lambda_B) = \mu_\phi(\lambda_A)\mu_\psi(\lambda_B)$.
	\end{assumption}
	
	\begin{theorem}[Pusey, Barret, Rudolph \cite{pusey11}]\label{teo:pbr}
	Given assumption \ref{ass:pbr}, no $\psi$-epistemic ontological model of quantum mechanics is possible.
	\end{theorem}
	\begin{proof}
	Consider the four quantum states $\phi_0 \otimes \phi_0$, $\phi_0 \otimes \phi_1$, $\phi_1 \otimes \phi_0$, and $\phi_1 \otimes \phi_1$. If there is a $\lambda_0$ in the support of $\mu_{\phi_0}$ and $\mu_{\phi_1}$, then $(\lambda_0,\lambda_0)$ is in the support of all four $\mu_{\phi_i}(\lambda')\mu_{\phi_j}(\lambda'')$. If there is a POVM $\DE{E_{ij}}$ such that $\tr \phi_i\otimes \phi_j E_{ij} = 0$, then lemma \ref{lem:protopbr} applies and we're done.

	Consider now the particular case $\ket{\phi_0} = \ket{0}$ and $\ket{\phi_1} = \ket{+}$. Then if $E_{ij}$ is the projector onto $\ket{E_{ij}} = \ketpequeno{\phi_i\phi_j^\perp}+\ket{\phi_i^\perp\phi_j}$, it is easy to see that \[\tr \phi_i\otimes \phi_j E_{ij} = \braket{\phi_i\phi_j}{E_{ij}} = 0,\]
	and it is also easy (but tedious) to check that $\sum_{ij} E_{ij} = \id$. Unfortunately, this simple strategy only works for this pair of states, and states with smaller overlap require measurements on a larger number of parts. For the proof of the general case, see the original article\footnote{This proof uses the notation from \cite{leifer11}, which is clearer than the one in the original article.} \cite{pusey11}.
	\end{proof}

	This theorem has two immediate corollaries:
	\begin{corollary}
	Any ontological model of quantum mechanics must violate causality.
	\end{corollary}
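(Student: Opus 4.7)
The plan is to chain the PBR theorem with the EPR-style argument already sketched at the end of section~\ref{sec:psiontic}. By Theorem~\ref{teo:pbr}, under assumption~\ref{ass:pbr}, any ontological model reproducing quantum statistics must be $\psi$-ontic, i.e.\ distinct pure states $\phi\neq\psi$ have $\mu_\phi(\lambda)\mu_\psi(\lambda)=0$ for all $\lambda$. The corollary then follows if we can show that $\psi$-onticity is incompatible with a local, causality-respecting story.

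For this I would recycle Einstein's \textsc{epr} setup: let Alice and Bob share the maximally entangled state $\ket{\phi_+}=\ket{00}+\ket{11}$ (unnormalised), spacelike-separated. Alice is free to measure either $\DE{\ketbra{0}{0},\ketbra{1}{1}}$ or $\DE{\ketbra{+}{+},\ketbra{-}{-}}$. Depending on her choice and outcome, the quantum state assigned to Bob's subsystem is one of $\ket{0},\ket{1},\ket{+},\ket{-}$. In a $\psi$-ontic model each of these four quantum states has its own disjoint support in Bob's ontic space $\Lambda_B$; hence, to be consistent with the post-measurement quantum predictions for Bob, the ontic state $\lambda_B$ must lie in the support of exactly one of $\mu_{\ket{0}},\mu_{\ket{1}},\mu_{\ket{+}},\mu_{\ket{-}}$, and which one depends on both Alice's freely chosen setting and her outcome. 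Since Alice's choice is not fixed in advance, Bob's ontic state must be updated the instant Alice measures, which is precisely a violation of causality in the ontological formalism.

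The only step that requires care is ruling out the escape route ``Bob's ontic state was already compatible with all four possibilities, and nothing needs to change''. This is where $\psi$-onticity is essential: the four quantum states $\ket{0},\ket{1},\ket{+},\ket{-}$ are pairwise non-orthogonal, yet by Theorem~\ref{teo:pbr} their epistemic distributions on $\Lambda_B$ have pairwise disjoint support, so no single $\lambda_B$ is consistent with more than one of the four post-measurement assignments. Therefore $\lambda_B$ genuinely changes as a function of Alice's distant choice, completing the argument.

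The main obstacle, conceptually, is making sure one is not merely saying ``the epistemic state of Bob changes'', which would be unobjectionable; one must argue that the \emph{ontic} state changes, and this is precisely what $\psi$-onticity forces. Once this is granted, the rest is just a rephrasing of the standard \textsc{epr} observation.
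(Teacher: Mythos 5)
Your proposal is correct and follows essentially the same route as the paper: invoke Theorem~\ref{teo:pbr} to rule out $\psi$-epistemic models, then apply the \textsc{epr} argument from the start of section~\ref{sec:psiontic} to show that $\psi$-ontic models require instantaneous updating of Bob's ontic state. Your explicit observation that the pairwise-disjoint supports of $\mu_{\ket{0}},\mu_{\ket{1}},\mu_{\ket{+}},\mu_{\ket{-}}$ block the ``nothing needs to change'' escape is exactly the point the paper makes (there phrased as the ontic state failing to determine \emph{which} measurement Alice made), just spelled out in more detail.
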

	One only has to notice that since the theorem excludes $\psi$-epistemic models, we're left with $\psi$-ontic ones. And we have shown that those violate causality in the beginning of this section.
	\begin{corollary}
	The ontic state space $\Lambda$ is uncountable.
	\end{corollary}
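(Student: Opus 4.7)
The plan is to chain two ingredients: the PBR theorem (Theorem \ref{teo:pbr}) just proved, which forces any ontological model of quantum theory to be $\psi$-ontic, and the well-known fact that the set of pure states $\pro\Hi$ is already uncountable in any nontrivial Hilbert space (for a qubit, $\pro\Hi \cong S^2$, the Bloch sphere).

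First I would invoke Theorem \ref{teo:pbr} to conclude that the model is $\psi$-ontic, so that by definition for any two distinct pure states $\phi \neq \psi$ the distributions $\mu_\phi$ and $\mu_\psi$ have disjoint supports in $\Lambda$. Next, since each $\mu_\psi$ is a genuine probability distribution, its support is nonempty, so I can pick some $\lambda_\psi \in \mathrm{supp}(\mu_\psi)$ for every pure state $\psi$. Disjointness of supports makes the assignment $\psi \mapsto \lambda_\psi$ injective, yielding an injection $\pro\Hi \hookrightarrow \Lambda$. Since $\pro\Hi$ is uncountable, so is $\Lambda$.

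There is no real obstacle: the only mildly delicate point is the use of the axiom of choice in picking one $\lambda_\psi$ per pure state, but this is standard and unavoidable only because we want a set-theoretic injection; alternatively, one can simply note that the family $\{\mathrm{supp}(\mu_\psi)\}_{\psi \in \pro\Hi}$ is an uncountable collection of pairwise disjoint nonempty subsets of $\Lambda$, which already forces $\Lambda$ to be uncountable. Either formulation fits in two or three lines and should be presented right after the previous corollary without further fanfare.
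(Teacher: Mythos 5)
Your argument is correct and is exactly the paper's: the paper also invokes Theorem \ref{teo:pbr} to rule out $\psi$-epistemic models and then observes that the disjoint supports of the $\mu_\psi$ give an injection of the uncountable set $\pro(\Hi)$ into $\Lambda$. You merely spell out the choice of a representative $\lambda_\psi$ per support, which the paper leaves implicit.
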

	In a $\psi$-ontic model there is an injection of $\pro(\Hi)$ onto $\Lambda$. Since $\pro(\Hi)$ is uncountable, $\Lambda$ must be uncountable. In fact, even if without assumption \ref{ass:pbr} we can still prove that $\Lambda$ is infinite; we shall do this in section \ref{sec:excessbaggage}.

	The obvious question that this theorem raises is: can we do away with assumption \ref{ass:pbr} and prove once and for all that $\psi$-epistemic models are always impossible? The existence of the Kochen-Specker model already hints that at least some weaker assumption is needed, since it is a \textit{bona fide} $\psi$-epistemic model. Of course, its existence does not contradict the theorem, since it only forbids models for dimension 4 or greater. In fact, soon after the Pusey-Barret-Rudolph was published, some of the same authors showed that without assumption \ref{ass:pbr} they could make a $\psi$-epistemic model for a quantum system of any dimension. We shall describe this model in section \ref{sec:ljbr}.

	This theorem already hints of a theme that shall be recurrent in the search for ontological models: we can in fact make ontological models for quantum theory, and in fact we can make them almost in any way that we like, but there's a price to pay: the various aspects of the model become more and more intertwined. We can't really talk of independent quantum systems, separation between state and experiment, nor even (as we shall see in the next section) talk about a measurement outcome without talking about the whole experiment. Of course, this bodes very badly for the idea of ontological models: in the extreme limit of this interdependence our ontological model only lists possible experiments and their results, without ever trying to make sense of them in a simpler and more general theory. A model like this wouldn't be falsifiable by its very nature, but precisely because of this it is a perversion of the scientific method \cite{popper34}, and should therefore be rejected on methodological grounds.

	What we seek, therefore, is not \emph{any} ontological model, but one that might have some plausibleness. The ontological models present hitherto are of course very contrived, but by themselves they should not be taken as an evidence against the possibility of a reasonable ontological model, since they were conceived only as proofs of principle, without any inspiration from physical grounds.

\section{Contextuality}\label{sec:spekkenscontextuality}

	One should contrast the state of research into contextuality to the state of research into nonlocality. It is quite clear that nonlocality has a better status: it was subjected to experimental tests much earlier\footnote{1972 \cite{freedman72}, in contrast with 2000 \cite{michler00}.}, and also had its potential as a resource for practical applications recognized much earlier\footnote{1991 \cite{ekert91}, \textit{versus} 2000 \cite{bechmann-pasquinucci00}.}

	This state of affairs has many causes, which certainly includes the intuitive appeal of nonlocality via its relation with relativity, but I'd like to focus in a more formal one: the definitions of nonlocality and contextuality. Right in the first paper about nonlocality, John Bell \cite{bell64} already gave a clear operational definition of nonlocality, that was not dependent on quantum theory, but instead only on a general probabilistic framework. By contrast, the first definition of contextuality, also due to John Bell\footnote{The concept appeared first in 1966 \cite{bell66}, in a critique of the Gleason theorem, whereas the name ``contextuality'' was created in 1978 \cite{clauser78}, by Clauser and Shimony.}, was very specific to quantum theory, and was not at all operational:

	\begin{definition}[Bell's contextuality]\label{def:bellcontext}
	We say that an ontological model for quantum theory is noncontextual if the response function associated to the outcome $k$ of a PVM $M = \DE{\Pi_k}$, \ie, $\xi_{k|M}(\lambda)$ depends only on $\Pi_k$ and not on the whole $M$.
	\end{definition}

	This definition also lacks conceptual clarity: John Bell even thought that it was reasonable for a physical theory to be contextual \cite{bell66}: 
	\begin{quote}
	The result of an observation may reasonably depend not only on the state of the system (including hidden variable) but also on the complete disposition of the apparatus.
	\end{quote}

	But one consequence of contextuality is precisely the violation of causality that he abhorred: consider, for instance, the PVM \[M = \{{\Pi_0\otimes\id},{\Pi_1\otimes\id},{\id\otimes\Pi_0},\\{\id\otimes\Pi_1}\}.\] If the real result $\xi_{0|M}(\lambda)$, associated with the projector ${\Pi_0\otimes \id}$, depends on whether the other side of the PVM is $\id\otimes\Pi_0,\id\otimes\Pi_1$ or ${\id\otimes\Pi'_0},{\id\otimes\Pi'_1}$, then the apparatuses must always be able to communicate their arrangement to each other, even when the choice of arrangement is made with a space-like separation, which is of course absurd. This settles the question about ontological models of independent quantum systems. But what about single systems? Is there any unacceptable consequence of contextuality for them?

	Yes! It also implies on a violation of causality. As put by Asher Peres and Amiran Ron \cite{peres88}:
	\begin{quote}
	More generally, if $\comm{A}{B} = \comm{A}{C} = 0$ but $\comm{B}{C} \neq 0$, suppose that we measure $A$ first and only a later time decide whether to measure $B$ or $C$ or none of them. How can the outcome of the measurement $A$ depend on this future decision?
	\end{quote}

	Furthermore, this whole story about communicating apparatuses is quite queer, even when it is not a violation of causality. After all, all the evidence we have is that the measurement of commuting observables does not affect each other, and an ontological theory that requires this kind of communication would be very weird indeed. Another problem is that this communication could affect only the individual measurements $\xi_{i|M}(\lambda)$, and must never be detectable in the quantum experiments we do. To postulate this kind of ``cryptocontextuality''\footnote{With apologies to Asher Peres.} seems very unscientific: we would be making a theory which is about precisely what we \emph{can't} measure.

	Another way to think about the weirdness of a contextual model is operationally: imagine that you are an experimentalist that has implemented an apparatus that can differentiate between the ground state and the excited states of a many-level atom. You try it hard, repeat your experiment a lot of times, with different input states, gather the statistics, and is confident that your apparatus is quite trustworthy; you now want to teach a friend experimentalist how to build a similar apparatus. Quite simple, isn't it? You just tell him how you did, ask him to gather statistics, and compare with yours: if the statistics match, you've implemented the same experiment. Except it isn't so if your physical theory is contextual: the statistics of the projector $\Pi_0$ (the projector onto the ground state) are not enough to determine the results of the experiment, since according to definition \ref{def:bellcontext} the real results $\xi_{0|\Pi}(\lambda)$ depend on the rest of the (unmeasured) projectors; and these are not only the higher energy levels of the atom, but can in principle include any environmental data, such as the apparatus' mass, the local weather, whether Virgo is ascendant\ldots

	In this way, we are rendered incapable of comparing experiments and establishing patterns, the very foundation of our scientific method. Notice the strong parallel between this discussion and the definitions of state and observable in the $C*$-algebraic axiomatization done by Franco Strocchi \cite{strocchi12}. This motivates a new definition of contextuality, due to Spekkens \cite{spekkens05}, that takes into account these arguments:

	\begin{quote}
	A noncontextual ontological model of an operational theory is one wherein if two experimental procedures are operationally equivalent, then they have equivalent representations in the ontological model.
	\end{quote}

	Within this reasoning, it becomes sufficient to have equivalent statistics to be able to identify different experiments, and we are able again to do science. But a definition that uses only words is quite imprecise, and we should codify it in order to avoid misinterpretations:

	\begin{definition}[Spekkens' contextuality]\label{def:spekkenscontext}
	Let $p(k|P,M)$ be the probability of obtaining the outcome $k$ when doing the measurement $M$ on a state prepared via procedure $P$. Then we say that an ontological model of an operational theory is measurement noncontextual if 
	\begin{equation} p(k|P,M) = p(k|P,M') \quad\forall P\quad\Rightarrow\quad M=M'.\end{equation}
	Analogously, we say that an ontological model of an operational theory is preparation noncontextual if
	\begin{equation} p(k|P,M) = p(k|P',M) \quad\forall M\quad\Rightarrow\quad P=P'. \end{equation}
	\end{definition}
	
	The central idea is simple: if measurements $M$ and $M'$ give the same statistics for every preparation procedure $P$, then we must say that they are in fact the same measurement, with equivalent mathematical representation, and if preparation procedures $P$ and $P'$ give the same statistics for every measurement $M$, then we must say that they are in fact the same preparation procedure, with equivalent mathematical representation.

	Note that this definition improves on Bell's definition by removing any explicit reference to quantum theory, talking about only an ``operational theory'', \ie, a theory in which we can talk about preparation procedures, measurements, and probabilities. However, this is still not the definition we're looking for. We want to be able to say whether a given probability distribution is contextual or not, as we do with the definition of nonlocality. This we shall do in the next chapter; for this one, this definition is good enough.

	We want to specialize this definition to ontological models of quantum theory, as a matter of convenience, since that's all we'll be talking about. Note that in quantum theory $p(k|P,M) = \tr\rho M_k$ is completely defined by the measurement operator $M_k$ and the quantum state $\rho$, so that's all our ontological model can take into account. More precisely

	\begin{definition}\label{def:quantumcontext}
	We say that an ontological model of quantum theory is measurement noncontextual if
	\[\xi_{k|M}(\lambda) = \xi_{M_k}(\lambda), \]
	that is, if the response function associated to the outcome $k$ of a measurement $M$ depends only on the measurement operator $M_k$. Analogously, we say that an ontological model of quantum theory is preparation noncontextual if
	\[\mu_{P}(\lambda) = \mu_\rho(\lambda),\]
	that is, if the ontic state associated to the preparation procedure $P$ depends only on the quantum state $\rho$ that is prepared.
	\end{definition}
	What else could the ontic state $\mu_{P}(\lambda)$ possibly depend on? Well, in the ontological models we discussed in sections \ref{sec:naive} and \ref{sec:bellmeu} it depended on the ``true'' basis of $\rho$, making these states preparation contextual. It could also depend on the ``true'' purification of $\rho$, or really anything that one might deem plausible or implausible. What about measurements? Well, the most famous sort of context is that of Bell's definition of contextuality: the whole PVM $M$, as do the ontological models discussed on section \ref{sec:ljbr}, but it could also be anything, such as the colour of the measurement apparatus, the latitude and longitude of the laboratory where the experiment is performed, etc.

	One final remark: if quantum theory were an ontological model of itself then definition \ref{def:spekkenscontext} (and \ref{def:quantumcontext}) would imply that it is \emph{not} contextual, since it is trivial to prove that
	\[\tr\rho M_k = \tr\rho M_k'\quad\forall\rho\quad\Rightarrow M_k=M_k'\]
	and
	\[\tr\rho M_k = \tr\sigma M_k\quad\forall M_k\quad\Rightarrow \rho=\sigma.\]
	Since it is not, the oft-heard claim that ``quantum mechanics is contextual'' is just meaningless. What one probably means with it is that any ontological model of quantum theory must be contextual, repeating a situation that happen in the area of nonlocality: quantum mechanics is obviously a local theory, in the relativistic sense, but any ontological model of quantum theory must be nonlocal, leading to the meaningless sentence ``quantum mechanics is nonlocal''.

\section{Contextuality for preparation procedures}\label{sec:preparationcontextuality}


	In this section we shall show that it is not possible to construct a preparation noncontextual	ontological model of quantum theory \cite{spekkens05}. This is not the conflict with quantum theory usually discussed, but we feel that it is appropriate to begin with it for three reasons:

	\begin{enumerate}
		\item It is independent of assumptions on determinism
		\item It is simple
		\item It is novel
	\end{enumerate}

	To begin, we'll need to prove a simple lemma about how orthogonal states are represented in the ontic space $\Lambda$. We'll see that the possibility of distinguishing orthogonal states with certainty by a single-shot measurement implies that their representations in the ontic space must have disjoint support.

	\begin{lemma}\label{lem:orthogonality}
	If two quantum states $\rho$ and $\sigma$ are orthogonal then the corresponding ontic states $\mu_\rho$ and $\mu_\sigma$ have disjoint support:
	\[\rho\sigma = 0\quad \Rightarrow\quad \mu_\rho(\lambda)\mu_\sigma(\lambda) = 0\quad\forall\lambda\]
	\end{lemma}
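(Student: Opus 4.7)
The plan is to exhibit a single two-outcome measurement that perfectly distinguishes $\rho$ from $\sigma$, and then push the perfect distinguishability down to the ontological level via equation \eqref{eq:recoverqmrandom}.

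First I would observe that the hypothesis $\rho\sigma = 0$ means the supports of $\rho$ and $\sigma$ (as operators on $\Hi$) are orthogonal subspaces. Let $\Pi$ be the projector onto the support of $\rho$. Then by construction $\tr\rho\Pi = 1$ and $\tr\sigma\Pi = 0$, so the two-outcome PVM $\{\Pi,\id-\Pi\}$ satisfies $\tr\rho(\id-\Pi) = 0$ and $\tr\sigma\Pi = 0$ simultaneously.

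Next I would translate these two vanishing Born probabilities into ontological statements. Applying equation \eqref{eq:recoverqmrandom} to each,
\[ \int_\Lambda \dint\lambda\,\mu_\rho(\lambda)\,\xi_{\id-\Pi}(\lambda) = 0 \mathand \int_\Lambda \dint\lambda\,\mu_\sigma(\lambda)\,\xi_{\Pi}(\lambda) = 0. \]
Since $\mu_\rho,\mu_\sigma \geq 0$ and $\xi_{\Pi},\xi_{\id-\Pi}\in[0,1]$, both integrands are nonnegative, so they vanish almost everywhere. Thus $\xi_{\id-\Pi}(\lambda)=0$ for every $\lambda$ in the support of $\mu_\rho$, and $\xi_{\Pi}(\lambda)=0$ for every $\lambda$ in the support of $\mu_\sigma$. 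Combined with the normalization $\xi_\Pi(\lambda)+\xi_{\id-\Pi}(\lambda)=1$, this forces $\xi_\Pi=1$ on $\mathrm{supp}\,\mu_\rho$ and $\xi_\Pi=0$ on $\mathrm{supp}\,\mu_\sigma$. Any $\lambda$ in both supports would have to satisfy $\xi_\Pi(\lambda)=1$ and $\xi_\Pi(\lambda)=0$ at once, a contradiction; hence $\mu_\rho(\lambda)\mu_\sigma(\lambda)=0$ for all $\lambda$.

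The only step that requires any care is the passage from ``the integral vanishes'' to ``$\xi=0$ on the support'', which is just the standard fact that a nonnegative integrand with zero integral must vanish $\mu$-almost everywhere; there is no real obstacle here since the statement of the lemma is naturally understood up to measure-zero sets in $\Lambda$. I would not expect a more sophisticated argument to be needed: the perfect distinguishability of orthogonal mixed states by a single PVM already does all the work.
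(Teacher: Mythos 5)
Your proof is correct and follows essentially the same route as the paper's: both construct the PVM $\{\Pi,\id-\Pi\}$ with $\Pi$ the projector onto the support of $\rho$, translate the Born probabilities $1$ and $0$ into ontological integrals, and conclude that $\xi_\Pi$ must equal $1$ on the support of $\mu_\rho$ and $0$ on the support of $\mu_\sigma$. The only cosmetic difference is that you argue via the vanishing integral of the complementary effect plus normalization, whereas the paper reads off $\xi_\Pi=1$ directly from $\int_\Lambda\mu_\rho\xi_\Pi=1$; these are interchangeable.
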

	\begin{proof}
	If $\rho$ and $\sigma$ are orthogonal, then they can be distinguished with certainty in a single-shot measurement. To construct one such measurement, note that the supports of $\rho$ and $\sigma$ must be orthogonal, and let $\Pi_\rho$ be the projector onto the support of $\rho$. Then
	\[\tr\rho\Pi_\rho=1\mathand\tr\sigma\Pi_\rho=0.\]
	Writing these measurements ontologically, we have
	\[\int_\Lambda\mu_\rho\xi_{\Pi_\rho}=1\mathand\int_\Lambda\mu_\sigma\xi_{\Pi_\rho}=0,\]
	so $\xi_{\Pi_\rho}(\lambda)=1$ for all $\lambda$ in the support of $\mu_\rho$, and $\xi_{\Pi_\rho}(\lambda)=0$ for all $\lambda$ in the support of $\mu_\sigma$, so the supports of $\mu_\rho$ and $\mu_\sigma$ are disjoint, and $\mu_\rho(\lambda)\mu_\sigma(\lambda)=0$ for all $\lambda$.
	\end{proof}

	We will also need the assumption that is violated by all the ontological models discussed so far:
	\begin{assumption}[Preparation noncontextuality]\label{ass:preparation}
	\[\sum_i p_i \psi_i = \sum_i q_i \phi_i\quad\Rightarrow\quad\mu_{(p_i,\psi_i)}(\lambda) = \mu_{(q_i,\phi_i)}(\lambda)\]
	\end{assumption}
	

	With the groundwork laid, we can now state the theorem and prove it.
	
	\begin{theorem}[Spekkens \cite{spekkens05}]\label{teo:spekkens}
	It is not possible to embed quantum theory into a preparation noncontextual ontological theory.
	\end{theorem}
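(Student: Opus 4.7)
The plan is to derive a contradiction by exhibiting a minimal configuration of qubit states whose multiple convex decompositions into orthogonal pairs are incompatible with preparation noncontextuality. I fix three pure qubit states $\psi_1,\psi_2,\psi_3$ whose Bloch vectors form an equilateral triangle in a great circle (at $120^\circ$ to each other), together with their orthogonal complements $\psi_i^\perp$. Since the three trine Bloch vectors sum to zero, a direct computation produces two genuinely different convex decompositions of the same density matrix:
\[ \tfrac{1}{2}\bigl(\psi_i+\psi_i^\perp\bigr)=\tfrac{\id}{2}\text{ for each }i=1,2,3,\quad\text{and}\quad \tfrac{1}{3}\bigl(\psi_1+\psi_2+\psi_3\bigr)=\tfrac{\id}{2}. \]

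By Assumption~\ref{ass:preparation}, $\mu_{\id/2}$ is unique and equals each of the corresponding convex combinations of $\mu_{\psi_i}$ and $\mu_{\psi_i^\perp}$. Lemma~\ref{lem:orthogonality} then forces the supports of $\mu_{\psi_i}$ and $\mu_{\psi_i^\perp}$ to be disjoint for each $i$; combined with the pair decomposition, this upgrades the integral equality to a pointwise identity: $\mu_{\psi_i}(\lambda)=2\mu_{\id/2}(\lambda)$ when $\lambda$ lies in the support of $\mu_{\psi_i}$, and $\mu_{\psi_i}(\lambda)=0$ otherwise. Letting $S(\lambda):=\{i:\lambda\in\mathrm{supp}(\mu_{\psi_i})\}$ and substituting into the trine identity $\mu_{\id/2}=\tfrac{1}{3}\sum_{i=1}^3 \mu_{\psi_i}$ gives
\[ \mu_{\id/2}(\lambda)=\tfrac{2}{3}\,\mu_{\id/2}(\lambda)\,|S(\lambda)|, \]
so for every $\lambda$ in the support of $\mu_{\id/2}$ one needs $|S(\lambda)|=3/2$, which is absurd since $|S(\lambda)|$ is the cardinality of a subset of $\{1,2,3\}$.

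The main obstacle, rather than the algebra, is the step that upgrades the integral identity $\mu_{\id/2}=\tfrac{1}{2}(\mu_{\psi_i}+\mu_{\psi_i^\perp})$ to a pointwise statement: this relies crucially on Lemma~\ref{lem:orthogonality} providing \emph{strictly} disjoint supports, so that on the support of $\mu_{\psi_i}$ the other term vanishes and the equation can be solved for $\mu_{\psi_i}$. If only almost-everywhere disjointness were available, the argument still works after replacing the pointwise cardinality by the statement that $|S(\lambda)|=3/2$ on a set of full $\mu_{\id/2}$-measure, which is equally impossible. Notably, no measurement needs to be invoked: the contradiction lives entirely at the level of how the ontic distribution of $\id/2$ can be built from pure-state ontic distributions.
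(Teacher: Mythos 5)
Your proposal is correct and is essentially the paper's own proof: the same trine of qubit states and their antipodes, the same two convex decompositions of $\id/2$, and the same appeal to Lemma~\ref{lem:mixedontic}, Lemma~\ref{lem:orthogonality}, and Assumption~\ref{ass:preparation}. The only difference is cosmetic — where the paper finishes with an explicit eight-case analysis of which member of each orthogonal pair vanishes at a given $\lambda$, you package the same endgame as the counting identity $\abs{S(\lambda)}=3/2$, which is a slightly tidier (and equally valid) way to reach the conclusion that $\mu_{\id/2}$ must vanish everywhere.
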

	\begin{proof}
	Let $\phi$, $\Phi$, $\chi$, $\mathrm{X}$, $\psi$, and $\Psi$ be quantum states such that
	\begin{subequations}\label{eq:spekkens}
	\begin{align}
	\label{eq:spekkens1} 0 &= \phi\Phi = \chi\mathrm{X} = \psi\Psi \\
	\label{eq:spekkens2} \vphantom{\frac{3}{2}}\id &= \phi + \Phi = \chi + \mathrm{X} = \psi + \Psi \\
	\label{eq:spekkens3} \frac{3}{2}\id &= \phi + \chi + \psi = \Phi + \mathrm{X} + \Psi.
	\end{align}
	\end{subequations}
	That such a family of states exists can be proven by exhibiting an example in dimension 2, that can be easily embedded in higher dimensions:
	\begin{subequations}\label{eq:spekkensstates}
	\begin{align}
	\ket{\phi} &= \ket{0} & \ket{\Phi} &= \ket{1}\\
	\ket{\chi} &= \frac{1}{2}\ket{0} + \frac{\sqrt{3}}{2}\ket{1} & \ket{\mathrm{X}} &= \frac{\sqrt{3}}{2}\ket{0}-\frac{1}{2}\ket{1}\\
	\ket{\psi} &= \frac{1}{2}\ket{0} - \frac{\sqrt{3}}{2}\ket{1} & \ket{\Psi} &= \frac{\sqrt{3}}{2}\ket{0}+\frac{1}{2}\ket{1}
	\end{align}
	\end{subequations}
	A nice way to visualize the orthogonality and completeness relations \eqref{eq:spekkens} is to represent states \eqref{eq:spekkensstates} in the $\sigma_x,\sigma_z$ plane of the Bloch sphere, as done in figure \ref{fig:spekkensbloch}.

	\begin{figure}[ht]
	\centering
	\includegraphics[width=0.50\textwidth]{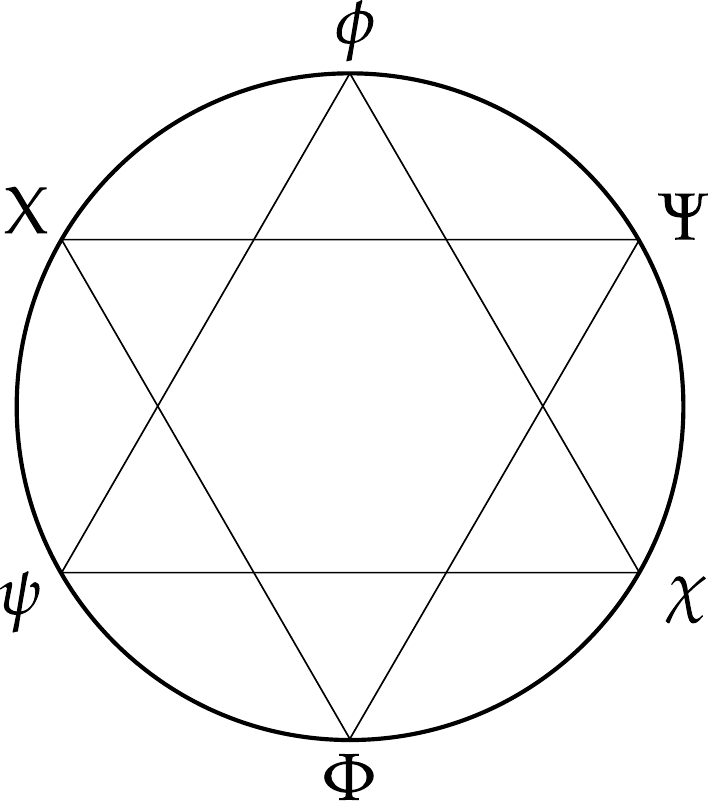}
	\caption{Representation of states \eqref{eq:spekkensstates} in the $\sigma_x,\sigma_z$ plane of the Bloch sphere. The barycenter of antipodal states or states which are connected by a triangle is $\id/2$.}
    \label{fig:spekkensbloch}
	\end{figure}

	Now we shall use lemmas \ref{lem:mixedontic} and \ref{lem:orthogonality} together with assumption \ref{ass:preparation} and relations \eqref{eq:spekkens} to derive a contradiction. 
Lemma \ref{lem:orthogonality} together with \eqref{eq:spekkens1} implies that
	\begin{equation}\label{eq:ortogonal}
	\mu_\phi(\lambda)\mu_\Phi(\lambda) = \mu_\chi(\lambda)\mu_\Chi(\lambda) =\mu_\psi(\lambda)\mu_\Psi(\lambda) = 0\quad\forall\lambda
	\end{equation} 
	Lemma \ref{lem:mixedontic}, together with assumption \ref{ass:preparation} and relations \eqref{eq:spekkens2}, implies that
	\begin{subequations}\label{eq:soma2}	
	\begin{align}
	\label{eq:somaphi}\mu_{\frac{1}{2}\id} &= \frac{1}{2}\de{\mu_\phi+\mu_\Phi}\\
			     &= \frac{1}{2}\de{\mu_\chi+\mu_\Chi}\\
			     &= \frac{1}{2}\de{\mu_\psi+\mu_\Psi},
	\end{align}
	\end{subequations}
	and together with relations \eqref{eq:spekkens3}
	\begin{subequations}\label{eq:soma3}
	\begin{align}
	\label{eq:somaminuscula}\mu_{\frac{1}{2}\id} &= \frac{1}{3}\de{\mu_\phi+\mu_\chi+\mu_\psi}\\
			     &= \frac{1}{3}\de{\mu_\Phi+\mu_\Chi+\mu_\Psi}.
	\end{align}
	\end{subequations}
	
	We shall conclude the proof by showing that the only simultaneous solution to \eqref{eq:soma2}, \eqref{eq:soma3}, and \eqref{eq:ortogonal} is the all-zero solution 
	\[\mu_\phi(\lambda)=\mu_\Phi(\lambda) = \mu_\chi(\lambda)=\mu_\Chi(\lambda) =\mu_\psi(\lambda)=\mu_\Psi(\lambda) = 0\quad\forall\lambda,\]
	which is absurd, since probability distributions can't be zero everywhere.

	The disjointness relations \eqref{eq:ortogonal} imply that for each $\lambda$ at least one of $\mu_\phi$ and $\mu_\Phi$ must be zero, and the same for the other letters. Therefore there are 8 different cases to examine, although only two are essentially different. The first one is when $\mu_\phi$, $\mu_\chi$, and $\mu_\psi$ are zero. Then \eqref{eq:soma3} implies that $\mu_\Phi$, $\mu_\Chi$, and $\mu_\Psi$ must also be zero. The second case is when $\mu_\Phi$, $\mu_\chi$, and $\mu_\psi$ are zero. Then \eqref{eq:somaphi} implies that $\mu_{\frac{1}{2}\id} = \frac{1}{2}\mu_\phi$, and \eqref{eq:somaminuscula} implies that $\mu_{\frac{1}{2}\id} = \frac{1}{3}\mu_\phi$. But the only solution to $\frac{1}{2}\mu_\phi = \frac{1}{3}\mu_\phi$ is $\mu_\phi = 0$, and we can apply the previous argument to show that all probability distributions must be zero. The six remaining cases are simply relabellings of these two.

	As the above argument applies to every $\lambda$, we have that all probability distributions are zero for every $\lambda$, and thus are not probability distributions.
	\end{proof}

\section{Gleason theorems}\label{sec:gleason}

	There are three theorems that I call ``Gleason theorems'': von Neumann's theorem \cite{vonneumann32}, Gleason's theorem \cite{gleason57} and Busch's theorem \cite{busch03}. Of these three, the most famous is certainly Gleason's\footnote{The most infamous being von Neumann's. Busch's theorem is still new.}, and that is why I chose to name this section after it. All three theorems share a similar structure: they postulate some properties that a measurement $\mu$ should have, and then prove that the only measurement that satisfies those properties is the quantum mechanical one $\mu(A) = \tr \rho A$. They can be interpreted in two ways:
	\begin{enumerate}
		\item As an axiomatic improvement, by showing that the notion of quantum state and Born's rule follow from weaker axioms.
		\item As excluding deterministic ontological theories, by saying that properties of $\mu$ should be true in $\emph{any}$ theory, not only in quantum mechanics. Then one only has to notice that Born's rule is not deterministic.
	\end{enumerate}
	If one chooses the first interpretation, all three theorems are perfectly fine, and in fact quite similar. Problems arise, however, if one insists on interpreting them as excluding deterministic ontological theories. Then von Neumann's theorem becomes \emph{foolish}\footnote{The hasty reader might wonder why learn a foolish theorem. A quick answer would be to avoid repeating mistakes of the past \cite{alicki08,zukowski09}. For a longer answer, read the section.}
 \cite{mermin93}, as its assumptions already excludes a large class of ontological theories, without good reason.

	\subsection{von Neumann's theorem}
	\begin{theorem}[von Neumann \cite{vonneumann32}]
	Let $A,B$ be self-adjoint operators, and $\mu : \selfadjoint(\Hi) \to \R$ a function such that \begin{enumerate}
		\item $\mu(\alpha A) = \alpha \mu (A)$ for real $\alpha$.
		\item $\mu(A+B) = \mu(A)+\mu(B)$ for commuting $A, B$.
		\item $\mu(A+B) = \mu(A)+\mu(B)$ for non-commuting $A, B$.
		\item $\mu(\id) = 1$
		\item $\mu(A) \ge 0$ for positive A.
	\end{enumerate}
	Then any such function can be written as \[\mu(\Pi) = \tr \rho \Pi,\] where $\rho$ is a positive operator of unit trace.
	\end{theorem}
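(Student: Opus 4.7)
The plan is to recognize that the first three assumptions collapse into a single statement: $\mu$ is an $\R$-linear functional on the real vector space $\selfadjoint(\Hi)$. Item (1) gives homogeneity over the reals, and items (2) and (3) together cover additivity on \emph{all} pairs of self-adjoint operators, regardless of whether they commute. Once this observation is made, the theorem reduces to a Riesz-type representation together with a trivial check that the two remaining assumptions constrain the representing operator to be a density matrix.

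First I would equip $\selfadjoint(\Hi)$, which is finite-dimensional, with the Hilbert--Schmidt inner product $\prin{A}{B} = \tr(AB)$, and fix an orthonormal basis $\DE{E_i}$ for it (for example, the normalized identity together with the generalized Gell-Mann matrices). Since $\mu$ is $\R$-linear, it is uniquely determined by the real numbers $\mu(E_i)$, and the operator $\rho \coloneqq \sum_i \mu(E_i) E_i$ is manifestly self-adjoint. Orthonormality of the basis then yields $\tr(\rho A) = \mu(A)$ for every $A \in \selfadjoint(\Hi)$. This is the natural candidate for the state, and uniqueness of $\rho$ follows from the non-degeneracy of the Hilbert--Schmidt inner product.

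Next I would plug in the last two assumptions to pin down the nature of $\rho$. Assumption (4) gives $\tr\rho = \mu(\id) = 1$ directly. Assumption (5) says $\tr(\rho A) \ge 0$ for every positive $A$; specializing to rank-one projectors $A = \ketbra{\psi}{\psi}$ yields $\exval{\psi}{\rho}{\psi} \ge 0$ for every $\ket{\psi}$, which is precisely the statement $\rho \ge 0$.

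Technically nothing here is difficult: the whole argument is a one-line observation (linearity) followed by a standard identification of $\selfadjoint(\Hi)$ with its own dual through the trace pairing. The real ``obstacle'' is conceptual, and it lives entirely inside assumption (3): demanding additivity of $\mu$ across non-commuting operators is an enormously strong hypothesis that is essentially smuggling in the conclusion, since the measurement of $A+B$ for non-commuting $A,B$ is implemented by a physical procedure unrelated to the ones measuring $A$ and $B$ separately. This is exactly the feature that will make the theorem unsuitable as a no-go result against deterministic ontological models, as the preamble to the section already anticipates.
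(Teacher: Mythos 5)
Your proposal is correct and follows essentially the same route as the paper: assumptions 1--3 give linearity of $\mu$ on $\selfadjoint(\Hi)$, the Riesz representation with respect to the trace pairing yields $\mu(A)=\tr\rho A$, and assumptions 4 and 5 pin down unit trace and positivity (the latter via rank-one projectors). Your explicit construction of $\rho$ from an orthonormal basis is just an unpacking of the Riesz lemma the paper invokes, and your closing remark about assumption 3 mirrors the paper's own critique in the surrounding discussion.
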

	\begin{proof}
	Properties 1, 2, and 3 establish that $\mu$ is a linear functional on $\selfadjoint(\Hi)$, and by the Riesz lemma can be represented as an inner product $\mu(A) = \tr \rho A$. Property 4 then implies that $\rho$ has unity trace, as $\mu(\id) = \tr\rho\id = \tr\rho=1$, and property 5 implies its positivity, since in particular projectors are positive operators, and $\mu(\ketbra{\psi}{\psi}) = \tr\rho\ketbra{\psi}{\psi} = \exval{\psi}{\rho}{\psi} \ge 0 $ for all $\psi$ is the definition of positivity.
	\end{proof}

	We can see, then, that the theorem itself is quite simple, and its value resides in the strength of its assumptions, which we shall examine now. The first thing one may notice is that the theorem already makes use of the Hilbert space formalism for the observables, and the fact that the states also follow the same formalism seems almost like a tautology. But this is not the case. Quantum mechanics can already implement this formalism in experiments in a quite successful manner, and one may regard observable $A$ as just a proxy for the experiment that implements it; as $\mu$ can be any function \textit{a priori} (we don't even assume it is continuous), there is not limitation in using $\selfadjoint(\Hi)$ as its domain. We shall now proceed to examine the physical content of the assumptions.

	Assumption 1 and 2 can be interpreted as doing classical post-processing to the data of a single experiment, the measurement of a PVM $\DE{\Pi_i}$, that we define from the eigendecomposition of A. The multiplication of A by a constant is implemented just by multiplying its eigenvalues by the same constant. To implement the observable $A+B$ corresponding to the sum of commuting operators $A$ and $B$ one notices that they can be diagonalized simultaneously as $A = \sum_i a_i \Pi_i$ and $B = \sum_i b_i \Pi_i$, and so their sum $A+B = \sum_i (a_i + b_i)\Pi_i$ is just a combination and rescaling of the data coming from the $\Pi_i$ outputs. Assumptions 4 and 5 can be justified by the possibility of interpreting $\mu(\Pi_i)$ as a probability: probabilities are positive, and some outcome must happen.

	The one which is harder to justify is assumption 3, since $A$, $B$, and $A+B$ correspond to different experimental configurations: so the possibility of measuring $A+B$ just by processing the data coming from the PVMs that measure $A$ or $B$ is excluded. Its justification comes from the fact that in quantum mechanics $\tr \rho(A+B) = \tr\rho A+\tr\rho B$, and our ontological theory must reproduce its results. But this is where von Neumann slips, and to make the slip more clear, it's best to use the ontological notation, the correspondence being $\mu(A) = \xi_A(\lambda)$. So assumption 3 translates to 
		\[ \xi_{A+B}(\lambda) = \xi_{A}(\lambda)+\xi_{B}(\lambda), \]
	which is clearly overkill, since correspondence with quantum mechanics only requires that 
		\[ \int_\Lambda \mu_\rho \xi_{A+B} = \int_\Lambda \mu_\rho \xi_{A}+\int_\Lambda \mu_\rho\xi_{B}, \]
	that is, that the expected values correspond, not the values of the response functions themselves. For instance, in the Bell-Mermin model, discussed in appendix \ref{sec:bellmermin}, we can see that the response function \eqref{eq:bellmermin} is clearly linear with respect to the sum of commuting observables\footnote{Note that $A$ and $B$ commute iff $b = \alpha a$ for some real $\alpha$.} 
	\[A = a_0\id + a \cdot \sigma\mathand B = b_0\id + b\cdot \sigma = b_0\id + \alpha a\cdot \sigma,\]
	as
	\begin{align*}
	\xi_{A + B}(\psi,\lambda) &= a_0 + b_0 + \norm{a+\alpha a}\sign((a+\alpha a)\cdot(\lambda + \hat{\psi})) \\
			     &= a_0 + b_0 + \abs{1+\alpha}\norm{a}\sign(1+\alpha)\sign(a\cdot(\lambda + \hat{\psi})) \\
			     &= a_0 + \norm{a}\sign(a\cdot(\lambda+\hat{\psi})) + b_0 + \alpha\norm{a}\sign(a\cdot(\lambda+\hat{\psi})) \\
			     &= \xi_{A}(\psi,\lambda) + \xi_{B}(\psi,\lambda),
	\end{align*}
	since the values that $\xi_{A}$ assumes are the eigenvalues of $A$, and eigenvalues \emph{are} linear with respect to the sum of commuting observables. Of course, this is not true when the observables do not commute, as we can see in the following example:
	\begin{align*} 
	\xi_{\sigma_x + \sigma_z}(\psi,\lambda) &= \sqrt{2}\sign(\lambda_x+\psi_x+\lambda_z+\psi_z) \\
				  &\neq \sign(\lambda_x + \psi_x) + \sign(\lambda_z + \psi_z) \\
				  &= \xi_{\sigma_x}(\psi,\lambda) + \xi_{\sigma_z}(\psi,\lambda).
	\end{align*}

	Therefore, we must conclude that this assumption is unfounded, and if no justification can be found to it, we must abandon von Neumann's prohibition of ontological models. We shall see, however, that even if we abandon this assumption, we can still prove a von Neumann-like theorem, valid in a more restricted context: that is Gleason's theorem. More surprisingly, however, is the fact that this assumption \emph{can} be justified, by the consideration of POVMs. This realisation is what motivated the proof of Busch's theorem.

	\subsection{Gleason's theorem}

	Andrew Gleason was not concerned with von Neumann's theorem, not even with the problem of ontological models for quantum mechanics. His goal was to study the mathematical foundations of quantum mechanics, and to strengthen its axiomatic basis by showing that essentially every measure on a Hilbert space is given by Born's rule \cite{gleason57}. Its significance to the exclusion of ontological models of quantum mechanics was first noticed by Bell \cite{bell66}, who also remarked that contextual ontological models were not bound by Gleason's theorem.

	\begin{theorem}[Gleason \cite{gleason57}]
	Let $\Hi$ be a separable Hilbert space over $\C$ with $\dim\Hi \ge 3$, and $\mu : \pro(\Hi) \to [0,1]$ a function such that $\sum_i \mu(\Pi_i) = 1$ for any PVM $\DE{\Pi_i}$. Then any such function can be written as \[\mu(\Pi_i) = \tr \rho \Pi_i,\] where $\rho$ is a positive operator of unity trace.
	\end{theorem}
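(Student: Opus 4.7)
The plan is to reduce the problem to understanding $\mu$ on rank-one projectors. Define the \emph{frame function} $f : \Hi \to [0,1]$ on unit vectors by $f(\psi) = \mu(\ketbra{\psi}{\psi})$; the PVM hypothesis applied to any orthonormal basis $\{\ket{e_i}\}$ gives $\sum_i f(e_i) = 1$. Once I have shown that $f(\psi) = \exval{\psi}{\rho}{\psi}$ for some positive operator $\rho$ with $\tr\rho = 1$, I can recover $\mu$ on a general rank-$k$ projector $\Pi = \sum_{j=1}^k \ketbra{e_j}{e_j}$ by observing that $\{\ketbra{e_1}{e_1},\dots,\ketbra{e_k}{e_k},\id-\Pi\}$ and $\{\Pi,\id-\Pi\}$ are both PVMs: the hypothesis yields $\sum_{j=1}^k f(e_j) + \mu(\id-\Pi) = 1 = \mu(\Pi) + \mu(\id-\Pi)$, so $\mu(\Pi) = \sum_j \exval{e_j}{\rho}{e_j} = \tr\rho\Pi$.

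First I would reduce to the real setting: restricting $f$ to any three-dimensional real subspace of $\Hi$ gives a nonnegative frame function on $S^2$, and if each such restriction equals the restriction of a quadratic form, these forms must glue consistently on common unit vectors. So the heart of the argument is the purely real statement: every nonnegative frame function on $S^2$ has the form $\psi \mapsto \prin{\psi}{A\psi}$ for some self-adjoint operator $A$ on $\R^3$. The hard part --- Gleason's main technical contribution --- is to prove that every such frame function is \emph{continuous}. The idea is geometric: given two unit vectors close together, one interpolates between orthonormal bases containing them by successive rotations through pairs of mutually orthogonal directions, using the extra room granted by $\dim\Hi \ge 3$. Nonnegativity of $f$ is then used crucially to bound the oscillation along such sequences, pinning $f$ down as continuous. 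This step is my expected main obstacle, since the argument requires delicate manipulations of great circles that genuinely fail in dimension two, where rotations only move along a single circle and there is no room to ``transport'' one basis into another.

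Once continuity is in hand, a spherical-harmonic argument finishes the real case: expanding $f$ on $S^2$ in spherical harmonics $f = \sum_\ell f_\ell$, the constancy of $\sum_i f(e_i)$ over all orthonormal bases kills every component except $\ell = 0$ and $\ell = 2$ (odd-$\ell$ terms sum to zero by the ONB sign-flip $e_i \mapsto -e_i$, and higher-$\ell$ terms fail to be orthogonally invariant in the required sense), and what remains assembles into $\prin{\psi}{A\psi}$ for a self-adjoint $A$. To return to complex separable $\Hi$, I would apply this to every real three-dimensional subspace, check compatibility on overlaps, and upgrade the family of real quadratic forms to a single Hermitian sesquilinear form on $\Hi$; Riesz representation then delivers the self-adjoint operator $\rho$ with $f(\psi) = \exval{\psi}{\rho}{\psi}$. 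Finally, $\mu(\Pi) \ge 0$ on rank-one projectors forces $\rho \ge 0$, and the PVM condition applied to any single orthonormal basis gives $\sum_i \exval{e_i}{\rho}{e_i} = \tr\rho = 1$, completing the proof.
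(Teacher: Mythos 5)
The paper does not actually prove this theorem: it declares the proof ``well-known, and a bit boring'' and defers entirely to Gleason's original paper and Bell's exposition. Your sketch is a faithful outline of precisely that standard argument --- reduction to frame functions on rank-one projectors, restriction to real three-dimensional subspaces, continuity of nonnegative frame functions on $S^2$ as the technical core (correctly flagged as the step that genuinely uses $\dim\Hi\ge 3$), the harmonic-analysis step isolating the $\ell=0,2$ components, and the reassembly into a Hermitian form --- so there is nothing in the paper to contrast it with. The only loose ends worth noting are that the odd spherical harmonics vanish simply because $f$ descends from projectors and is therefore even (rather than by an averaging argument over sign-flipped bases), and that the elimination of even $\ell\ge 4$ is usually phrased via the irreducibility of the rotation action on each harmonic subspace; neither affects the soundness of the outline.
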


	The proof of this theorem is already well-known, and a bit boring, so we shall omit it. The interested reader may find it in the original work \cite{gleason57}, or in the clearer version by Bell \cite{bell66}.

	It is easy to see that von Neumann's $\mu$ functions satisfy all the properties of Gleason's $\mu$ functions, and continue to do so even if we drop his questionable assumption 3, so it is certainly possible to interpret Gleason's theorem as a ``reasonable'' von Neumann theorem, with weaker assumptions. Also notice that Gleason's assumptions are explicitly non-contextual, by assuming that $\mu(\Pi_i)$ is only a function of the projector $\Pi_i$, and not of the whole PVM.

	\subsection{Busch's theorem}

	Paul Busch was concerned with the justification of von Neumann's assumption 3. He noticed that if one measures a POVM $\DE{E_i}$ instead of a PVM, then it is possible to have in a single experiment two outcomes $E_0$ and $E_1$ that do not commute\footnote{In fact, this happens in all non-trivial POVMs..}, so it is perfectly natural to demand that $\mu\de{E_0 + E_1} = \mu\de{E_0} + \mu\de{E_1}$, since one can measure $E_0+E_1$ just by combining the outcomes corresponding to $E_0$ and $E_1$. He then restricted assumption 3 to sums of effects belonging to a single POVM, and was able to derive Born's rule from it, thus resurrecting von Neumann's theorem \cite{busch99}. Later he realized that the form of his theorem was actually closer to Gleason's than von Neumann's; to obtain it from Gleason's one only has to demand $\sum_i \mu(\Pi_i) = 1$ to be true for POVMs, instead of just form PVMs. Interpreted in this way, his theorem is a much stronger version of Gleason's with a much simpler proof \cite{busch03}.

	The proof presented here mostly follows the one presented in \cite{fuchs02}, with the difference that it does not require the domain of $\mu$ to be extended.

	\begin{theorem}[Busch \cite{busch03}]
	Let $\Hi$ be a separable Hilbert space over\footnote{$\Q[i]$ is the field extension of the rationals $\Q$ with the imaginary number $i$, $\Q[i] = \DE{a+ib:a,b\in\Q}$.} $\Q[i]$ or $\C$, and $\mu : \eff(\Hi) \to [0,1]$ a function such that $\sum_i \mu(E_i) = 1$ for any POVM $\DE{E_i}$. Then any such function can be written as \[\mu(E_i) = \tr \rho E_i,\] where $\rho$ is a positive operator of unity trace.
	\end{theorem}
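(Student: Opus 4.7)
The plan is to extract from the POVM-normalisation hypothesis an additive, rationally homogeneous functional on effects, extend it linearly to all of $\selfadjoint(\Hi)$, and then invoke Riesz to produce $\rho$.

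First I would mine additivity out of the normalisation condition. The trivial POVM $\{\id\}$ gives $\mu(\id)=1$, and $\{0,\id\}$ gives $\mu(0)=0$. For any two effects $E,F$ with $E+F\le\id$, the sets $\{E,F,\id-E-F\}$ and $\{E+F,\id-E-F\}$ are both POVMs; subtracting the two normalisation equations yields the crucial non-commuting additivity
\[ \mu(E+F)=\mu(E)+\mu(F). \]
Iterating gives $\mu(nE)=n\mu(E)$ for integer $n$ with $nE\in\eff(\Hi)$, and the usual trick $\mu(E)=n\mu(E/n)$ yields $\mu(qE)=q\mu(E)$ for every positive rational $q$ with $qE\in\eff(\Hi)$.

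Next I would bootstrap $\mu$ into a $\Q$-linear functional $\tilde\mu$ on $\selfadjoint(\Hi)$. For $A\ge 0$, pick any $n\in\N$ large enough that $A/n\in\eff(\Hi)$ and set $\tilde\mu(A):=n\mu(A/n)$, which is well-defined by rational homogeneity. For general self-adjoint $A$, use the positive/negative decomposition $A=A_+-A_-$ and put $\tilde\mu(A):=\tilde\mu(A_+)-\tilde\mu(A_-)$; one checks independence of the decomposition and $\Q$-linearity from the additivity above. Over $\Q[i]$ the space $\selfadjoint(\Hi)$ is already a $\Q$-vector space carrying the non-degenerate bilinear form $\prin{A}{B}=\tr(AB)$, so the Riesz lemma immediately supplies a self-adjoint $\rho$ with $\tilde\mu(A)=\tr\rho A$. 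Over $\C$ one needs the extra step of upgrading $\Q$-linearity to $\R$-linearity; this follows from the observation that $\tilde\mu(A)\in[0,1]$ whenever $A\in\eff(\Hi)$, which together with the decomposition $A=A_+-A_-$ gives a bound of the form $|\tilde\mu(A)|\le 2\norm{A}_{\mathrm{op}}$, hence continuity and thus real-linearity by density.

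Once $\tilde\mu(A)=\tr\rho A$ is in hand, the two remaining properties of $\rho$ are immediate: $\tr\rho=\tilde\mu(\id)=\mu(\id)=1$, and for any unit vector $\ket{\psi}$, $\exval{\psi}{\rho}{\psi}=\tilde\mu(\ketbra{\psi}{\psi})=\mu(\ketbra{\psi}{\psi})\in[0,1]$, so $\rho\ge 0$. The conceptually central step — and the whole reason the theorem is stronger than Gleason's while being simpler to prove — is the additivity for \emph{non-commuting} effects: this is exactly von Neumann's questionable assumption 3, now not postulated but \emph{forced} by the possibility of housing $E$, $F$ and $E+F$ inside a common POVM, something impossible for two non-commuting projectors in any PVM. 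Technically, the most delicate point is the passage from $\Q$- to $\R$-linearity in the complex case, which is why the $\Q[i]$ version is essentially a free corollary of the additivity argument.
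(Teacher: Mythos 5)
Your proof is correct and rests on the same two pillars as the paper's: additivity for non-commuting effects forced by housing $E$, $F$, and $E+F$ in a common POVM, and rational homogeneity upgraded to real homogeneity by an order/continuity argument. The one genuine structural difference is that you extend $\mu$ to a $\Q$-linear functional $\tilde\mu$ on all of $\selfadjoint(\Hi)$ via $A=A_+-A_-$ before invoking Riesz, whereas the paper deliberately avoids enlarging the domain (it flags this as its point of departure from Fuchs's rendition): it instead proves the identity $\mu(E)=\sum_i q_i\mu(E_i)$ for arbitrary rational $q_i$ while staying inside $\eff(\Hi)$, then expands an arbitrary effect in a MIC-POVM basis and defines $\rho$ by solving the $d^2$ equations $\tr\rho E_i=\mu(E_i)$. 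Your route buys a cleaner appeal to the Riesz lemma at the cost of the (routine but necessary) well-definedness checks for $\tilde\mu$; the paper's route keeps every object an effect throughout. For the passage to $\C$ you use boundedness, $\abs{\tilde\mu(A)}\le 2\norm{A}$, hence Lipschitz continuity and $\R$-homogeneity by density of $\Q$; the paper instead derives monotonicity ($F\le E\Rightarrow\mu(F)\le\mu(E)$) and squeezes $\mu(\alpha E)$ between $p_i\mu(E)$ and $q_i\mu(E)$ for rationals $p_i\le\alpha\le q_i$. These are essentially equivalent, though note that to get your norm bound you must approximate $\norm{A}$ from above by rationals, since at that stage $\tilde\mu$ is only $\Q$-homogeneous.
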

	\begin{proof}
	The proof begins by noticing that $\mu$ is in fact a linear functional on $\eff(\Hi)$. From that, the Riesz lemma establishes that it can represented as an inner product. Positivity and normalization of $\rho$ then comes from the positivity and normalization of $\mu$. We shall first prove the case where $\Hi$ is over the complex rationals, and later extend the proof to the continuum.

	First note that if $E$ is an effect, $\id -E$ is also an effect. Then considering the POVMs $\DE{E,\id-E}$ and $\DE{E_1,E_2,\ldots,E_n,\id-E}$, where $\sum_i E_i = E$, we see that $\mu(E) = \sum_i \mu(E_i)$. Considering the particular case $E_i = E/n$, we get that $\mu(E) = n\mu(E/n)$. On the other hand, if we consider $E = mF$ and $E_i = F$, we get $\mu(mF) = m\mu(F)$. Combining these two cases, we see that $\mu(\frac{m}{n}E) = m\mu(\frac{1}{n}E) = \frac{m}{n}\mu(E)$, that is, $\mu(qE) = q\mu(E)$ for $q \in \Q^+$ whenever both $qE$ and $E$ are effects. Wrapping up, we have that 
	\[\mu(E) = \sum_i q_i \mu(E_i)\]
	for rational $q_i$ whenever $q_i E_i$ are effects, so $\mu$ already has some restricted linearity. If we can remove the restriction that $q_i E_i$ are effects, we get full linearity on $\eff(\Hi)$, and that's what we'll do now.
	
	Consider the effects $E$ and $F \le E$. Then $E = E-F + F$, and $\mu(E) = \mu(E-F) + \mu(F)$, so $\mu(E-F) = \mu(E) - \mu(F)$. Consider now $E,F,G \in \eff(\Hi)$ and $p,q \in \Q^+$ such that $E = pF-qG$, but at least one of $p$ and $q$ is larger than unity, so $pF$ and $qG$ are not necessarily effects. Without loss of generality, let $p \ge q$. Then $\frac{1}{p}E$, $F$, and $\frac{q}{p}G$ are all effects, and by the property we just proved, $\mu(\frac{1}{q}E) = \mu(F) - \mu(\frac{q}{p}G)$, so $\mu(E) = p\mu(F)-q\mu(G)$ and
	\[\mu(E) = \sum_i q_i \mu(E_i)\]
	for \emph{any} rational $q_i$, so we have full linearity on $\eff(\Hi)$. Let then $\DE{E_i}_{i=1}^{d^2}$ be a MIC-POVM and, as such, a basis for $\Hi$. Then any effect $E$ can be written as $E = \sum_{i=1}^{d^2} q_i E_i$ for $q_i \in \Q$ (a moment's thought will convince you that complex numbers aren't allowed). We can now define $\rho$ by solving the $d^2$ equations $\tr \rho E_i = \mu(E_i)$, and see that
	\[\mu(E) = \sum_{i=1}^{d^2} q_i \mu(E_i) = \sum_{i=1}^{d^2} q_i \tr \rho E_i = \tr\de{\rho\sum_{i=1}^{d^2} q_i E_i} = \tr \rho E.\]
	Positivity of $\rho$ comes from considering the case where $E$ is a one-dimensional projector:
	\[ 0 \le \tr \rho E = \tr \rho \ketbra{\psi}{\psi} = \exval{\psi}{\rho}{\psi}.\]
	The unity of the trace comes from 
	\[ 1 = \sum_i \mu(E_i) = \sum_i \tr \rho E_i = \tr\de{\rho\sum_i E_i} = \tr \rho.\]
	This completes the proof for $\Q[i]$. To extend it to the continuum, note again that if $E\ge F$, then $\mu(E) = \mu(E-F) + \mu(F)$, and so $\mu(E) \ge \mu(F)$. Let then $p_i$ and $q_i$ be sequences of rational numbers tending to the real number $\alpha$ such that $p_i \le \alpha \le q_i$. We have $p_i E \le \alpha E \le q_i E$, and as such $p_i\mu(E)\le\mu(\alpha E)\le q_i\mu(E)$, so $\mu(\alpha E) = \alpha \mu(E)$. From this fact, one can now retrace the proof and see that it also holds for $\C$.
	\end{proof}

	The reason that we decided to highlight the fact that Busch's theorem holds for $\Q[i]$ is that the original Gleason theorem fails for it, hinting that traditional contextuality might have problems dealing with subsets of $\C$ \cite{meyer99,pitowsky83}. This feature of Busch's theorem was first noticed in \cite{caves04}.

\subsection{Wrapping up}

	Busch's theorem is clearly superior to von Neumann's in every way, but this is not true for Gleason's: they can be interpreted in different ways. Busch's shows that there can't be a non-contextual model capable of reproducing quantum mechanics in any dimension, while Gleason's opens up the possibility of such a model existing in dimension two, if we only care about projective measurements. That such a model exists can be seen by looking at the Bell-Mermin model in appendix \ref{sec:bellmermin}; but if, like Gleason, the reader is not interested in the question of ontological theories, but in which measures are allowed given the Hilbert space structure of observables, the following counterexample\footnote{Due to Marcelo Terra Cunha and Rafael Rabelo.} should suffice:
	\[\mu_\psi(\phi) = \frac{1}{2}(1+\cos (n \cos^{-1}(\hat{\phi}\cdot\hat{\psi}))),\quad \text{odd }n\]
	Note that for $n=1$ this formula is simply Born's rule. 

	It is easy to check that 
	\begin{align*}
	\sum_i \mu(\Pi_i) &= 
\mu_\psi(\phi) + \mu_\psi(\id -\phi) \\
	  &= \frac{1}{2}(1+\cos (n \cos^{-1}(\hat{\phi}\cdot\hat{\psi}))) + \frac{1}{2}(1+\cos (n \cos^{-1}(\hat{\phi}\cdot\hat{\psi})+\pi)) \\
	  &= \frac{1}{2}(1+\cos (n \cos^{-1}(\hat{\phi}\cdot\hat{\psi}))) + \frac{1}{2}(1-\cos (n \cos^{-1}(\hat{\phi}\cdot\hat{\psi}))) \\
	  &= 1,
	\end{align*}
	as required in Gleason's assumptions. 

	To see that for $n\ge 3$ this formula can't equal Born's rule, notice that
	\[\tr\psi\phi = \frac{1}{2}(1+ \hat{\phi}\cdot\hat{\psi}) \]
	only has one root, if considered as a function of the angle $\cos^{-1}(\hat\phi\cdot\hat\psi)$, whereas our $\mu_\psi(\phi)$ has $n$ roots.

\section{The Kochen-Specker theorem}\label{sec:kochenspecker}	

	A corollary of the Gleason theorem is that one can't embed quantum theory in a noncontextual ontological model if $\dim\Hi\ge 3$, since the Born rule is explicitly noncontextual and non-deterministic; a direct proof of this fact might seem superfluous. But one might not like its assumptions: after all, it already assumes a fair bit of structure that is not quite needed and, more importantly, it needs to assume that the quantum valuation $\mu(\Pi_i)$ is defined for a continuous amount of projectors, which of course can never have experimental justification. This was the motivation\footnote{The motivation can come from Gleason's theorem, or from a 1960 work of Specker \cite{specker60,seevinck11}, that was independent of Gleason and also contained a ``continuous'' proof of contextuality.} for Simon Kochen and Ernst Specker to develop a \emph{finite} proof of noncontextuality, finding an inconsistency in any deterministic assignment of values to a set of experiments realizable in quantum mechanics \cite{kochen67}. Another motivation to present it here is that it proves the claim in section \ref{sec:bellmeu} that noncontextual deterministic ontological models can not describe two-outcome PVMs.

	In modern parlance, the Kochen-Specker theorem is referred to as a proof of state-independent contextuality, as the logical contradiction found depends only on the structure of quantum observables, and not on the statistics from the measurement of specific states. This situation contrasts, of course, with proofs of state-dependent contextuality, which we shall explore mainly on the next chapter.

	More specifically, their proof says that we can't attribute deterministic values $\xi_{\Pi_i}(\lambda)$ to a set of projectors $\{\Pi_i\}_{i=1}^{117}$ in dimension three respecting the quantum mechanical observation that in the measurement of a PVM one answer (and only one answer) always occurs. An elegant way to proceed with the proof is to represent this set of projectors in an orthogonality graph (where each vertex corresponds to a projector, and two vertices are connected iff the corresponding projectors are orthogonal), and map the quantum mechanical observation into two rules for colouring the graph:
	\begin{enumerate}
		\item Two connected vertices can't both have the value $1$ -- If two projectors $\Pi_i$ and $\Pi_j$ are orthogonal, they can be measured simultaneously, and therefore $\xi_{\Pi_i}(\lambda)$ and $\xi_{\Pi_j}(\lambda)$ can't both equal $1$.
		\item In a loop of three connected vertices, one of them must have the value $1$ -- If three projectors are mutually orthogonal, they form a PVM, and in a PVM one answer (and only one answer) always occurs.
	\end{enumerate}

	The proof concludes by showing that no such colouring of the graph can exist, and therefore one can't attribute deterministic values to this set of projectors. We shall, however, omit it. Even though it is quite beautiful, the proof is mainly of historical interest, as simpler proofs have hitherto been found. We refer the interested reader to the original paper, or the excellent exposition of it by Cabello \cite{cabello96}.

\subsection{An 18-projector proof by Cabello, Estebaranz, and García-Alcaine}

	The simplest (with fewest projectors) such no-colouring proof that we currently know\footnote{We do know that in dimensions 3 and 4 there are no no-colouring proofs with 17 projectors or less \cite{cabello06,arends11}.} was found in 1996 by Cabello, Estebaranz, and García-Alcaine \cite{cabello96b}. In contrast with Kochen-Specker's 117 projectors, it needs only 18 to generate a contradiction. These projectors are represented in figure \ref{fig:adan18}, where $v = (a,b,c,d)$ is just a shorthand notation for the projector onto $\ket{v} = a\ket{0} + b\ket{1} + c\ket{2} + d\ket{3}$. This figure does not represent an orthogonality graph, which would be quite cumbersome, but an orthogonality hypergraph, where sets of four commuting projectors are connected by edges of the same colour.

	\begin{figure}[ht]
	\centering
	\includegraphics[width=1.00\textwidth]{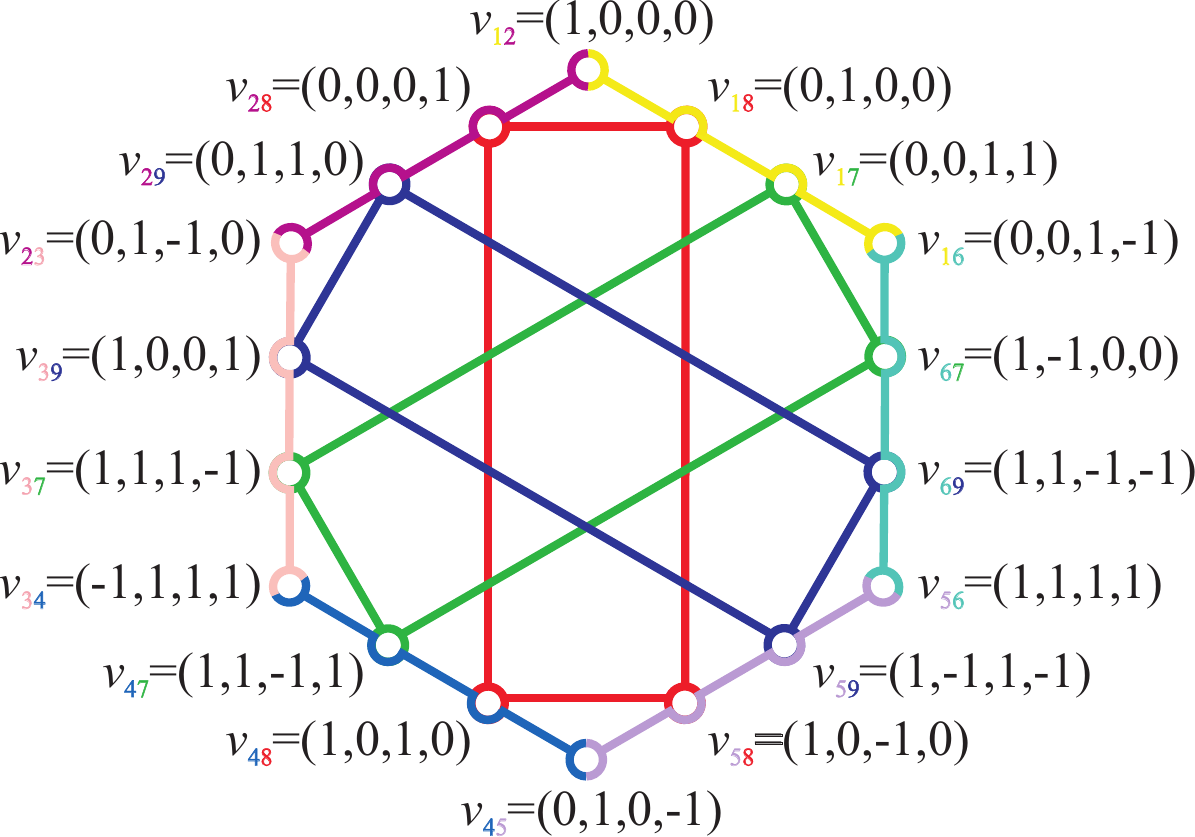}
	\caption{Vectors for the 18-projector proof of the Kochen-Specker theorem. Reproduced from \cite{cabello08} with permission from the author.}
	\label{fig:adan18}
	\end{figure}

	One could in fact proceed to prove directly that it is non-colourable (there are few non-equivalent potential colourings), but it is more elegant to use a parity argument: we know that in each context we must have one answer $1$, so the sum over all answers in all contexts must be $9$. But if we do this sum projector by projector, we see that each projector appears in exactly two contexts, and likewise each answer appears twice, so the sum over them must be an even number, a contradiction.

\subsection{A 13-projector proof by Yu and Oh}\label{sec:yuoh}

	Shockingly, more recently it has been found that a non-colourable graph is \emph{not} necessary to prove state-independent contextuality. Yu and Oh \cite{yu12} have found such a proof in dimension 3 based on a set of 13 projectors that \emph{does} have a colouring that obeys rules 1 and 2. They argue that every possible colouring of their graph contradicts another prediction of quantum theory. The orthogonality graph is represented in figure \ref{fig:yugioh}, and its quantum realization is given by the vectors 
	\begin{align*}
	z_1 &= (1,0,0)&  h_0 &= (1,1,1)&  y_1^+ &= (0,1,1)\\
	z_2 &= (0,1,0)&  h_1 &= (-1,1,1)& y_1^- &= (0,1,-1)\\
	z_3 &= (0,0,1)&  h_2 &= (1,-1,1)& y_2^+ &= (1,0,1)\\
	&&		 h_3 &= (1,1,-1)& y_2^- &= (-1,0,1)\\
	&&		&&		  y_3^+ &= (1,1,0)\\
	&&		&&		  y_3^- &= (1,-1,0)
	\end{align*}
	where $r = (a,b,c)$ is just a shorthand notation for the projector onto $\ket{r} = a\ket{0} + b\ket{1} + c\ket{2}$. It is important for the proof that this is actually the unique quantum realization of the orthogonality graph up to a global unitary transformation, which is trivial to prove.

\begin{figure}[ht]
	\centering
	\includegraphics[width=1.00\textwidth]{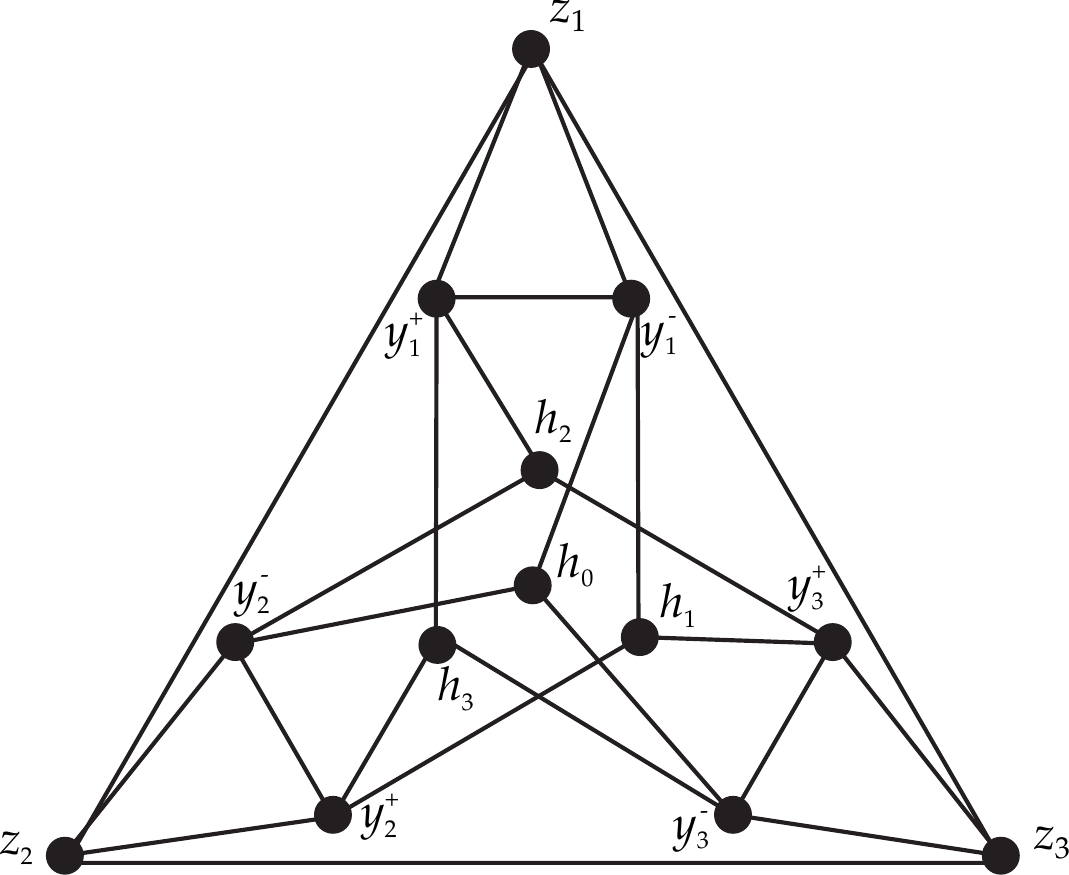}
	\caption{Orthogonality graph for the proof of Yu and Oh. Reproduced from \cite{cabello12} with permission from the authors.}
	\label{fig:yugioh}
\end{figure}

	To obtain the contradiction with quantum mechanics, first note that no two $h_i$ can be assigned $1$ simultaneously. We shall prove this by contradiction. By the symmetry of the graph, there are only two cases:
	\begin{enumerate}
		\item Assume that $\xi_{h_0}(\lambda) = \xi_{h_1}(\lambda) = 1$. Then by the KS rules we must assign $0$ to $y_2^\pm$ and $y_3^\pm$, which oblige us to assign $1$ to $z_2$ and $z_3$, a contradiction.
		\item Assume that $\xi_{h_1}(\lambda) = \xi_{h_2}(\lambda) = 1$. Then by the KS rules we must assign $0$ to $y_1^\pm$ and $y_2^\pm$, which oblige us to assign $1$ to $z_1$ and $z_2$, a contradiction.
	\end{enumerate}
	This implies that $\sum_i \xi_{h_i}(\lambda) \le 1$, and furthermore that \[ \sum_i \int_\Lambda \mu_\psi \xi_{h_i} \le 1.\] But the \textit{lhs} must be equal to the quantum expectation value $\sum_i \tr \psi h_i$; since \[ \sum_i h_i = \frac{4}{3} \id, \] we get that $\sum_i \tr \psi h_i = 4/3$ for any state, a contradiction.

\subsection{A 9-observable proof by Peres and Mermin}

	Last but not least, we'd like to present the beautiful proof of the Kochen-Specker theorem done in 1990 by Asher Peres and David Mermin \cite{peres90,mermin90}, the Peres-Mermin square. It uses 9 four-dimensional observables, so in some sense it is larger than the previous two proofs, and also older; but it is also quite elegant, and so it might seem smaller to the human mind.

	Let 
	\begin{equation}\label{eq:peresmerminsquare}
	A = \begin{pmatrix}
	\sigma_z \otimes \id & \id \otimes \sigma_z & \sigma_z \otimes \sigma_z \\
	\id \otimes \sigma_x & \sigma_x \otimes \id & \sigma_x \otimes \sigma_x \\
	\sigma_z \otimes \sigma_x & \sigma_x \otimes \sigma_z & \sigma_y \otimes \sigma_y 
	\end{pmatrix}
	\end{equation}
	be the Peres-Mermin square, where $\sigma_x$, $\sigma_y$, and $\sigma_z$ are Pauli matrices. Note that observables $A_{ij}$ that lie in the same line or column always commute, so they are simultaneously measurable, and we should be justified in assigning them a predefined value $\mu(A_{ij}) = \xi_{A_{ij}}(\lambda) \in \{-1,+1\}$. But also note that the product of the observables in each line or column is always plus or minus identity, relation that our predefined values should also respect. More specifically, this reasoning leads us to the relations
	\begin{align*}
	\mu(\sigma_z \otimes \id)\mu(\id \otimes \sigma_z)\mu(\sigma_z \otimes \sigma_z) &= +1\\
	\mu(\id \otimes \sigma_x)\mu(\sigma_x \otimes \id)\mu(\sigma_x \otimes \sigma_x) &= +1\\
	\mu(\sigma_z \otimes \sigma_x)\mu(\sigma_x \otimes \sigma_z)\mu(\sigma_y \otimes \sigma_y) &= +1\\
	\mu(\sigma_z \otimes \id)\mu(\id \otimes \sigma_x)\mu(\sigma_z \otimes \sigma_x) &= +1\\
	\mu(\id \otimes \sigma_z)\mu(\sigma_x \otimes \id)\mu(\sigma_x \otimes \sigma_z) &= +1\\
	\mu(\sigma_z \otimes \sigma_z)\mu(\sigma_x \otimes \sigma_x)\mu(\sigma_y \otimes \sigma_y) &= -1
	\end{align*}
	Note now that each predefined value appears twice in the \textit{lhs}, so the product over all of them must be $+1$. But the product over the \textit{rhs} is $-1$, a contradiction.
	
\section{Ontological excess baggage}\label{sec:excessbaggage}

	What motivated Bell to prove his famous theorem was his observation that the ontological theory of de Broglie-Bohm \cite{bohm52} has a grossly nonlocal character \cite{bell66}. A natural question for him was, then, whether this nonlocality was particular of Bohm's mechanics or actually a general character of any ontological theory \cite{bell64}. 

	In that same paper, however, Bell also noticed that to study a spin system within Bohm's theory he had to include the position degree of freedom, and reduce spin measurements to position measurements. But by doing so he enlarged the number of real parameters required to describe a single qubit from two to countable infinity, and worse, the number of ontological states had to be uncountable infinity. 

	Hardy then asked whether this is a general feature of ontological theories, or just a particularity of Bell's model for a spin in Bohm's theory, and found that the answer is yes \cite{hardy04}, naming this feature \textit{ontological excess baggage}. His theorem is the subject of this section. 

	A perhaps more simple (certainly more direct) illustration of the ontological excess baggage theorem can be found in the naïve ontological theory described in section \ref{sec:naive}, where we identify the ontic space $\Lambda$ with the space of pure states $\pro\Hi$, thus forcing $\Lambda$ to have the same cardinality as it, that is, uncountable infinity.

	\begin{theorem}[Hardy \cite{hardy04}]
	In any ontological embedding of quantum theory the ontic space $\Lambda$ is infinite.
	\end{theorem}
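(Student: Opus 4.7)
The plan is to derive a contradiction from the assumption that $|\Lambda| = N$ is finite, using only the qubit sector of the embedding and the same reasoning that underlies Lemma~\ref{lem:orthogonality}. First, I would consider the continuous one-parameter family of pure qubit states $\ket{\psi_\theta} = \cos(\theta/2)\ket{0} + \sin(\theta/2)\ket{1}$, for $\theta \in [0, 2\pi)$, and attach to each $\theta$ the support $S_\theta \subseteq \Lambda$ of the ontic distribution $\mu_{\psi_\theta}$. Since $\Lambda$ has only $2^N$ subsets while $\theta$ ranges over an uncountable set, pigeonhole yields two distinct angles $\theta \neq \theta'$ in $[0, 2\pi)$ with $S_\theta = S_{\theta'} =: S$; these correspond to genuinely distinct quantum states with overlap $\abs{\braket{\psi_{\theta'}}{\psi_\theta}}^2 = \cos^2((\theta-\theta')/2) < 1$.

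Second, I would exploit a consequence of the Born rule already used in the proof of Lemma~\ref{lem:orthogonality}: whenever a quantum measurement gives some outcome with probability one on a state, the corresponding response function must equal one on the whole support of that state's ontic distribution. Applied to the projective measurement $M = \DE{\psi_{\theta'}, \id-\psi_{\theta'}}$ on the state $\psi_{\theta'}$, the first outcome has quantum probability one, so $\int_\Lambda \mu_{\psi_{\theta'}}\xi_{0|M} = 1$; together with $0 \le \xi_{0|M} \le 1$ and the normalisation of $\mu_{\psi_{\theta'}}$, this forces $\xi_{0|M}(\lambda) = 1$ for every $\lambda \in S$.

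Now I would apply the very same measurement $M$ to the other state $\psi_\theta$. Because $\mu_{\psi_\theta}$ is also supported on $S$, where $\xi_{0|M}\equiv 1$, the ontological formula returns $\int_\Lambda \mu_{\psi_\theta}\xi_{0|M} = 1$, whereas the Born rule demands $\cos^2((\theta-\theta')/2) < 1$. This is the desired contradiction, so $\Lambda$ cannot be finite. Notably, the argument invokes neither determinism nor preparation noncontextuality, in agreement with the broad scope of Hardy's statement; the only delicate point is the pigeonhole step that turns the finiteness of $\Lambda$ into a collision $S_\theta = S_{\theta'}$, and it is essentially immediate since a continuum cannot be injected into a finite power set.
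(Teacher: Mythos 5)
Your proposal is correct and is essentially the paper's own argument in contrapositive form: the paper shows directly that distinct pure states must have distinct supports (via the same probability-one measurement argument you use), yielding an injection of the uncountable set $\pro\Hi$ into the power set of $\Lambda$, whence $\Lambda$ is infinite; your pigeonhole-on-a-finite-$\Lambda$ phrasing is the same idea run backwards. The key step — that $\tr(\psi\psi)=1$ forces $\xi_{0|M}\equiv 1$ on the support of $\mu_\psi$, which is then incompatible with another state of the same support having Born probability strictly less than one — is identical in both.
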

	\begin{proof}
	Let $\psi$ and $\phi$ be two pure quantum states, $\psi \neq \phi$. Then there is a measurement $\psi$ for which $\tr (\psi\psi) = 1$, whereas $\tr (\psi\phi) < 1$. Writing these measurements ontologically, we have \[\int_\Lambda \mu_\psi \xi_\psi = 1\mathand\int_\Lambda \mu_\phi \xi_\psi < 1,\]
	that is, $\xi_\psi(\lambda) = 1$ for all $\lambda$ in the support of $\mu_\psi(\lambda)$, but there is a $\lambda_0$ in the support of $\mu_\phi(\lambda)$ for which $\xi_\psi(\lambda_0)<1$. Consequently, $\lambda_0$ is not in the support of $\mu_\psi(\lambda)$, and we see that different ontic states must have different supports. This constitutes an injection of $\pro\Hi$ into $\pro(\Lambda)$, \ie, the set of distinct subsets of $\Lambda$, thus proving that $\pro(\Lambda)$ is uncountable. This is only possible if $\Lambda$ itself is infinite (though not necessarily uncountable).
	\end{proof}
	This proof is based on the one presented in\footnote{Note that Spekkens' claim that $\Lambda$ itself is uncountable is incorrect.} \cite{spekkens07}. 

	One might wonder whether this argument can be extended to show that $\Lambda$ must be uncountable; after all, in all our examples it is, and we have not considered all the information we have: notice that it is never true that the support of $\mu_\phi$ contains the support of $\mu_\psi$ -- they are pairwise incomparable -- so we have an injection into a subset of $\pro(\Lambda)$, which might have a smaller cardinality than it. But this hope is unfounded: there is a set $Z$ of subsets of $\N$ that has pairwise incomparable members but continuous cardinality. This was proved by Martin Goldstern as an answer to a MathOverflow question by the author \cite{mogoldstern12}.

	\begin{theorem}
	There is a set $Z$ of subsets of $\N$ that has pairwise incomparable members but continuous cardinality.
	\end{theorem}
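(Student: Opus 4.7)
The plan is to build $Z$ as an antichain in the power set of a countable set by exploiting the infinite binary tree: there are continuum many infinite branches but only countably many finite nodes, and distinct branches naturally give incomparable prefix sets.

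First I would fix a bijection between $\N$ and the countable set $T = \{0,1\}^{<\omega}$ of all finite binary strings, so that every subset of $T$ canonically corresponds to a subset of $\N$. To each infinite binary sequence $b \in \{0,1\}^{\omega}$ I would associate
\[ S_b = \{t \in T : t \text{ is an initial segment of } b\}, \]
which is a countably infinite subset of $T$ (it contains exactly one string of each length). The candidate antichain is then $Z = \{S_b : b \in \{0,1\}^{\omega}\}$.

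Next I would verify the two required properties of $Z$. For pairwise incomparability: given $b \neq b'$, let $n$ be the first index at which they differ; then for every $m > n$, the length-$m$ prefix of $b$ lies in $S_b \setminus S_{b'}$, and symmetrically for $b'$. Hence $S_b \setminus S_{b'}$ and $S_{b'} \setminus S_b$ are both infinite, so neither set is contained in the other. For cardinality: the map $b \mapsto S_b$ is injective (one recovers $b$ as the unique infinite branch through the finite prefixes in $S_b$), so $\abs{Z} = \abs{\{0,1\}^{\omega}} = 2^{\aleph_0}$, the cardinality of the continuum.

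There is no serious obstacle here; the whole construction is elementary once one thinks of encoding branches of a countable tree. The only conceptual point worth flagging is that incomparability is strictly weaker than almost-disjointness: the construction in fact yields the stronger property that $S_b \cap S_{b'}$ is finite for $b \neq b'$, and pairwise incomparability is then an immediate consequence of each $S_b$ being infinite.
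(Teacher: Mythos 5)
Your proof is correct, but it takes a genuinely different route from the one in the paper. The paper's construction is a direct encoding trick on $\N$ itself: for each $A\subseteq\N$ it forms $Z_A = \{2n : n\in A\}\cup\{2n+1 : n\notin A\}$, so that the even part records $A$ and the odd part records its complement; containment $Z_A\subseteq Z_B$ then forces both $A\subseteq B$ and $B\subseteq A$ at once, giving incomparability with essentially no computation. Your construction instead identifies $\N$ with the full binary tree and takes the branch sets $S_b$, which form an \emph{almost disjoint} family --- pairwise finite intersections --- from which incomparability follows because each $S_b$ is infinite. Your approach proves a strictly stronger structural property (a continuum-sized almost disjoint family, a classical object in its own right) at the cost of an extra layer of coding via the bijection $\N\leftrightarrow\{0,1\}^{<\omega}$; the paper's approach is more economical and self-contained but yields only incomparability, since distinct $Z_A$ and $Z_B$ can share an infinite set of elements. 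Both arguments are complete and elementary, and either suffices for the application in the text (ruling out the hope that pairwise incomparability of the supports forces $\Lambda$ to be merely countable).
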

	\begin{proof}
	For any subset $A \subseteq \N$, let $X_A = \{2n:n\in A\}$, $Y_A = \{2n+1:n\not\in A\}$, and $Z_A =X_A \cup Y_A$. Then the set of all $Z_A$ is uncountable, since there is an injection of $\pro(\N)$ into it. Also note $X_A \cap Y_A = \emptyset$, and therefore $Z_A \subseteq Z_B$ implies that $X_A \subseteq X_B$ and $Y_A \subseteq Y_B$. This in turn implies that $A \subseteq B$ and $B \subseteq A$, hence $A=B$. So the $Z_A$ are pairwise incomparable.
	\end{proof}

	But why is Hardy's theorem interesting? After all, if we're not bothered by the fact that the set of quantum states $\state(\Hi)$ is uncountable, why should we be bothered by the fact that $\Lambda$ is infinite? It all has to do with the status of the pure states. If they're not ontological, the description of the Bloch ball as a vector space of dimension three is perfectly natural. But if we insist in giving ontological status to $\ket{0}$ and $\ket{+}$, it becomes a mystery the identification of the preparation procedures $\DE{\de{\frac{1}{2},\ket{0}},\de{\frac{1}{2},\ket{1}}}$ and $\DE{\de{\frac{1}{2},\ket{+}},\de{\frac{1}{2},\ket{-}}}$ or, ontologically speaking, the states $\frac{1}{2}\mu_{\ket{0}}+\frac{1}{2}\mu_{\ket{1}}$ and $\frac{1}{2}\mu_{\ket{+}}+\frac{1}{2}\mu_{\ket{-}}$. In fact, if we remember theorem \ref{teo:spekkens}, we know that we \emph{can't} do this identification, as it is precisely the assumption of preparation noncontextuality, which we showed to be untenable. But if we don't do this identification, the Bloch ball must explode: the set of all ontic states must be the infinite-dimensional set of probability distributions over the pure states.

\section{How to make an ontological theory?}

	In light of all these no-go theorems (and ``please-don't-go'' theorems), one is left to wonder how ugly it would look deterministic ontological models that reproduced all of quantum theory (as opposed to the restricted models presented in sections \ref{sec:bellmeu}, \ref{sec:kochenspeckermodel}, and appendix \ref{sec:bellmermin}). In fact, they don't look so bad on the paper, as their necessary ugliness is more philosophical than mathematical. There are, of course, models that are quite intricate, such as de Broglie-Bohm's theory. We shall ignore it, however, as we feel that an appropriate exposition of it would be too much of a digression. What we shall present is the contextual model proposed by Bell in his critique of the Gleason theorem \cite{bell66}, together with a $\psi$-epistemic modification of it \cite{lewis12}.

	\subsection{The Bell model}

	This $\psi$-ontic model was proposed by Bell in \cite{bell66}; we present it here as rendered in \cite{lewis12}.

	The ontic space for this model is $\Lambda = \pro\Hi \times [0,1]$, the ontic state is \[\mu_\psi(\lambda_\psi) = \delta(\lambda_\psi-\psi),\] and the response functions are given by 
	\[ \xi_{k|\phi}(\lambda_\psi,\lambda) = \De{\sum_{i=0}^{k-1}\tr\lambda_\psi\phi_i < \lambda \le \sum_{i=0}^k\tr\lambda_\psi\phi_i},\]
	where $[\phantom{m}]$ are Iverson brackets\footnote{Defined as $[P]=1$ if the proposition $P$ is true and $[P]=0$ otherwise.}, the empty sum $\sum_{i=0}^{-1}\tr\lambda_\psi\phi_i$ is $0$, and normalization requires us to set $\xi_{0|\phi}(\lambda_\psi,0)~=~1$. Note that for $\dim\pro\Hi = 2$ this model reduces to the one discussed is section~\ref{sec:bellmeu}.

	This model is easily seen to be contextual, since $\xi_{k|\phi}(\lambda_\psi,\lambda)$ depends non-trivially on the whole PVM $\phi$.

	\subsection{The Lewis-Jennings-Barrett-Rudolph model}\label{sec:ljbr}

	This model \cite{lewis12} was proposed as a complement to the \textsc{pbr} theorem (theorem \ref{teo:pbr}), showing that it is in fact possible to make a contextual $\psi$-epistemic model that reproduces quantum mechanics. With it, we complete the discussion of $\psi$-ontic and $\psi$-epistemic models that began in section \ref{sec:psiontic}.

	As this model is a bit complicated, we shall study first its version for dimension 2, in order to clarify the ideas, and then proceed to the general case. The response functions used are the same ones as the previous model, whereas the ontic states will be modified in order to become $\psi$-epistemic.

	Let $\hat z$ correspond to the north pole of the Bloch sphere, and $u \cdot \hat z = \cos\theta_u$ define the polar angle $\theta_u$ of a unit vector $u$. Then we can define the northern hemisphere $\north$ as the set of vectors with $\theta_u < \pi/2$, and label the measurement $\DE{\phi_0,\phi_1}$ in such a way that $\theta_{\phi_0} \le \theta_{\phi_1}$.

	The model is based on the observation that if $\hat{\psi} \in \north$, then the probability $\tr\psi\phi_0$ will be strictly larger than $0$ for any measurement $\phi_0$. This observation has two consequences. The first is that we can define a lower bound $f(\psi)$ for $\tr\psi\phi_0$ that does not depend on $\phi_0$, as \[\tr\psi\phi_0 = \frac{1}{2}(1+\cos\theta_{\psi\phi_0}) \ge \frac{1}{2}(1+\cos(\theta_\psi+\pi/2)) = f(\psi).\]

	The second is that there exists a set of ontic states 
	\[\Lambda_\north = \DE{ (\lambda_\psi,\lambda) : \lambda_\psi \in \north\mathand 0 \le \lambda < f(\psi) } \]
	such that for any state $(\lambda_\psi,\psi)$ in it we have that $\xi_{\phi_0}(\lambda_\psi,\lambda)=1$. Using all this, we can define a ontic state $\mu_\psi$ for $\hat\psi\in\north$:
	\[\mu_\psi(\lambda_\psi,\lambda) = \delta(\lambda_\psi-\psi)\Theta\de{\lambda-f(\psi)} + f(\psi) U_{\Lambda_\north}, \]
	where $U_{\Lambda_\north}$ is the uniform distribution on $\Lambda_\north$. Notice that all these states overlap in the set $\Lambda_\north$. The quantum statistics are recovered by
	\begin{align*}
	p(0|\psi,\phi) &= \int_\Lambda \mu_\psi \xi_{0|\phi} \\
			&= \int_\Lambda \Theta\de{\lambda-f(\psi)}\Theta(\tr\psi\phi_0-\lambda) + f(\psi) \int_\Lambda U_{\Lambda_\north} \Theta(\tr\lambda_\psi\phi_0-\lambda) \\
			&= \tr\psi\phi_0 -f(\psi) + f(\psi)\int_\Lambda U_{\Lambda_\north} \\
			&= \tr\psi\phi_0.
	\end{align*}
	For the case $\hat\psi\not\in\north$, we let $\mu_\psi(\lambda_\psi,\lambda) = \delta(\lambda_\psi-\psi)$, as usual.

	To make the generalization to dimension $d$, label the measurement $\DE{\phi_i}$ in such a way that $\tr\Pi\phi_0 \ge \tr\Pi\phi_1 \ge \ldots \ge \tr\Pi\phi_{d-1}$, where $\Pi$ is an arbitrary state. Now we want to define the analogue of $\north$, \ie, a set $\north'$ such that for any $\psi$ in it we have $\tr\psi\phi_0 > 0$. To do that, first note that $\tr\Pi\phi_0 \ge 1/d$, since $\tr\Pi\phi_i$ are the elements of a probability vector. Now note that $\tr\psi\phi_0 = 0$ implies that $\psi \le \id - \phi_0$, so $\tr\Pi\psi \le \tr\Pi(\id-\phi_0) \le 1 - 1/d$, and therefore $\tr\Pi\psi > (d-1)/d$ implies that $\tr\psi\phi_0>0$. With that in hand, we can now proceed to finding the analogue of $f(\psi)$, \ie, a lower bound on $\tr\psi\phi_0$ that does not depend on $\phi_0$. Since its existence is clear, we shall not bother looking for an explicit expression and just call it $f'(\psi)$. The analogue of $\Lambda_\north$ is then
	\[ \Lambda'_\north = \DE{ (\lambda_\psi,\lambda) : \lambda_\psi \in \north' \mathand 0 \le \lambda < f'(\lambda_\psi) },\]
	and the ontic state, for $\psi \in \north'$, is
	\[ \mu_\psi = \delta(\lambda_\psi-\psi)\Theta(\lambda-f'(\psi)) + f'(\psi) U_{\Lambda'_\north}. \]
	For $\psi\not\in\north'$, we let $\mu_\psi(\lambda_\psi,\lambda) = \delta(\lambda_\psi-\psi)$, as usual. This makes the model not ``maximally $\psi$-epistemic'', that is, it is not true that for every pair of non-orthogonal states $\psi$ and $\phi$ the ontic states $\mu_\psi$ and $\mu_\phi$ have a non-zero overlap. This raises the question: is a ``maximally $\psi$-epistemic'' model possible? This question was raised by the authors of \cite{lewis12} themselves, and answered by George Lowther and Scott Aaronson in the affirmative \cite{moaaronson12}.

%
%

\chapter{Revealing surrealism}\label{cha:prob}

\epigraph{Make it simple, because I can only understand simple things.}{Asher Peres}

	Reading the previous chapter must have felt like walking in sand, with the definitions and assumptions being challenged and changed all the time. This is unfortunate, but necessary for such a discussion of the foundations of quantum mechanics. In this chapter, however, we shall use what we learned and develop a final definition of contextuality, which will serve as a solid foundation for the work ahead. 

	Instead of trying to find an ontological embedding of quantum theory, we shall just accept that it can't be done, and try to characterize exactly which parts of quantum mechanics \emph{can't} be embedded in an (noncontextual) ontological theory. We shall do this by examining the probability distributions over certain events\footnote{``Which events?'', you ask. That is the question; for a partial answer, read the rest of the chapter.}: if such a probability distribution can't be reproduced by a noncontextual ontological theory, we shall deem it truly quantum. What for, you ask? These probability distributions will be a resource to do what is impossible in classical theories: quantum computation with an exponential speedup and quantum distribution of cryptographic keys, among other things. In other words, \emph{quantum magic}.

	\section{The correct definition of contextuality}
	
	The first thing we need to do is to obtain our final definition of contextuality. As we discussed in section \ref{sec:spekkenscontextuality}, we need a definition that is not specifically about quantum mechanics, but instead about probability distributions, as is the case of the definition of locality. This need was recognized by Robert Spekkens in 2005 \cite{spekkens05}, but he stopped short of doing that: Spekkens arrived at a definition that talked about ontological models instead of quantum theory. His definition (at least, the part of it about measurements) can easily be turned into a definition that only talks about probability distributions, it shall be our definition \ref{def:contextualitywrong}. However, we shall argue that this definition misses the essential point about contextuality.

	This necessity was also recognized by Adam Brandenburger and Noson Yanofsky in 2008 \cite{brandenburger08}, but this work limited itself to translating the various notions of contextuality that exist in the literature into statements about probability distributions. It did not try to judge them and obtain a final definition of contextuality.

	Prompted by the discovery\footnote{Or rather its publication in Physical Review Letters.} of the Klyachko inequality\footnote{Note that these papers claim to exclude \emph{any} ontological models, including contextual ones. This claim is incorrect.} \cite{klyachko02,klyachko08}, this important job was finally done by Adán Cabello, Simone Severini, and Andreas Winter in 2010 \cite{cabello10} (see also \cite{cabello14}), where they unified contextuality with the notion of nonlocality and provided elegant algorithms to calculate the relevant properties\footnote{We shall discuss this work in section \ref{sec:cabelloseveriniwinter}.}. But they did not provide the sorely needed in-depth discussion of the definition of contextuality.
	
	A clear and well-motivated exposition of the (now) accepted definition of contextuality was finally done by Samson Abramsky and Adam Brandenburger in 2011 \cite{abramsky11}, where they based their definition on the marginal problem and the Fine theorem (these concepts are discussed in section \ref{sec:marginalproblem}). Unfortunately, the authors have chosen to write this paper in the language of category theory, making it inaccessible to most physicists. A clearer explanation of some of their concepts can be found in \cite{kurzynski12,fritz11,chaves12}.

	Now, we shall present this definition and argue that it must be the ``correct'' one. Of course, this statement implies that the definitions discussed in section \ref{sec:spekkenscontextuality} were \emph{wrong}. In fact, it is quite a surprise that the correct definition took 44 years to appear, since the notion was first discussed in \cite{bell66}. One could also argue\footnote{But we're not going to.} that it should be considered 50 years \cite{specker60,seevinck11}, or even 148 years \cite{boole62}.

	This language is purposefully provocative and should be considered somewhat tongue-in-cheek, as it does not make sense, strictly speaking, to talk about correct or incorrect definitions. We do believe, however, that the new definition is a significant improvement over the old ones, as it is already proving itself more fruitful.

	To begin, let's start with our muse, the definition of locality:
	\begin{definition}[Locality]
	A set of probability distributions $p(a_i,b_j|A_i,B_j)$, where $A$ and $B$ refer to independent systems, is local if there exist response functions $\xi_{a_i|A_i}(\lambda), \xi_{b_j|B_j}(\lambda)$ and a probability distribution $\mu(\lambda)$ such that
	\begin{equation}\label{eq:locality} p(a_i,b_j|A_i,B_j) = \int_\Lambda \dint\lambda\, \mu(\lambda)\xi_{a_i|A_i}(\lambda) \xi_{b_j|B_j}(\lambda) \end{equation}
	\end{definition}
	This definition was motivated by the belief that ``correlations cry out for explanation'' \cite{bell81} or, to put it differently\footnote{As Marco Túlio does \cite{quintino12}.}, ``for those who know $\lambda$ there are no correlations'', which could be interpreted as\footnote{Of course, we demand that $p(x|\mathcal X) = \int_\Lambda \dint\lambda\,\mu(\lambda)\xi_{x|\mathcal X}(\lambda)$ for every $x,\mathcal X$, if anything is to make sense.}
	\begin{equation}\label{eq:fatorabilidade} \xi_{a_i,b_j|A_i,B_j}(\lambda) = \xi_{a_i|A_i}(\lambda) \xi_{b_j|B_j}(\lambda) \end{equation}
	Note that equation \eqref{eq:fatorabilidade} can in fact be proved\footnote{This proof seems to be part of the folklore.} (and, consequently, \eqref{eq:locality}) if we assume that $\xi_{a_i,b_j|A_i,B_j}(\lambda)$ is deterministic and non-signalling:
	\begin{definition}[No-signalling]\label{def:nonsignalling}
	We say that a set of probability distributions is non-signalling if for every $A_i$ the marginal
	\[p(a_i|A_i,B_j) = \sum_{b_j} p(a_i,b_j|A_i,B_j)\] does not depend on $B_j$, where $A$ and $B$ refer to independent systems.
	\end{definition}
	\begin{lemma}\label{lem:fatorabilidade}
	Every deterministic probability distribution $p(a_i,b_j|A_i,B_j)$ is factorizable, \ie, there exist probability distributions $p(a_i|A_i,B_j)$ and $p(b_j|A_i,B_j)$ such that \[p(a_i,b_j|A_i,B_j) = p(a_i|A_i,B_j)p(b_j|A_i,B_j).\]
	\end{lemma}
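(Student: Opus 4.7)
The plan is to exploit the very restrictive structure imposed by determinism. By hypothesis, for each fixed choice of measurements $(A_i, B_j)$ the probability $p(a_i, b_j | A_i, B_j) \in \{0,1\}$. Since the distribution is normalized, for each $(A_i, B_j)$ there exists a unique pair $(a_i^*, b_j^*)$ (depending on $A_i, B_j$) such that $p(a_i^*, b_j^* | A_i, B_j) = 1$ and $p(a_i, b_j | A_i, B_j) = 0$ for every other pair.

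First I would compute the marginals explicitly: $p(a_i | A_i, B_j) = \sum_{b_j} p(a_i, b_j | A_i, B_j)$ is $1$ when $a_i = a_i^*$ and $0$ otherwise, because exactly one term in the sum is $1$ and the rest vanish. The same argument gives $p(b_j | A_i, B_j) = 1$ iff $b_j = b_j^*$. Hence both marginals are themselves deterministic.

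Next I would verify the factorization identity by splitting on cases. If $(a_i, b_j) = (a_i^*, b_j^*)$, then the left-hand side equals $1$ and the right-hand side equals $1 \cdot 1 = 1$. Otherwise at least one of $a_i \neq a_i^*$ or $b_j \neq b_j^*$ holds, so at least one factor on the right-hand side vanishes and the product is $0$, matching the left-hand side. This establishes the equality for every $(a_i, b_j)$, proving factorizability.

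This argument is essentially a combinatorial triviality, so no step is really an obstacle; the only thing to be slightly careful about is that the definition of ``deterministic'' used here is the one implicit in the earlier response-function discussion, where all probabilities lie in $\{0,1\}$. Once that is made explicit, the proof is immediate and requires no appeal to the no-signalling property. Note that non-signalling will become essential only in the next step, when one uses this lemma to derive the local-hidden-variable factorization \eqref{eq:fatorabilidade} from a deterministic non-signalling ontological model.
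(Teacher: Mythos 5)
Your proof is correct and follows essentially the same route as the paper's: both arguments rest on the observation that a deterministic joint distribution is supported on a single pair $(a_i^*,b_j^*)$, so its marginals are the corresponding point masses and their product reproduces the joint. The paper phrases this as one algebraic chain with Kronecker deltas rather than an explicit case split, but the content is identical, and your closing remark about no-signalling being needed only in the subsequent theorem is also accurate.
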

	\begin{proof}
	Define the marginals $p(a_i|A_i,B_j) = \sum_{b_j'} p(a_i,b_j'|A_i,B_j)$ and $p(b_j|A_i,B_j) = \sum_{a_i'} p(a_i',b_j|A_i,B_j)$. Then 
	\begin{align*} 
	p(a_i|A_i,B_j)p(b_j|A_i,B_j) &= \sum_{a_i',b_j'} p(a_i,b_j'|A_i,B_j)p(a_i',b_j|A_i,B_j) \\
				     &= \sum_{a_i',b_j'} \delta_{a_ia_i'}\delta_{b_jb_j'}\de{p(a_i,b_j|A_i,B_j)}^2 \\
				     &= p(a_i,b_j|A_i,B_j),
	\end{align*}
	since $p(a_i,b_j|A_i,B_j)$ is nonzero for a single pair $a_i,b_j$.
	\end{proof}
	\begin{theorem}
	If a set of probability distributions is deterministic and non-signalling, then it is local.
	\end{theorem}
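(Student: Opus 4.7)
The plan is to combine Lemma \ref{lem:fatorabilidade} with the no-signalling condition to show that a deterministic non-signalling distribution factorizes in the strong sense required by locality, and then to exhibit a trivial single-point ontological model that realizes this factorization.

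First I would apply Lemma \ref{lem:fatorabilidade} to each fixed pair $(A_i,B_j)$, obtaining the factorization
\[ p(a_i,b_j|A_i,B_j) = p(a_i|A_i,B_j)\,p(b_j|A_i,B_j), \]
where the marginals are defined as in the lemma. At this stage the marginals still carry a spurious dependence on the other party's input: $p(a_i|A_i,B_j)$ depends on $B_j$ and $p(b_j|A_i,B_j)$ depends on $A_i$. This is precisely the dependence that Definition \ref{def:nonsignalling} forbids. Invoking no-signalling, we can drop those dependences and write the marginals as $p(a_i|A_i)$ and $p(b_j|B_j)$, yielding
\[ p(a_i,b_j|A_i,B_j) = p(a_i|A_i)\,p(b_j|B_j). \]

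With this clean factorization in hand, constructing the local model is immediate: take $\Lambda = \{\lambda_0\}$ to be a single point with $\mu(\lambda_0) = 1$, and define the response functions $\xi_{a_i|A_i}(\lambda_0) = p(a_i|A_i)$ and $\xi_{b_j|B_j}(\lambda_0) = p(b_j|B_j)$. Substituting into the right-hand side of \eqref{eq:locality} reproduces $p(a_i,b_j|A_i,B_j)$ identically, which establishes locality. (Equivalently, since the joint is deterministic, the marginals $p(a_i|A_i)$ and $p(b_j|B_j)$ are themselves $\{0,1\}$-valued, so the response functions are even deterministic, but this observation is not needed.)

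There is no real obstacle here — the only subtlety is bookkeeping: the lemma gives one factorization per input pair, so one must check that no-signalling is what glues these into a single factorization whose marginals depend only on one party's setting. Once that is noted, the passage to a trivial $\Lambda$ is automatic and the theorem follows.
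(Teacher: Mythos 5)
Your proof is correct and follows essentially the same route as the paper's: factorize the deterministic joint via Lemma \ref{lem:fatorabilidade}, use no-signalling to strip the marginals of their dependence on the other party's setting, and realize the resulting product form with a trivial ontological model (the paper's terse ``define $p(a_i,b_j|A_i,B_j)=\xi_{a_i,b_j|A_i,B_j}(\lambda)$'' is exactly your single-point $\Lambda$). Your version just spells out the bookkeeping that the paper compresses into one line.
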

	\begin{proof}
	Define $p(a_i,b_j|A_i,B_j) = \xi_{a_i,b_j|A_i,B_j}(\lambda)$. Applying lemma \ref{lem:fatorabilidade} and definition \ref{def:nonsignalling}, we have equation \eqref{eq:fatorabilidade}, which implies locality.
	\end{proof}
	Therefore, if one believes in determinism and (relativity-enforced) non-signalling, there's quite a good justification for the \textit{factorizability} condition expressed in equation \eqref{eq:fatorabilidade}, and therefore for Bell's definition of locality. But we see that determinism is just a possible justification for it, and not at all a necessary assumption for talking about locality. Without determinism, some valid justifications for factorizability are
\begin{enumerate}
	\item Classical theories are factorizable, as can be seen by the Gelfand-Naimark theorem \cite{strocchi08}. After all, the motivation for looking for a ontological theory in the first place was to recover our classical intuition in a quantum setting.
	\item We don't demand that $\lambda$ gives us deterministic answers; but without factorizability then $\lambda$ does not even explain correlations. And if $\lambda$ does not even explain correlations, why bother with it?
	\item A set of probability distributions admits a joint probability distribution if and only if they are factorizable, as proven by the Fine theorem \cite{fine82} (see our theorem \ref{teo:fine}): 
\end{enumerate}

	In fact, in our opinion the best possible justification for the assumption of factorizability is the Fine theorem, as the existence of a global probability distribution is very appealing on physical grounds. It also shows that the assumption of factorizability implies determinism, so there is in fact nothing else to justify. 

	The Fine theorem shall be our final aim when adapting this discussion to contextuality. We start, however, from humbler considerations. First notice that definition of no-signalling (definition \ref{def:nonsignalling}) does not require any idle talk about relativity, if we do not require that $A_i$ and $B_j$ belong to separate parties, just that they can be jointly measured (which is the only prerequisite for talking about their joint distribution). If we rewrite it like this, we end up with a version of Bell's definition of contextuality for probability distributions:
	\begin{definition}[\textcolor{red}{Wrong}]\label{def:contextualitywrong}
	We say that a set of probability distributions is noncontextual if for every $A_i$ the marginal
	\[p(a_i|A_i,A_j) = \sum_{a_j} p(a_i,a_j|A_i,A_j)\] does not depend on $A_j$.
	\end{definition}
	It is also fair to consider this definition to be a version of Spekkens' definition of measurement contextuality for probability distributions. But we know that this definition is not enough for locality: if we do not also assume factorizability -- or determinism -- all hell breaks loose: it becomes trivial to construct models that violate locality. In fact, notice that the trivial ontological model discussed in section \ref{sec:naive} -- which is neither factorizable nor deterministic -- violates locality; and that by this limited definition of contextuality it would be considered noncontextual, a truly unacceptable proposition. That is why we call these definitions wrong: they are just a generalization of no-signalling. Certainly desirable and useful, but not the whole story.

	Following \cite{ramanathan12}, we shall call this generalized no-sigalling property \emph{no-disturbance}:

	\begin{definition}[No-disturbance]\label{def:nodisturbance}
	We say that a set of probability distributions respects no-disturbance if for every $A_i$ the marginal
	\[p(a_i|A_i,A_j) = \sum_{a_j} p(a_i,a_j|A_i,A_j)\] does not depend on $A_j$.
	\end{definition}

	The full definition of noncontextuality follows from joining no-dis\-tur\-ban\-ce with factorizability, mirroring the definition of locality:

	\begin{definition}[Contextuality]
	A set of probability distributions $p(a_i,a_j|A_i,A_j)$ is noncontextual if there is a probability distribution $\mu(\lambda)$ and response functions $\xi_{a_i|A_i}(\lambda)$ such that
	\[ p(a_i,a_j|A_i,A_j) = \int_\Lambda \dint\lambda\,\mu(\lambda)\xi_{a_i|A_i}(\lambda)\xi_{a_j|A_j}(\lambda).\]
	\end{definition}
	Note that this definition is not quite revolutionary, as most works on contextuality only considered \emph{deterministic} noncontextuality. Its great value comes from the clarity it provides, particularly on the issue of non-de\-ter\-mi\-nis\-tic models: it becomes immediately obvious how to allow for nondeterminism without trivializing our requirements, and shows that the discussion on whether the response functions associated to effects must be deterministic is completely irrelevant. In fact, with it we can ask whether POVMs can be useful to observe contextuality, a question hitherto unexplored.

	Furthermore, it should be clear that this definition is exactly the same as the definition of locality, modulo the restriction that $A_i$ and $A_j$ are observables on separate subsystems; so locality is just a (interesting) particular case of noncontextuality\footnote{Note that even when one is only interested in tests of noncontextuality, this particular case is quite useful, since spatial separation is a good experimental technique to ensure compatibility of the measured observables.}. We shall therefore only talk about contextuality and noncontextuality, restricting our attention to locality if interesting. Notice also that although we only talk about pairs of jointly measurable observables, this definition is naturally extended for sets of any (finite) size, with a corresponding extension to multipartite locality.

	To complete the discussion of contextuality, the only thing lacking is a Fine theorem for noncontextual distributions. By now it should be obvious that it must exist, but we prefer to stop here and establish some notation and formalize what we already have, in order to be able to give a more precise statement. The theorem shall be proved in the next section.

\section{The marginal problem}\label{sec:marginalproblem}

	This notation and definitions are from \cite{abramsky11,fritz11,chaves12}, and are just a formalization of the discussion of the previous section. 

	Let $\mathcal{X} = \{X_0,\ldots,X_{k-1}\}$ be a set of random variables.

\begin{definition}[Marginal scenario]
A marginal scenario $\mathcal{C}$ is a collection $\mathcal{C} = \{C_0,\ldots,C_{n-1}\}$ of subsets $C_i \subseteq \mathcal{X}$ such that $C' \subseteq C_i$ implies $C' \in \mathcal{C}$.
\end{definition}

	The motivation behind this definition is to define which subsets of $\mathcal X$ can be measured simultaneously, in order to actually measure them and generate the probability distributions that will be tested for compatibility. We call the subsets $C_i$ \textit{contexts}, and $\marginal$ is the set of all measurable contexts. Note that in quantum mechanics $\marginal$ will be precisely the subsets of $\mathcal X$ that commute pairwise.

	An interesting particular case is that of Bell scenarios:
\begin{definition}[Bell scenario]
We say that a marginal scenario $\marginal$ is a (bipartite) Bell scenario when there is a partition of $\mathcal X$ into two sets $A = \{A_i\}$ and $B = \{B_i\}$ such that each context $C_i \in \marginal$ contains at most one observable from $A$ and one observable from $B$. The multipartite case can be defined in the same fashion.
\end{definition}
	Note that each context will be of the form $C_i = \{A_k,B_l\}$ (plus the sin\-gle\-tons $C_i = \{A_k\}$ or $C_i = \{B_l\}$), so we can always implement this scenario in quantum mechanics via a tensor product structure, \ie, by defining observables $A_i = \tilde{A_i} \otimes \id$ and $B_j = \id \otimes \tilde{B_j}$. It then becomes possible to consider $A$ and $B$ as independent, spatially separated quantum systems, and to make the measurement of $A_i$ and $B_j$ with a space-like separation. In this way, each choice of context can be justified by an assumption of causality. A natural example of a Bell scenario is the CHSH scenario\footnote{Which shall be discussed in section \ref{sec:chsh}.}, where 
	\[\marginal_{CHSH} = \{\{A_0\},\{A_1\},\{B_0\},\{B_1\},\{A_0,B_0\},\{A_0,B_1\},\{A_1,B_0\},\{A_1,B_1\}\}.\] 
	This definition is only interesting because there are marginal scenarios where one cannot justify the choice of context by arguing that they are measurements on independent subsystems. This scenario is useful for proofs of contextuality, not nonlocality. An interesting example of it is the Klyachko scenario\footnote{Which shall be discussed in section \ref{sec:klyachko}.}, where 
	\begin{multline*}
	\marginal_{K} = \{ \{A_0\},\{A_1\},\{A_2\},\{A_3\},\{A_4\}, \\ 
	\{A_0,A_1\},\{A_1,A_2\},\{A_2,A_3\},\{A_3,A_4\},\{A_4,A_0\}\}.
	\end{multline*}
	There is still a third interesting case, a \textit{partial} Bell scenario, where it is still natural to define two subsystems, but we can't justify \textit{all} the contexts by an assumption of causality, only some. A trivial example of such a scenario would be joining $\marginal_K$ with an observable $B_0$ that can be in every context of $\marginal_K$. A more interesting example would joining $\marginal_K$ with a copy of itself $\marginal_K'$, where we assume that every observable in the first scenario can be in a context with every observable in the second scenario. In this case, we can have violations of both noncontextuality and locality, with some violations of noncontextuality not implying a violation of locality. But we are getting ahead of ourselves; to properly define what we mean by a violation we need a method of assigning probabilities to marginal scenarios and a definition of noncontextuality and locality within this formalism.
\begin{definition}[Marginal model\footnote{Alternative names for marginal models are \textit{behaviour} \cite{tsirelson93} and \textit{box} \cite{barrett05}.}]\label{def:marginalmodel}
A marginal model $\mathcal{C}p$ of a marginal scenario $\mathcal{C}$ is an assignment of probability distributions $C_i \mapsto p(c_i|C_i)$ such that\footnote{With a slight abuse of notation.} \[C_i \subseteq C_j \Rightarrow \sum_{c_j\setminus c_i} p(c_j|C_j) = p(c_i|C_i)\]
\end{definition}
	That is, for every context $C_i$ we assign a probability distribution $p(c_i|C_i)$, where $c_i$ is a vector of possible answers to the random variables contained within $C_i$. Note that this rather minimal compatibility condition on the marginals of the probability distributions is just the no-disturbance condition (definition \ref{def:nodisturbance}). We chose to demand it because marginal models that violate no-disturbance are trivially contextual, and we want to restrict our attention to the interesting cases.

	The reason for this definition is that we can assign these probability distributions to the context in an empirical manner -- for example, from quantum mechanical measurements -- opening up the possibility of a experimental test of locality and noncontextuality.

	With the definition of a marginal model, it becomes possible to state the definition of contextuality within this formalism:
\begin{definition}[Contextuality]\label{def:contextualityfinal}
A marginal model is noncontextual if there are response functions $\xi_{x_i|X_i}(\lambda)$ and a probability distribution $\mu(\lambda)$ such that for every $C_i \in \mathcal{C}$
	\[ p(c_i|C_i) = \int_\Lambda \dint\lambda\,\mu(\lambda) \prod_{x_n \in \, c_i} \xi_{x_n|X_n}(\lambda) \]
\end{definition}
	Naturally, we say that a marginal model is contextual if it is not noncontextual. Note that the definition of locality is the same, with the restriction that the marginal scenario is actually a Bell scenario; analogously, we say that a marginal model is nonlocal if it is not local.
	
	Having definition \ref{def:contextualityfinal}, we can state and prove the generalized Fine theorem that motivates it\footnote{It was first considered by Liang \etal \cite{liang11} and proved by Abramsky \etal \cite{abramsky11}.}\footnote{In fact, the motivation is so strong that some prefer to consider definition \ref{def:contextualityfinal} as defining ``objective reality'' instead of noncontextuality \cite{kurzynski12}. Although we agree that this interpretation is not inappropriate, we prefer to avoid such dramatic terms.}:
\begin{theorem}[Fine \cite{fine82,liang11,abramsky11}]\label{teo:fine}
A marginal model $\mathcal{C}$ is noncontextual iff there exists a probability distribution $p(x|\mathcal{X})$ such that for every $C_i \in \mathcal{C}$ \[ p(c_i|C_i) = \sum_{x\setminus c_i} p(x|\mathcal{X}) \]
\end{theorem}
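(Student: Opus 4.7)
The proof splits naturally into two directions, with the forward implication (global distribution exists $\Rightarrow$ noncontextual) being essentially a construction and the reverse being a straightforward marginalization.

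For the easier direction, assume a global distribution $p(x|\mathcal{X})$ exists whose marginals agree with the $p(c_i|C_i)$. I would take the ontic state space $\Lambda$ to be the (finite) set of all value assignments $\lambda:\mathcal{X}\to\text{outcomes}$, set $\mu(\lambda)=p(\lambda|\mathcal{X})$, and define the response functions deterministically by $\xi_{x_i|X_i}(\lambda)=\delta_{\lambda(X_i),\,x_i}$. Then for any $C_i\in\marginal$,
\[ \int_\Lambda\dint\lambda\,\mu(\lambda)\prod_{x_n\in c_i}\xi_{x_n|X_n}(\lambda) = \sum_\lambda p(\lambda|\mathcal{X})\prod_{X_n\in C_i}\delta_{\lambda(X_n),x_n} = \sum_{x\setminus c_i}p(x|\mathcal{X}) = p(c_i|C_i), \]
giving noncontextuality in the sense of definition \ref{def:contextualityfinal}.

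For the converse, assume the marginal model is noncontextual with some $\mu(\lambda)$ and response functions $\xi_{x_i|X_i}(\lambda)$. The obvious candidate for the global distribution is
\[ p(x|\mathcal{X}) = \int_\Lambda\dint\lambda\,\mu(\lambda)\prod_{X_n\in\mathcal{X}}\xi_{x_n|X_n}(\lambda). \]
I would first verify that this is a genuine probability distribution: positivity is immediate from positivity of $\mu$ and the $\xi_{x_n|X_n}$, and normalization follows by pulling the sum over $x$ inside the integral and using $\sum_{x_n}\xi_{x_n|X_n}(\lambda)=1$ pointwise in $\lambda$ for each $X_n$. Then for any context $C_i$, marginalizing over the variables outside $C_i$ and using $\sum_{x_m}\xi_{x_m|X_m}(\lambda)=1$ for each $X_m\notin C_i$ collapses the product to the factors over $c_i$, recovering exactly the noncontextuality integral, which by hypothesis equals $p(c_i|C_i)$.

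The only subtle point is purely notational: the definition of noncontextuality allows continuous $\Lambda$ and arbitrary (non-deterministic) response functions, whereas the global distribution lives on the discrete outcome space. The forward direction shows that without loss of generality one may always take $\Lambda$ finite and the $\xi$'s deterministic, so there is no real obstacle — the apparent extra freedom of randomized response functions is redundant once a global joint distribution is in hand. I would flag this explicitly as the conceptual payoff of the theorem: it retroactively justifies the common restriction to deterministic hidden variables, and shows that definitions \ref{def:contextualityfinal} and the existence of a joint distribution are two equivalent formalizations of the same classical intuition.
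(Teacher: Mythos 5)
Your proposal is correct and follows essentially the same route as the paper's proof: the converse direction defines the global distribution by the identical integral formula and marginalizes using $\sum_{x_n}\xi_{x_n|X_n}(\lambda)=1$, while your forward direction (taking $\Lambda$ to be the set of deterministic value assignments weighted by $p$) is just a more explicit rendering of the paper's step of decomposing $p(x|\mathcal{X})$ into deterministic points and invoking the factorizability lemma, which your Kronecker-delta construction makes automatic. Your closing remark about deterministic response functions being sufficient is exactly the content of the paper's corollary \ref{cor:fine}.
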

\begin{proof}
\mbox{ }
\begin{description}
	\item[$\Rightarrow$] By noncontextuality, there are response functions $\xi_{x_i|X_i}(\lambda)$ and a probability distribution $\mu(\lambda)$ such that for every $C_i \in \mathcal{C}$
	\[ p(c_i|C_i) = \int_\Lambda \dint\lambda\,\mu(\lambda) \prod_{x_n \in \, c_i} \xi_{x_n|X_n}(\lambda). \]
	Define 
	\[ p(x|\mathcal{X}) = \int_\Lambda \dint\lambda\,\mu(\lambda)\prod_{x_n \in\, x}\xi_{x_n|X_n}(\lambda) \]
	Then any marginal $p(c_i|C_i)$ is given by 
	\begin{align*}
	p(c_i|C_i) &= \sum_{x\setminus c_i} p(x|\mathcal{X}) \\
		   &= \int_\Lambda \dint\lambda\,\mu(\lambda) \sum_{x\setminus c_i}\prod_{x_n \in\, x}\xi_{x_n|X_n}(\lambda) \\
		   &= \int_\Lambda \dint\lambda\,\mu(\lambda)\prod_{x_n \in\, c_i} \xi_{x_n|X_n}(\lambda)
	\end{align*}
	\item[$\Leftarrow$] Every probability distribution $p(x|\mathcal X)$ can be written as a convex combination of deterministic points, so let
	\[ p(x|\mathcal X) = \int_\Lambda\dint\lambda\,\mu(\lambda) \xi_{x|\mathcal X}(\lambda). \]
	Since deterministic probability distributions are factorizable (lemma \ref{lem:fatorabilidade}), we can write
	\[ p(x|\mathcal X) = \int_\Lambda\dint\lambda\,\mu(\lambda) \prod_{x_n\in \mathcal X} \xi_{x_n|X_n}(\lambda). \]
	By assumption, $p(c_i|C_i) = \sum_{x\setminus c_i} p(x|\mathcal{X})$, so 
	\begin{align*}
	p(c_i|C_i) &= \int_\Lambda\dint\lambda\,\mu(\lambda) \sum_{x\setminus c_i} \prod_{x_n\in \mathcal X} \xi_{x_n|X_n}(\lambda) \\
		   &= \int_\Lambda \dint\lambda\,\mu(\lambda) \prod_{x_n \in \, c_i} \xi_{x_n|X_n}(\lambda).
	\end{align*}
\end{description}
\end{proof}
	Note that in the proof of the Fine theorem we can choose the response functions $\xi_{x_n|X_n}(\lambda)$ to be always deterministic, so
	\begin{corollary}\label{cor:fine}
	A marginal model is noncontextual if and only if there are deterministic response functions $\xi_{x_i|X_i}(\lambda)$ and a probability distribution $\mu(\lambda)$ such that for every $C_i \in \mathcal{C}$
	\[ p(c_i|C_i) = \int_\Lambda \dint\lambda\,\mu(\lambda) \prod_{x_n \in \, c_i} \xi_{x_n|X_n}(\lambda) \]
	\end{corollary}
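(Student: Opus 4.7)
The plan is to leverage Theorem \ref{teo:fine} (the generalized Fine theorem) together with Lemma \ref{lem:fatorabilidade}, since both directions of the corollary should follow almost immediately from what has already been established.

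The ($\Leftarrow$) direction is trivial: deterministic response functions are a particular case of arbitrary response functions valued in $[0,1]$, so the hypothesis is a special case of Definition \ref{def:contextualityfinal} and the model is noncontextual.

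For the ($\Rightarrow$) direction, I would first invoke Theorem \ref{teo:fine} to obtain a global probability distribution $p(x|\mathcal{X})$ on the full set of variables such that each $p(c_i|C_i)$ is recovered as a marginal. The key observation is then that the set of probability distributions over the finite outcome set of $\mathcal{X}$ is a simplex whose extreme points are precisely the deterministic assignments; consequently $p(x|\mathcal{X})$ admits a convex decomposition
\[ p(x|\mathcal{X}) = \int_\Lambda \dint\lambda\,\mu(\lambda) \xi_{x|\mathcal{X}}(\lambda), \]
where each $\xi_{\cdot|\mathcal{X}}(\lambda)$ is a $\{0,1\}$-valued assignment summing to one. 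Because each such deterministic distribution selects a single value for every $X_n$, Lemma \ref{lem:fatorabilidade} (applied iteratively over the variables in $\mathcal{X}$) gives the factorization $\xi_{x|\mathcal{X}}(\lambda) = \prod_{x_n \in x} \xi_{x_n|X_n}(\lambda)$ with each factor deterministic. Marginalizing to an arbitrary context $C_i \in \mathcal{C}$ then yields
\[ p(c_i|C_i) = \int_\Lambda \dint\lambda\,\mu(\lambda) \prod_{x_n \in c_i} \xi_{x_n|X_n}(\lambda), \]
which is exactly the required representation.

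There is no real obstacle here; the corollary is essentially a rereading of the ($\Leftarrow$) half of the proof of Theorem \ref{teo:fine}, which already constructed deterministic response functions from the joint distribution. The only mild technical point is the convex-decomposition step in the potentially continuous $\Lambda$, but this is standard for simplices with finitely many extreme points and requires no new input.
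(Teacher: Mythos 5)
Your proposal is correct and is essentially the paper's own argument: the paper proves the corollary by simply observing that the ($\Leftarrow$) half of the proof of Theorem \ref{teo:fine} already decomposes the global distribution into deterministic points and factorizes them via Lemma \ref{lem:fatorabilidade} into deterministic single-variable response functions. Your write-up just makes that observation explicit, including the trivial converse direction.
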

	This corollary can be viewed as an alternative (equivalent) definition of noncontextuality.

	Now, we can finally state the problem of separating between classical and quantum:
	
\begin{problem}[Marginal problem]\label{pro:marginal}
How to decide whether a given marginal model is noncontextual or contextual?
\end{problem}

	This formulation of the problem makes its mathematical treatment much easier, since there is extensive literature (and software) on solving the marginal problem. But perhaps its greatest contribution is ending the debate on whether contextuality can or not be observed in a laboratory: one measures a marginal model, and then it is just a mathematical question whether it is contextual or not. The ``finite-precision'' \cite{meyer99,kent99} loophole is just not relevant in this formulation, as the set of contextual marginal models has non-empty interior.

\section{A first example}\label{sec:os}

	If we only have two random variables, there's nothing interesting to be done, since either we already have the global distribution, or we can generate it simply by defining\footnote{A moment's thought will convince you that if the marginal scenario contains only the singletons $X_n$, we can always do this and prove that it is noncontextual.} $p(x_0,x_1|X_0,X_1) = p(x_0|X_0)p(x_1|X_1)$, so the simplest nontrivial scenario must contain at least three random variables. In fact, there is a nice little example of it, taken from \cite{liang11}, which took it from Specker's parable of the over-protective seer, that can be found in \cite{specker60,seevinck11}. In it, we have three binary random variables $X_0$, $X_1$, and $X_2$ that are measured pairwise, and found to be always anti-correlated. Formalizing it, the marginal scenario is 
	\[ \mathcal{OS} = \{ \{X_0\},\{X_1\},\{X_2\},\{X_0,X_1\},\{X_1,X_2\},\{X_2,X_0\}\},\]
	and its marginal model $\mathcal{OS}p$ is (with a slight abuse of notation)
	\begin{multline}\label{eq:ospruim}
	\mathcal{OS}p = ( p(x_0|X_0),p(x_1|X_1),p(x_2|X_2),\\p(x_0,x_1|X_0,X_1),p(x_1,x_2|X_1,X_2),p(x_2,x_0|X_2,X_0)),
	\end{multline}
	which for convenience we arrange in the following tables:
\begingroup\renewcommand*{\arraystretch}{1.5}

	\[ \begin{matrix}
				&X_0 	&	X_1	&	X_2 \\ \cline{2-4}
			p(+) & \frac{1}{2} & \frac{1}{2} & \frac{1}{2} \\
			p(-) & \frac{1}{2} & \frac{1}{2} & \frac{1}{2} \\
	\end{matrix}\quad\quad\begin{matrix} &X_0,X_1&X_1,X_2&X_2,X_0 \\ \cline{2-4}
			p(+,+)  &0 & 0 & 0 \\ 
			p(+,-)  &\frac{1}{2} & \frac{1}{2} & \frac{1}{2} \\
			p(-,+)  &\frac{1}{2} & \frac{1}{2} & \frac{1}{2} \\
			p(-,-)  &0 & 0 & 0 \\
	\end{matrix} \]
\endgroup
	To see that this marginal model is contextual, we shall use the Fine theorem (theorem \ref{teo:fine}), as in \cite{liang11}, by showing that there can be no global probability distribution $p(x|X)$ with these marginals.
	\begin{theorem}\label{teo:contextualdirect}
	The marginal model $\mathcal{OS}p$ is contextual.
	\end{theorem}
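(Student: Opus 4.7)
The plan is to apply the Fine theorem (theorem \ref{teo:fine}) directly: I would suppose for contradiction that a global distribution $p(x_0,x_1,x_2|X_0,X_1,X_2)$ exists whose three pairwise marginals coincide with the ones in the table, and then derive an impossibility via a simple pigeonhole argument.

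First I would observe that the pairwise constraints $p(+,+|X_i,X_j) = p(-,-|X_i,X_j) = 0$ for every pair $i \neq j$ force the global distribution to be supported only on triples $(x_0,x_1,x_2) \in \{+,-\}^3$ in which every pair of coordinates takes different values. Indeed, marginalizing over a third variable, any triple with $x_i = x_j$ would contribute a strictly positive weight to $p(x_i,x_j|X_i,X_j)$ with $x_i = x_j$, contradicting the zero entries.

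Next, I would note that no such triple exists: among three binary values, the pigeonhole principle guarantees that at least two coordinates must coincide. Hence the support of any candidate global distribution is empty, which is incompatible with $p$ being a probability distribution (which must sum to one over $\{+,-\}^3$). By the Fine theorem this rules out the existence of a noncontextual model, so $\mathcal{OS}p$ is contextual.

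There is essentially no obstacle here; the argument is purely combinatorial once Fine's theorem is invoked, and the only thing worth being careful about is to verify that the singleton marginals $p(x_i|X_i) = 1/2$ are automatically consistent with any global distribution respecting the pairwise constraints, so the contradiction really comes from the three anti-correlation conditions alone (and in fact from any odd number of such conditions on a cycle, which hints at the Boole-type inequalities developed later).
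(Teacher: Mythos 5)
Your argument is correct and is essentially the paper's own proof: both deduce from the zero entries $p(+,+|X_i,X_j)=p(-,-|X_i,X_j)=0$ that every triple in $\{+,-\}^3$ must receive zero weight in any candidate global distribution, and then invoke the Fine theorem. The pigeonhole phrasing is just a compact way of stating what the paper does marginal by marginal.
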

	\begin{proof}
	$p(+,+|X_0,X_1)=0$ implies that both $p(+,+,+|X_0,X_1,X_2)$ and \\$p(+,+,-|X_0,X_1,X_2)$ must be zero. Proceeding in this way with the other marginals, we can show that all $p(x_0,x_1,x_2|X_0,X_1,X_2)$ are zero, an absurd. So there is no global probability distribution and by theorem \ref{teo:fine} $\mathcal{OS}p$ is contextual.
	\end{proof}
	An interesting question is then whether this contextual marginal model can be used as a proof of contextuality for quantum mechanics. Unfortunately this is not the case, as it requires all three products of observables $X_iX_j$ to be measurable; in quantum mechanics this means that they must commute, and therefore the observable $X_0X_1X_2$ must be measurable, giving rise to the joint probability distribution that must not exist. A marginal scenario with three random variables that averts this problem is 
	\[ \mathcal{V} = \{ \{X_0\},\{X_1\},\{X_2\},\{X_0,X_1\},\{X_1,X_2\}\},\]
	since it is perfectly possible that $X_1$ commutes with both $X_0$ and $X_2$, but $X_0$ and $X_2$ does not commute. But this marginal scenario is even more trivial than the previous one, since there is always a noncontextual marginal model for it\footnote{Since we can just define $p(x_0,x_1,x_2|X_0,X_1,X_2)= p(x_0,x_1|X_0,X_1)p(x_1,x_2|X_1,X_2)/p(x_1|X_1)$.}. As these two are the only nontrivial marginal scenario with three random variables, we must have at least four random variables if we want a contextual marginal model realizable within quantum mechanics, and in fact there exists one. To be able to explore it, though, we need a bit more structure, since a direct proof of contextuality \textit{à la} theorem \ref{teo:contextualdirect} can be done only for the simplest cases. In the next section, we shall develop a general algorithm to decide whether a given marginal model is contextual or not.

\section{Boole inequalities}

	\epigraph{When satisfied they indicate that the data \emph{may} have, when not satisfied they indicate that the data \emph{cannot} have, resulted from actual observation}{George Boole \cite{boole62}}

	To be able to solve problem \ref{pro:marginal}, we shall first take a step back and examine its geometry. We shall see that the sets of marginal models are convex polytopes, and these can be described by a finite set of linear inequalities, and so the question of whether a given marginal model is contextual or not is reduced to checking if it satisfies all the inequalities for its marginal scenario. This can be done efficiently, but with two caveats: obtaining the inequalities for a given scenario is a difficult problem (albeit one that can be done by software), and the number of inequalities for a marginal scenario may increase exponentially with the number of contexts\footnote{As in the example of section \ref{sec:nciclo}.}.

	In this section we shall need a number of basic results in convex geometry, which we shall make no attempt to prove. Instead, we refer the interested reader to the excellent book ``Lectures on Polytopes'' \cite{ziegler94}.

	\subsection{Sets of marginal models}

	There are for now two sets of marginal models that interests us: the set of all marginal models, and the set of noncontextual marginal models. We shall see that both are convex polytopes.
	\begin{definition}[Convex polytope]
	A convex polytope is a bounded intersection of closed halfspaces.
	\end{definition}
	\begin{theorem}
	The set of all marginal models for a given marginal scenario is a convex polytope.
	\end{theorem}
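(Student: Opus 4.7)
The plan is to identify the set of marginal models with a subset of a finite-dimensional real vector space carved out by finitely many linear (in)equalities, and then check boundedness. First I would fix the marginal scenario $\marginal = \{C_0,\ldots,C_{n-1}\}$ and, since each random variable $X_n\in\mathcal X$ takes finitely many values, observe that each context $C_i$ has a finite outcome set. Any marginal model is therefore specified by a finite tuple of real numbers $p(c_i|C_i)$, one per pair $(C_i, c_i)$, which embeds the set of marginal models into some $\R^N$.

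Next I would write down all the defining constraints and note that they are linear in these coordinates: (i) positivity $p(c_i|C_i)\ge 0$, giving one closed halfspace per coordinate; (ii) normalization $\sum_{c_i} p(c_i|C_i) = 1$ for each $C_i$; and (iii) the no-disturbance compatibility condition from Definition \ref{def:marginalmodel}, $\sum_{c_j\setminus c_i} p(c_j|C_j) = p(c_i|C_i)$ whenever $C_i\subseteq C_j$. Both (ii) and (iii) are linear equalities, each of which is the intersection of two opposite closed halfspaces. Since $\marginal$ and all its contexts are finite, the total number of such halfspaces is finite.

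Boundedness is immediate: positivity together with normalization forces $0\le p(c_i|C_i)\le 1$ for every coordinate, so the set sits inside the unit cube $[0,1]^N$. Combining the three points, the set of marginal models is a bounded intersection of finitely many closed halfspaces in $\R^N$, which is exactly the definition of a convex polytope.

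I do not expect any real obstacle; the only subtlety worth flagging is the implicit finiteness assumption on the outcome sets of the random variables, which guarantees that the ambient space is finite-dimensional and that the list of equalities and inequalities is finite. If one wished to allow infinitely many outcomes the statement would have to be weakened to ``convex set'' rather than ``convex polytope'', but within the setup established in the previous section finiteness is implicit, so the argument as sketched suffices.
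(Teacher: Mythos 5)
Your proposal is correct and follows essentially the same route as the paper: encode positivity, normalization, and no-disturbance as finitely many linear (in)equalities, hence closed halfspaces, and obtain boundedness from the coordinates lying in $[0,1]$. The only difference is that you make the finiteness of the outcome sets and the embedding into $\R^N$ explicit, which the paper leaves implicit.
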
	
	\begin{proof}
	Consider the marginal scenario \[ \marginal = \{C_1,\ldots,C_N\}, \]and a marginal model\[\marginal p = (p(c_1|C_1),\ldots,p(c_N|C_N)).\] The fact that each $p(c_i|C_i)$ is a probability distribution is encoded by the linear inequalities\footnote{Remember that the equality $x = k$ is just the combination of the inequalities $x \le k$ and $x \ge k$.} $p(c_i|C_i) \ge 0$ and $\sum_{c_i}p(c_i|C_i) = 1$, and the fact that this set of probability distributions is a marginal model is encoded by the no-disturbance condition expressed in the definition \ref{def:nodisturbance}, which is just another set of linear inequalities. It remains to show that the set is bounded, but this follows from the fact that each element of $\marginal p$ belongs to $[0,1]$.
	\end{proof}
	We shall call the set of all marginal models the \textit{no-disturbance polytope}.

	To see that the set of noncontextual marginal models is also a convex polytope, it is easier to use another equivalent\footnote{The proof of their equivalence is the famous Minkowski-Weyl theorem.} definition of convex polytopes:
	\begin{definition}[Convex polytope]
	A convex polytope is the convex hull of a finite set of points in some $\R^n$.	
	\end{definition}
	\begin{theorem}
	The set of all noncontextual marginal models for a given marginal is a convex polytope.
	\end{theorem}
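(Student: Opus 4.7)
The plan is to invoke Corollary~\ref{cor:fine}, the deterministic version of the Fine theorem just proved, and then to show that noncontextual marginal models are exactly the convex combinations of finitely many ``deterministic'' marginal models, each associated with a deterministic assignment of outcomes to all the random variables in $\mathcal{X}$. This immediately matches the second (convex-hull) definition of convex polytope recalled above, which is the cleaner characterization to use here.

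More concretely, I would first observe that since each random variable $X_n \in \mathcal{X}$ has finitely many outcomes (we are in a finite, discrete setting), the set of deterministic response functions $\xi_{x_n|X_n}(\lambda) \in \{0,1\}$ satisfying $\sum_{x_n} \xi_{x_n|X_n}(\lambda) = 1$ is finite; equivalently, there are only finitely many global deterministic assignments $\lambda \colon \mathcal{X} \to \prod_n \{\text{outcomes of }X_n\}$. Each such $\lambda$ produces, via Corollary~\ref{cor:fine}, a particular noncontextual marginal model
\[ \marginal p_\lambda = \De{\prod_{x_n \in c_i} \xi_{x_n|X_n}(\lambda)}_{C_i \in \marginal}, \]
which is a single point in the ambient real vector space in which we represent marginal models.

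Next I would use Corollary~\ref{cor:fine} in both directions. Given any noncontextual marginal model $\marginal p$, the corollary supplies a probability distribution $\mu$ over these finitely many deterministic $\lambda$'s such that $\marginal p = \sum_\lambda \mu(\lambda)\,\marginal p_\lambda$, showing that $\marginal p$ lies in the convex hull of the finite set $\{\marginal p_\lambda\}_\lambda$. Conversely, any convex combination $\sum_\lambda \mu(\lambda) \marginal p_\lambda$ is a noncontextual marginal model, again by Corollary~\ref{cor:fine}, since $\mu$ is then a bona fide probability distribution on the (now discrete) $\Lambda$. Hence the noncontextual set equals $\conv\{\marginal p_\lambda\}_\lambda$, the convex hull of a finite set of points, and is therefore a convex polytope by the second definition.

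There is no real obstacle here beyond being careful that the finiteness of the outcome sets is what makes ``finitely many deterministic $\lambda$'s'' legitimate; everything else is just reading off Corollary~\ref{cor:fine}. One small thing worth flagging is that the resulting polytope lives naturally inside the no-disturbance polytope of the previous theorem, which is the geometric setup needed for the subsequent discussion of Boole inequalities.
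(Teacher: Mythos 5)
Your proof is correct and follows essentially the same route as the paper: invoke Corollary~\ref{cor:fine} to write any noncontextual marginal model as a convex combination of the finitely many deterministic marginal models, and conclude via the convex-hull characterization of polytopes. The only difference is that you explicitly verify the converse inclusion (that every convex combination of deterministic points is itself noncontextual), which the paper leaves implicit; this is a minor but welcome bit of extra care.
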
	
	\begin{proof}
	Consider the marginal scenario \[ \marginal = \{C_1,\ldots,C_N\}, \]and a marginal model \[\marginal p = (p(c_1|C_1),\ldots,p(c_N|C_N)).\]
	By the corollary \ref{cor:fine} of the Fine theorem \ref{teo:fine}, there is a probability distribution $\mu(\lambda)$ and deterministic response functions $\xi_{x_n|X_n}(\lambda)$ such that \[ p(c_i|C_i) = \int_\Lambda \dint\lambda\,\mu(\lambda) \prod_{x_n \in \, c_i} \xi_{x_n|X_n}(\lambda), \]
	and so \[\marginal p = \int_\Lambda \dint\lambda\,\mu(\lambda) \de{\prod_{x_n \in \, c_1} \xi_{x_n|X_n}(\lambda),\ldots,\prod_{x_n \in \, c_N} \xi_{x_n|X_n}(\lambda)},\]
	that is, $\marginal p$ is a convex combination of the points \[\de{\prod_{x_n \in \, c_1} \xi_{x_n|X_n}(\lambda),\ldots,\prod_{x_n \in \, c_N} \xi_{x_n|X_n}(\lambda)}.\] Since the response functions are deterministic and we are dealing with a finite number of dichotomic random variables, the number of different points is finite, and so a marginal model is the convex combination of a finite number of points.
	\end{proof}
	Analogously, the set of all noncontextual marginal models shall be called the \textit{noncontextual polytope}.

	As a consequence of this proof, we see that the vertices of the noncontextual polytope are simply the deterministic probability distributions for the outcomes of each context, and as such they are trivial to find. What we want to do, then, is from this list of vertices obtain the linear inequalities that describe the noncontextual polytopes. This is a classical problem in convex geometry, and there are plenty of algorithms and software for solving it. Here we shall use the reverse search algorithm, due to Avis and Fukuda \cite{avis92}, as implemented in the software \texttt{lrs} \cite{lrs}. Following Itamar Pitowsky, we call these \textit{Boole inequalities}.

	Before exploring them, we need a refinement in our representation of marginal models.

	\subsection{Representing marginal models}

	When writing down a marginal model, such as \eqref{eq:ospruim}, one immediately notices that it has a lot of redundancies. First of all, the joint probability distributions of a context completely determines its marginals, since a marginal model respects no-disturbance by definition. Furthermore, for each context there is one parameter that is already determined by normalization, and finally each random variable is usually shared by two or more contexts, so the joint probability distributions of different contexts are not independent, as they might share some marginals.

	All these reasons motivates us to find another representation of a marginal model, that already incorporates normalization and no-disturbance. When using only dichotomic random variables (as we shall do in this thesis), the best representation is via the expectation value of each context, as they contain all the information of a marginal model with no redundancies.

	\begin{theorem}
	For dichotomic random variables, a marginal model can be represented by the expectation values of all contexts with no redundancies.
	\end{theorem}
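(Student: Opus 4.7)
The plan is to use the fact that, for $\pm 1$-valued dichotomic random variables, the Fourier--Walsh transform on the Boolean cube $\{-1,+1\}^k$ provides an invertible linear change of basis between joint probability distributions on a context and the expectation values of products over its subsets. Concretely, for each $C \in \marginal$ define $X_C := \prod_{X_i \in C} X_i$ (with $X_\emptyset := 1$). The representation I want to establish is the assignment $\marginal p \mapsto (\mean{X_C})_{C \in \marginal}$.

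First I would show that the map is well-defined. For a single context $C$, the expectation $\mean{X_C} = \sum_{x} p(x|C) \prod_i x_i$ is a linear functional of $p(\cdot|C)$. If a context $S$ is contained in two contexts $C_i, C_j \in \marginal$, then by definition \ref{def:marginalmodel} both $p(\cdot|C_i)$ and $p(\cdot|C_j)$ marginalize to the same $p(\cdot|S)$, so computing $\mean{X_S}$ from either yields the same number. Thus a single value $\mean{X_S}$ per context $S \in \marginal$ captures everything without double counting, and the no-disturbance condition disappears from sight: it is automatically enforced by having one scalar per context.

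Next I would establish invertibility via the Fourier inversion formula on the Boolean cube:
\[
p(x|C) \;=\; \frac{1}{2^{|C|}} \sum_{S \subseteq C} \mean{X_S}\,\prod_{X_i \in S} x_i .
\]
Since $\marginal$ is closed under taking subsets, every $S \subseteq C$ belongs to $\marginal$, so the right-hand side only uses expectation values that are part of our representation. This shows how to recover each probability distribution $p(\cdot|C)$ from the expectation values, completing the bijection between marginal models and vectors $(\mean{X_C})_{C \in \marginal}$.

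Finally, the absence of redundancy follows from a parameter count: within a single maximal context $C$ of size $k$, the $2^k$ probabilities have exactly $2^k - 1$ degrees of freedom after normalization, and the $2^k - 1$ non-empty subsets of $C$ contribute exactly one expectation value each; the Fourier transform is invertible, so these parameters are in bijection. When contexts overlap, shared subsets contribute a single shared expectation value, exactly mirroring the no-disturbance identifications. The main subtlety, which I would treat carefully, is precisely this bookkeeping step: arguing that the global parameter count across all of $\marginal$ (with overlaps accounted for only once) matches the count after imposing normalization and no-disturbance on the raw probability representation, so that no further consistency constraints can arise among the $\mean{X_C}$.
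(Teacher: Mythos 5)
Your proposal is correct and is essentially the paper's own argument in different clothing: the Fourier--Walsh transform on $\{-1,+1\}^{|C|}$ is exactly the Hadamard matrix $H_n = H_1^{\otimes n}$ the paper uses, and your inversion formula $p(x|C) = 2^{-|C|}\sum_{S\subseteq C}\mean{X_S}\prod_{X_i\in S}x_i$ is the paper's observation that $H_n^2 = 2^n\id$. The only difference is that you spell out the cross-context bookkeeping (well-definedness of shared $\mean{X_S}$ via no-disturbance, and the global parameter count) more explicitly than the paper does, which is a welcome clarification but not a different route.
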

	\begin{proof}
	To check that, it is enough to see that all the information present on the marginal model is preserved when it is translated into expected values, \ie, there is a (linear) invertible transformation between a marginal model and a vector of all the allowed expected values. Consider, for instance, the joint probability distribution for the context $\{X_0,X_1\}$. The transformation is
	\begin{equation}\label{eq:representationmarginal}\begin{pmatrix}
			1 & 1 & 1 & 1 \\
			1 & -1 & 1 & -1 \\
			1 & 1 & -1 & -1 \\
			1 & -1 & -1 & 1 \\
	\end{pmatrix} \begin{pmatrix}
			p(+,+|X_0,X_1) \\
			p(+,-|X_0,X_1) \\
			p(-,+|X_0,X_1) \\
			p(-,-|X_0,X_1) \\
	\end{pmatrix} = \begin{pmatrix}
			1 \\
			\mean{X_1}\\
			\mean{X_0} \\
			\mean{X_0X_1} \\
	\end{pmatrix},
	\end{equation}
	and inversibility comes from the fact that the matrix is proportional to its inverse. 

	The proof for contexts with more than two random variables comes from noticing that the matrix which does the linear transformation is a Hadamard matrix\footnote{Thanks to Daniel Jonathan for pointing this out.}. Specifically, the transformation for $n$ random variables can be recursively defined as follows: Let \[H_1 = \begin{pmatrix}
			1 & 1 \\
			1 & -1 \\
	\end{pmatrix},\]
	and define $H_n = H_1 \otimes H_{n-1}$. Then it is easy to check that $H_n$ is always self-adjoint and $H_n^2 = 2^n \id$. Furthermore, if the vector of probabilities is ordered in the obvious binary way, the vector of expected values will have a corresponding order, \ie, its $k$th element will be $\mean{X_0^{a_0}X_1^{a_1}\ldots X_{n-1}^{a_{n-1}}}$, where $a_0a_1\ldots a_{n-1}$ is the binary expansion of $k$.
	\end{proof}

	As this representation already assumes normalization and no-disturbance, the only information that it lacks is positivity. Since positivity does not reduce the number of dimensions, it is not possible to find a representation that already assumes it. Instead, one enforces it via the inequalities 
	\begin{subequations}\label{eq:positivity}	
	\begin{align} 
		\label{eq:um}		4p(+,+|X_0,X_1) = 1 + \mean{X_0} + \mean{X_1} + \mean{X_0X_1} \ge 0 \\
		\label{eq:dois}		4p(-,+|X_0,X_1) = 1 - \mean{X_0} + \mean{X_1} - \mean{X_0X_1} \ge 0 \\
		\label{eq:tres}		4p(+,-|X_0,X_1) = 1 + \mean{X_0} - \mean{X_1} - \mean{X_0X_1} \ge 0 \\
		\label{eq:quatro}	4p(-,-|X_0,X_1) = 1 - \mean{X_0} - \mean{X_1} + \mean{X_0X_1} \ge 0
	\end{align}
	\end{subequations}
	which are obtained by inverting transformation \eqref{eq:representationmarginal}.

	Using this representation also gives us some notational convenience: since we have one expected value for each context, we can define a marginal model simply by assigning one expected value for each context in a marginal scenario. For example, the marginal model for the marginal scenario 
	\[ \mathcal{OS} = \{ \{X_0\},\{X_1\},\{X_2\},\{X_0,X_1\},\{X_1,X_2\},\{X_2,X_0\}\},\]
	originally written as \eqref{eq:ospruim}, shall be 
	\begin{equation}\label{eq:ospgeral}
	\mathcal{OS}p = (\mean{X_0},\mean{X_1},\mean{X_2},\mean{X_0X_1},\mean{X_1X_2},\mean{X_2X_0}),
	\end{equation}
	which is easily calculated as
	\begin{equation}\label{eq:osp}
	\mathcal{OS}p = (0, 0, 0, -1, -1, -1).
	\end{equation}
	Another advantage of this representation is that we can easily see which statistics that indicate correlations between random variables, such as $\mean{X_iX_j}$, and which only talk about individual systems, such as $\mean{X_i}$. We shall see that it is quite common to study inequalities that only take into account correlations between random variables\footnote{In fact, only these shall be studied in this thesis.}: these are called \textit{full-correlation} inequalities. When talking about contexts with more than two random variables, this name is applied only to inequalities that take into account the largest possible contexts.

\subsection{\texorpdfstring{The noncontextual polytope for $\mathcal{OS}$}{The noncontextual polytope for OS}}

	Now that we have a good representation, we can discuss the first example of Boole inequalities. We shall obtain them for the marginal scenario $\mathcal{OS}$. The first thing we need are the vertices of the noncontextual polytope, which are simply the $2^3$ deterministic assignments $\pm1$ to each random variable $\mean{X_i}$. Written in the ordering given by equation \eqref{eq:ospgeral}, they are
	\begin{align*}
	(+,+,+,+,+,+) & & (-,+,+,-,+,-) \\ 
	(+,+,-,+,-,-) & & (-,+,-,-,-,+) \\
	(+,-,+,-,-,+) & & (-,-,+,+,-,-) \\
	(+,-,-,-,+,-) & & (-,-,-,+,+,+)
	\end{align*}
	where for clarity we have omitted the ones. Inputting these vertices into \texttt{lrs}\footnote{For those that do not like this kind of proof, we shall obtain these same inequalities in the next section via a parity argument.}, it returns 16 inequalities to us: 12 are the positivity conditions \eqref{eq:positivity} for each pair of random variables, and 4 are the Boole inequalities
	\begin{subequations}\label{eq:ostodas}
	\begin{align}
	\label{eq:osspekkens}
	-\mean{X_0X_1} - \mean{X_1X_2} - \mean{X_2X_0} &\le 1 \\ 
	-\mean{X_0X_1} + \mean{X_1X_2} + \mean{X_2X_0} &\le 1 \\
	+\mean{X_0X_1} - \mean{X_1X_2} + \mean{X_2X_0} &\le 1 \\
	+\mean{X_0X_1} + \mean{X_1X_2} - \mean{X_2X_0} &\le 1
	\end{align}
	\end{subequations}
	The marginal model $\mathcal{OS}p$, equation \eqref{eq:osp}, is then easily seen to violate inequality \eqref{eq:osspekkens}, being thereby contextual.

	Exactly these same inequalities were obtained by Pitowsky using Boole's method \cite{pitowsky94,boole62}.

\begin{figure}[t]
	\centering
	\includegraphics[width=0.58\textwidth]{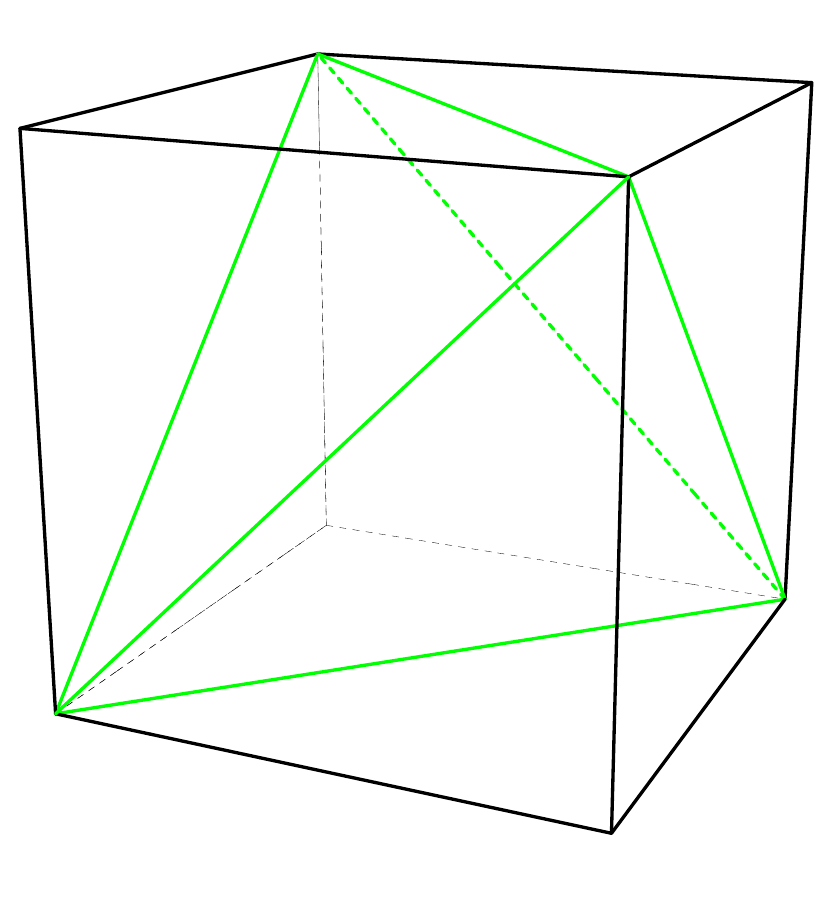}
	\caption{Full correlations parts of the noncontextual polytope -- green tetrahedron -- and no-disturbance polytope -- black cube -- for the marginal scenario $\mathcal{OS}$. Note that this is a projection onto the last three components.}
	\label{fig:3cycle}
\end{figure}

\section{\texorpdfstring{The $n$-cycle}{The n-cycle}}\label{sec:nciclo}

	As we have discussed before, it is not possible to violate the Boole inequalities for the marginal scenario $\mathcal{OS}$ with quantum mechanics. However, there is a natural generalization of this scenario which \emph{does} have a quantum violation. Consider the set of random variables $\mathcal{X} = \{X_0,\ldots,X_{n-1}\}$, and the marginal model $\marginal^n$ formed by considering the singletons ${X_i}$ together with the pairs $\{X_i,X_{i+1}\}$, where naturally the addition is taken modulo $n$. For $n=3$, $\marginal^n$ is the marginal scenario $\mathcal{OS}$ discussed before. For general $n$ this scenario is called the $n$-cycle, as its compatibility\footnote{The graph that has random variables as vertices and edges connect random variables that are in the same context.} graph is a $n$-cycle, as shown in figure \ref{fig:nciclo}.

	\begin{figure}[ht]
	\centering
	\includegraphics[width=1.00\textwidth]{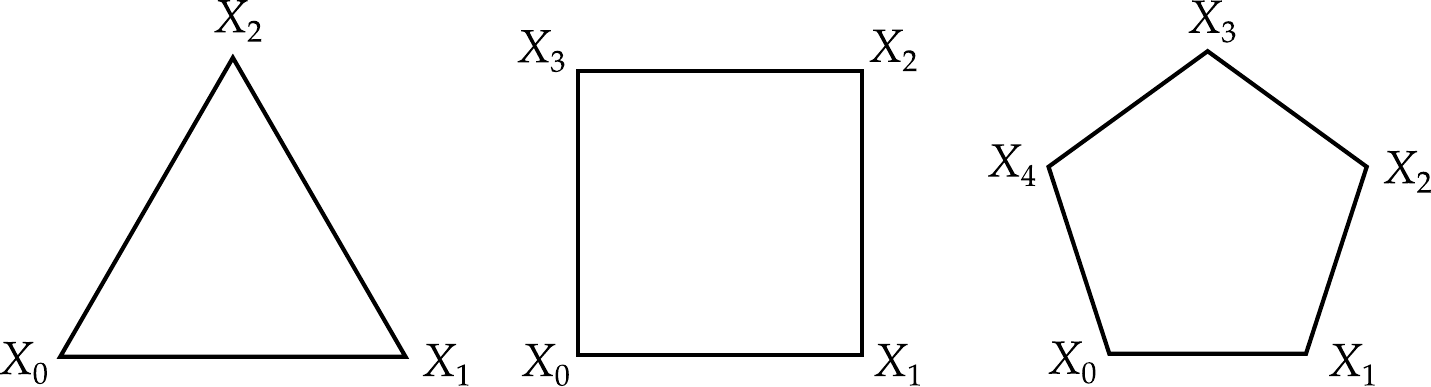}
	\caption{Contexts for the $3$-cycle, $4$-cycle, and $5$-cycle.}
	\label{fig:nciclo}
	\end{figure}

	The $n$-cycle marginal scenario is an old problem that was studied many times. The $3$-cycle\footnote{As we discussed before, in this case the noncontextual polytope coincides with the no-disturbance polytope, and therefore its facets are only the positivity conditions \eqref{eq:positivity}.} was characterized by George Boole in 1862 \cite{boole62,pitowsky94}, who also provided the general algorithm for solving the marginal problem. The $3$-cycle was only studied again almost a hundred years later, by Ernst Specker in 1960 \cite{specker60,seevinck11}, and then by Itamar Pitowsky in 1989 \cite{pitowsky89}. The $4$-cycle was characterized by Arthur Fine in 1982 \cite{fine82}. The $5$-cycle was characterized by Alexander Klyachko in 2002 \cite{klyachko02}. The $n$-cycle for all odd $n$ was studied by Yeong-Cherng Liang, Robert Spekkens, and Howard Wiseman in 2010 \cite{liang11}, and also by Adán Cabello, Simone Severini, and Andreas Winter in the same year \cite{cabello10}. The general $n$-cycle was studied by Rafael Chaves and Tobias Fritz in 2012, who derived entropic inequalities which are necessary but not sufficient for noncontextuality for all $n$ \cite{chaves12,fritz11}. An answer to the general question was conjectured by Cabello \etal in 2012 \cite{cabello12}. It will be given here\footnote{The results of this and the next section are new \cite{araujo12}.}.

	The Boole inequalities for this scenario can be derived from the simple algebraic observation that if $\alpha_i =\pm1$ are the components of a $n$-element vector, then the vector $\beta$ with $n$ components $\beta_i = \alpha_i\alpha_{i+1}$ always has an \emph{even} number of negative components. Therefore, if we define a third vector $\gamma$ with an \emph{odd} number of negative components, then
	\begin{equation}\label{eq:boolealgebrico}
	\prin{\gamma}{\beta} \le n-2,
	\end{equation}
	since to maximize the inner product we should set $\beta = \gamma$, but this would force $\beta$ to have an odd number of negative components, which is impossible. The best we can do then is to switch one of the $-1$ to $+1$, which gives us the desired bound.

	If we now set $\mean{X_i} = \alpha_i$, then $\beta_i = \mean{X_iX_{i+1}}$ is the full-correlation part of the vertices of the noncontextual polytope for this marginal scenario, and inequality \eqref{eq:boolealgebrico} becomes the Boole inequality
	\begin{equation}\label{eq:boolenciclo}
	\chsh_n = \sum_{i=0}^{n-1} \gamma_i \mean{X_iX_{i+1}} \le n-2.
	\end{equation}
	Since these are satisfied by noncontextual vertices, they are also satisfied by the convex combinations of them, and so every noncontextual marginal model respects these inequalities. We claim that these $2^{n-1}$ inequalities are all the Boole inequalities for the $n$-cycle. To prove this, we shall check that these inequalities are actually facets of the noncontextual polytope, and that there are no more Boole inequalities for the $n$-cycle.
	\begin{theorem}
	All inequalities \eqref{eq:boolenciclo} are facets of the noncontextual polytope for the $n$-cycle.
	\end{theorem}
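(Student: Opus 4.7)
The plan is to show that each Boole inequality $\chsh_n \le n-2$ cuts out a face of dimension exactly $2n-1$ in the noncontextual polytope $P \subset \R^{2n}$. Since $P$ turns out to be full-dimensional, this reduces to exhibiting $2n$ affinely independent vertices of $P$ that saturate the inequality. The dimension statement is straightforward: the $2n+1$ monomial functions $1, \alpha_0, \ldots, \alpha_{n-1}, \alpha_0\alpha_1, \ldots, \alpha_{n-1}\alpha_0$ are distinct characters of $(\mathbb{Z}/2)^n$ (via $\{\pm 1\} \simeq \mathbb{Z}/2$), hence linearly independent as real functions on the vertex set $\{\pm 1\}^n$, so no nontrivial affine relation on the coordinates $\mean{X_i}, \mean{X_iX_{i+1}}$ can hold on every vertex.

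Next I reduce to a single $\gamma$ by symmetry. Cyclic relabelling of the $X_i$ and each sign flip $X_i \mapsto -X_i$ (which negates the two adjacent coefficients $\gamma_{i-1}, \gamma_i$) are automorphisms of $P$ that preserve the parity of the number of $-1$'s in $\gamma$, and together they act transitively on odd-parity $\gamma$'s. It therefore suffices to treat $\gamma = (-1, +1, \ldots, +1)$, giving the inequality $-\mean{X_0X_1} + \sum_{i=1}^{n-1}\mean{X_iX_{i+1}} \le n-2$.

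The saturating vertices are then easy to enumerate. A vertex $v_\alpha$ saturates iff $\alpha_i\alpha_{i+1} = \gamma_i$ for every $i$ except exactly one ``discrepant'' index $k$; the ``exactly one'' is forced by the parity clash $\prod_i \gamma_i = -1 \neq 1 = \prod_i \alpha_i\alpha_{i+1}$. For each $k \in \{0, \ldots, n-1\}$ and $\alpha_0 \in \{\pm 1\}$, propagation $\alpha_{i+1} = \pm\alpha_i\gamma_i$ determines $\alpha$ uniquely, giving $2n$ saturating vertices $\{v_{\pm\alpha^{(k)}}\}_{k=0}^{n-1}$, organised into $n$ antipodal pairs since the map $\alpha \mapsto -\alpha$ preserves all products $\alpha_i\alpha_{i+1}$.

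The one nontrivial step is affine independence, which I plan to dispatch by exploiting the antipodal pairing to decouple the singleton and correlator blocks. Taking $v_{\alpha^{(0)}}$ as base point, I change basis from the $2n-1$ differences $v - v_{\alpha^{(0)}}$ to $\{d_0, f_k, e_k\}_{k=1}^{n-1}$ defined by $d_0 = v_{-\alpha^{(0)}} - v_{\alpha^{(0)}}$, $f_k = v_{-\alpha^{(k)}} - v_{\alpha^{(k)}}$, and $e_k = (v_{\alpha^{(k)}} - v_{\alpha^{(0)}}) + (v_{-\alpha^{(k)}} - v_{\alpha^{(0)}}) - d_0$. A direct computation shows $d_0$ and the $f_k$ vanish on the correlator coordinates, equalling $-2\mathbf{1}$ and $-2\alpha^{(k)}$ on the singleton block, while the $e_k$ vanish on the singleton coordinates, with nonzero entries only at positions $0$ and $k$ of the correlator block. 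Independence of the singleton vectors follows from a triangular adjacent-row-subtraction of $\mathbf{1}, \alpha^{(1)}, \ldots, \alpha^{(n-1)}$; independence of the correlator vectors is immediate. The disjoint supports then give total rank $n + (n-1) = 2n-1$, and the facet property follows. The main obstacle is purely organisational, and it is resolved cleanly once the antipodal pairing is used to block-diagonalise the computation.
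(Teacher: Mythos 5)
Your proof is correct and takes essentially the same route as the paper's: the same $2n$ saturating vertices (one flipped component $\gamma_k$, times the two local completions $\pm\alpha^{(k)}$ of each full-correlation part), followed by a check of affine independence. The only difference is that you explicitly carry out the affine-independence computation (and the full-dimensionality of the polytope) that the paper dismisses as trivial.
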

	\begin{proof}
We will check that each Boole inequality (\ref{eq:boolenciclo}) is saturated by $2n$ affinely independent vertices of the noncontextual polytope, that generate an affine subspace of dimension $2n-1$. Note that if we flip the sign of any component $\gamma_i$ of the Boole inequality $\gamma$, then this new vector $\gamma'$ satisfies $\prin{\gamma}{\gamma'} = n-2$ and has an even number of negative components, so we have obtained the full-correlation part of a noncontextual vertex that saturates the Boole inequalities. Since there are two ways of completing the local part of a noncontextual vertex that are consistent with a given full-correlation part and we have $n$ components $\gamma_i$ to flip the sign, in this manner we obtain $2n$ vertices of the noncontextual polytope that saturate the Boole inequality $\gamma$. To check that they are affinely independent is trivial. 
	\end{proof}

	To check that there are no more Boole inequalities, we need first to characterize the contextual vertices of the no-disturbance polytope.

\begin{theorem}\label{teo:nd}
The vertices of the no-disturbance polytope are the $2^n$ noncontextual deterministic marginal models
\begin{equation}\label{eq:nc}
(\langle X_0\rangle,\ldots,\langle X_{n-1}\rangle,\langle
X_0\rangle\langle X_1\rangle,\ldots,\langle X_{n-1}\rangle\langle
X_0\rangle),
\end{equation}
where $\langle X_i\rangle = \pm 1$, together with the $2^{n-1}$ contextual marginal models of the form
\begin{equation}\label{eq:c}
(0,\ldots,0,\langle X_0 X_1\rangle,\ldots,\langle X_{n-1} X_0\rangle),
\end{equation}
where $\langle X_i X_{i+1}\rangle=\pm 1$ such that number of negative components is odd.
\end{theorem}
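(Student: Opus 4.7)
The plan is to show both inclusions of $\mathrm{Vert}(P) = V \cup W$, where $P \subset \R^{2n}$ denotes the no-disturbance polytope of the $n$-cycle, $V$ the $2^n$ noncontextual deterministic models of \eqref{eq:nc}, and $W$ the $2^{n-1}$ contextual models of \eqref{eq:c}. The only facet-defining inequalities of $P$ are the $4n$ positivity constraints of \eqref{eq:positivity} (one quartet per pair), so a point of $P$ is a vertex iff $2n$ of them with linearly independent gradients are tight there.

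For $V \cup W \subseteq \mathrm{Vert}(P)$, I would count tight constraints directly. At a point of \eqref{eq:nc} each pair's distribution is deterministic, so three of the four positivity inequalities per pair vanish; the resulting $3n$ constraints determine all $2n$ coordinates uniquely. At a point of \eqref{eq:c} with signs $\epsilon_i := \langle X_i X_{i+1}\rangle$ satisfying $\prod_i \epsilon_i = -1$, each pair contributes exactly two tight inequalities, which together impose both $\langle X_i X_{i+1}\rangle = \epsilon_i$ and the linear cycle relation $\langle X_{i+1}\rangle = \epsilon_i \langle X_i\rangle$; composing around the loop yields $\langle X_0\rangle = (\prod_i \epsilon_i) \langle X_0\rangle = -\langle X_0\rangle$, forcing all $\langle X_i\rangle = 0$ and pinning down the remaining coordinates. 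In both cases the $2n$ tight constraints have rank $2n$.

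For the reverse inclusion, let $(a^*, c^*)$ be an arbitrary vertex (writing $a_i = \langle X_i\rangle$ and $c_i = \langle X_i X_{i+1}\rangle$). A trivial perturbation of $c_i$ alone shows each $c_i^*$ must lie at an endpoint of its feasible interval $[|a_i^* + a_{i+1}^*| - 1,\; 1 - |a_i^* - a_{i+1}^*|]$. The crux is the dichotomy that either $|a_i^*| = 1$ for every $i$, or $|a_i^*| < 1$ for every $i$. Granting this, the first alternative forces each interval to degenerate to the single point $c_i^* = a_i^* a_{i+1}^*$, giving $V$. In the second alternative, the number $t_i$ of tight positivities for pair $i$ satisfies $t_i \le 2$ (since $t_i = 3$ would require $|a_i^*| = |a_{i+1}^*| = 1$), and vertex-ness demands $\sum_i t_i \ge 2n$, so $t_i = 2$ for every $i$. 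The only configuration of two tight positivities for pair $i$ compatible with $|a_i^*|, |a_{i+1}^*| < 1$ is the ``antidiagonal'' one that forces $c_i^* = \pm 1$ together with $a_{i+1}^* = c_i^* a_i^*$; cycling and demanding a unique solution yields $\prod_i c_i^* = -1$ and hence $a_i^* \equiv 0$, placing us in $W$.

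The main obstacle is eliminating the mixed case, where $S := \{i : |a_i^*| < 1\}$ is both nonempty and a proper subset. My plan is to pick a maximal contiguous arc $\{j, j+1, \ldots, k\} \subseteq S$, whose endpoints abut $|a_{j-1}^*| = |a_{k+1}^*| = 1$, and exhibit a nonzero perturbation preserving every tight constraint. For the interior pairs $(i, i+1)$ with $j \le i \le k-1$ the antidiagonal relations give $a_{i+1} = c_i^* a_i$; for the two boundary pairs the degenerate $c$-intervals impose the linear relations $c_{j-1} = a_{j-1}^* a_j$ and $c_k = a_{k+1}^* a_k$. Setting $da_j = 1$, $da_i = \prod_{\ell = j}^{i-1} c_\ell^*$ for $j+1 \le i \le k$, $da_i = 0$ elsewhere, and propagating the $dc$'s by the above linear relations, I obtain a nonzero motion along which every tight positivity constraint has vanishing first-order variation --- contradicting vertex-ness and completing the proof.
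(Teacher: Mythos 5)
Your proof is correct and rests on the same underlying strategy as the paper's: both identify the vertices of the no-disturbance polytope as the points of the polytope where $2n$ of the $4n$ positivity conditions \eqref{eq:positivity} are tight with full rank, and both reduce to a per-index analysis of which subsets of the four conditions for a pair can be simultaneously tight and what they force. The difference is one of completeness rather than of method: the paper asserts that ``all other possible strategies for obtaining a vertex \ldots give the same set of vertices'' and leaves this as straightforward to check, whereas your dichotomy on the $|\langle X_i\rangle|$ and, above all, your explicit perturbation eliminating the mixed case are precisely the content of that omitted check. One small patch is needed there: an interior pair of your arc need not be antidiagonally tight --- it may have zero or one tight positivity condition, since pairs with both endpoints at $\pm 1$ contribute three tight conditions each and can make up the count $\sum_i t_i \ge 2n$ --- so the prescription $da_{i+1} = c_i^{*}\, da_i$, $dc_i = 0$ is not always the right one for such a pair. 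The fix is immediate: for a pair with at most one tight condition choose $da_{i+1}$ freely and solve that single condition for $dc_i$, which is always possible because every positivity condition carries coefficient $\pm 1$ on $\langle X_i X_{i+1}\rangle$ and that coordinate appears in no other pair's constraints. With that adjustment your kernel vector is still nonzero and the contradiction with vertex-ness stands.
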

\begin{proof}
By definition, the vertices of the polytope are given by the intersection of $2n$ independent hyperplanes, \ie, as a unique solution for a set of $2n$ independent linear equations chosen among the $4n$ equations \eqref{eq:positivity}. The above vertices are obtained
 by choosing two equations among \eqref{eq:um}-\eqref{eq:quatro}, for each index $i$. In particular,  contextual vertices are obtained by choosing equations \eqref{eq:um} and \eqref{eq:quatro} for an odd number of indexes $i$ and equations \eqref{eq:dois} and \eqref{eq:tres} for the remaining indexes.

It is straightforward to check that all other possible strategies for obtaining a vertex, \ie, involving the choice of $1,2$ or $3$ equations for each index $i$, give the same set of vertices.\end{proof}

	We now show that by eliminating each contextual vertex of the no-dis\-tur\-ban\-ce polytope we obtain only one noncontextuality inequality. By eliminating all $2^{n-1}$ contextual vertices, we obtain $2^{n-1}$ noncontextuality inequalities and the convex hull of all noncontextual vertices, \ie, the noncontextual polytope.

\begin{lemma}\label{lem:elimination}
Let $C$ be a contextual vertex, and consider the inequality (\ref{eq:boolealgebrico}) with $\gamma = C$. Then the intersection of the half-space $\prin{\gamma}{P} \le n-2$ with the no-disturbance polytope is the convex hull of all vertices but $C$.
\end{lemma}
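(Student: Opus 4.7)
The plan is to establish both inclusions $\conv(V') \subseteq ND \cap H$ and $ND \cap H \subseteq \conv(V')$, where $ND$ denotes the no-disturbance polytope, $V$ its vertex set, and $V' := V \setminus \{C\}$.

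For the first inclusion I would verify $\prin{\gamma}{x} \le n-2$ at every $v \in V'$. Noncontextual vertices satisfy it by~\eqref{eq:boolenciclo}. For a contextual vertex $C' = (0,\ldots,0,\gamma')$ with $\gamma' \ne \gamma$, both $\gamma$ and $\gamma'$ have an odd number of $-1$ entries, so they disagree in an even number of positions; being distinct, they disagree in at least two, giving $\prin{\gamma}{\gamma'} \le n-4$. Convexity of $H$ then extends the bound to $\conv(V')$, and $\conv(V') \subseteq ND$ is automatic.

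For the reverse inclusion, the crucial observation is that the Boole inequality $\prin{\gamma}{x} \le n-2$ is a \emph{facet} of $\conv(V')$: the proof of the preceding theorem exhibits $2n$ affinely independent vertices of $V'$ saturating it, which attains the maximum possible affine dimension $2n-1$ in the $2n$-dimensional ambient space. Let $S$ denote this saturating set and $F := \conv(S) = \conv(V') \cap \partial H$. Since $\prin{\gamma}{C} = n > n-2$, $C$ is the unique vertex of $ND$ strictly outside $H$, so the hyperplane $\partial H$ strictly separates $C$ from $\conv(V')$. Geometrically, $ND = \conv(\conv(V') \cup \{C\})$ is obtained by attaching to $\conv(V')$ a pyramid with apex $C$ and base $F$, and intersecting with $H$ removes exactly this pyramid, recovering $\conv(V')$.

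The main obstacle is turning this pyramid picture into a formal argument that any $x = \mu_C\, C + \sum_{v \in V'} \mu_v\, v$ in $ND \cap H$ can be rewritten as a convex combination of vertices in $V'$ alone. My plan is to exploit two linear dependences among the vertices of $ND$: the identity $\sum_{v \in S} v = 2(n-2)\, C$, which one checks by summing---the singleton components cancel in antipodal pairs and each correlation slot contributes $2(n-2)\gamma_i$---and the zero sum $\sum_{c} c = 0$ over all contextual vertices, which follows from the parity constraint on their correlation entries. These combine into proper affine combinations $C = \sum_{v \in V'} \alpha_v v$ with $\sum \alpha_v = 1$ into which the weight $\mu_C$ can be absorbed, and the hypothesis $\prin{\gamma}{x} \le n-2$ is precisely what forces the resulting coefficients to remain non-negative, so that $x$ lies in $\conv(V')$.
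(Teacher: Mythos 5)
Your forward inclusion is sound: the parity argument giving $\prin{\gamma}{\gamma'} \le n-4$ for every contextual vertex other than $C$ is correct and in fact more explicit than anything the paper writes down. The gap is in the reverse inclusion. Since $C$ is strictly separated from $\conv(V')$ by the hyperplane $\prin{\gamma}{P} = n-2$, \emph{every} affine combination $C = \sum_{v \in V'} \alpha_v v$ must have some $\alpha_v < 0$; your identity $\sum_{v \in S} v = 2(n-2)\,C$ is correct, but its coefficients sum to $n/(n-2)$, and repairing the sum using $\sum_c c = 0$ places negative weight on every contextual vertex $c \neq C$. Substituting such a combination into a given representation $x = \mu_C C + \sum_{v\in V'}\mu_v v$ then produces negative coefficients whenever the original representation happens to put no weight where the $\alpha_v$ are negative, and the hypothesis $\prin{\gamma}{x} \le n-2$ does not prevent this. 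Concretely: take a noncontextual vertex $v_0$ whose correlation part disagrees with $\gamma$ in three slots, so $\prin{\gamma}{v_0} = n-6$, and set $x = \tfrac13 C + \tfrac23 v_0$. Then $\prin{\gamma}{x} = n-4 \le n-2$, so $x$ lies in the intersection, yet your substitution leaves strictly negative coefficients on the contextual vertices $c \neq C$. The point $x$ \emph{does} lie in $\conv(V')$, but only via a different representation, and producing that representation is essentially the content of the lemma; your sketch assumes the conclusion at exactly this step.

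The paper closes the reverse inclusion by a different route, which you may want to adopt: the vertices of the truncated polytope are either old vertices satisfying the inequality or points where an edge of the no-disturbance polytope crosses the hyperplane, and since $C$ is the unique vertex violating the inequality (as you correctly note), any candidate new vertex $P'$ lies on an edge incident to $C$ and hence saturates $2n-1$ of the $2n$ positivity conditions \eqref{eq:positivity} saturated by $C$. This forces the full-correlation part of $P'$ to agree with $\gamma$ in all slots but one, and imposing $\prin{\gamma}{P'} = n-2$ flips the remaining slot, identifying $P'$ with an old noncontextual vertex. If you insist on the algebraic approach, you would need to show that every point of the intersection admits \emph{some} convex representation in which the weight on $C$ can be absorbed non-negatively, which in effect reproduces this edge analysis.
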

\begin{proof}
To show that, we shall check that the vertices of the intersection of the half-space $\prin{\gamma}{P} \le n-2$ with the no-disturbance polytope are a subset of the vertices of the no-disturbance polytope. For contradiction, suppose that the intersection generates a new vertex $P'$ that was not a vertex of the no-disturbance polytope. Then $\prin{\gamma}{P'} = n-2$ and, furthermore, $P'$ must lie on an edge connected to $C$, since all the other vertices respect the inequality. Edges of the no-disturbance polytope must saturate $2n-1$ independent positivity conditions \eqref{eq:positivity}, and therefore $P'$ must saturate $2n-1$ inequalities which are a subset of the $2n$ inequalities saturated by the vertex $C$. 

	Let $\beta$ be the full-correlation part of $C$, and $\delta$ the full-correlation part of $P'$. For each $i$, if $\beta_i=+1$, then $C$ saturates \eqref{eq:dois} and \eqref{eq:tres}. If $\beta_i=-1$, $C$ saturates \eqref{eq:um} and \eqref{eq:quatro}. Therefore, for every $i$ but one, let's say, $i_0$, $P'$ must saturate both positivity conditions; but saturating them both implies that $\delta_i = \beta_i$, leaving only $\delta_{i_0}$ free. But if we now demand that $\prin{\gamma}{P'} = n-2$, then $\delta_{i_0} = -\beta_{i_0}$, and therefore $P'$ is just an old noncontextual vertex.
\end{proof}

	To summarize our results: the no-disturbance polytope has $2^n + 2^{n-1}$ vertices, of which $2^n$ are noncontextual and $2^{n-1}$ are contextual. It has $4n$ facets, which are the positivity conditions \eqref{eq:positivity}. The noncontextual polytope has $2^n$ vertices and $4n + 2^{n-1}$ facets.

	\subsection{Quantum violations}

	The Boole inequalities for the $n$-cycle are violated by quantum mechanics for every $n \ge 4$. Since the inequalities for a given $n$ are all equivalent via relabellings, it is enough to violate one of them. For odd $n$, we choose the inequality with all $\gamma_i = -1$. The minimal dimension we need to violate the Boole inequalities is $3$, the state is always $\ket{0}$, and the observables\footnote{These states and observables are from \cite{liang11}.} are $A_k = 2\ketbra{v_k}{v_k}-\id$, where
	\[ \ket{v_k} = (\cos\theta,\sin\theta\cos\phi_k,\sin\theta\sin\phi_k),\]
	where \[\phi_k = \frac{n-1}{n}\pi k\] and 
	\[ \cos^2\theta = \frac{\cos\frac{\pi}{n}}{1+\cos\frac{\pi}{n}}. \]
	Then $\exval{0}{A_kA_{k+1}}{0} = -4\abs{\braket{0}{v_k}}^2+1 = -4\cos^2\theta +1$, and 
	\begin{equation}\label{eq:oddviolation}
	\chsh_n = n\de{4\frac{\cos\frac{\pi}{n}}{1+\cos\frac{\pi}{n}} -1}.
	\end{equation}
	The noncontextual bound is $\chsh_n \le n-2$. This inequality is saturated for $n=3$, and violated for all $n\ge 5$. To see this, it is enough to use some simple algebra and the fact that \[\cos\frac\pi n > 1-\frac{\pi^2}{n^2}\] for all $n$.

	For even $n$, we choose the inequality for which all $\gamma_i=-1$ except for $\gamma_{n-1}=+1$. Dimension $4$ is enough to violate\footnote{We conjecture that this is in fact the minimal dimension. For $n=4$ the proof is well-known.} it for all $n$, with the state
	\[\ket{\psi_-} = \ket{01}-\ket{10},\]
	and the observables\footnote{These states and observables are from \cite{braunstein89}.} $X_k = \tilde X_k \otimes \id$ for even $k$ and $X_k = \id \otimes \tilde X_k$ for odd $k$, where
	\[\tilde X_k = \cos\frac{k\pi}{n} \sigma_x + \sin \frac{k\pi}{n} \sigma_z, \]
	and $\sigma_x,\sigma_z$ are the Pauli matrices.

	We can then check that 
	\[X_kX_{k+1}\ket{\psi_-} = -\cos\frac \pi n \ket{\psi_-}-\sin\frac \pi n \ket{\phi_+}\]
	for every $k$ except $k=n-1$, when 
	\[X_{n-1}X_0\ket{\psi_-} = \cos\frac \pi n \ket{\psi_-}-\sin\frac \pi n \ket{\phi_+}.\]
	Therefore, 
	\begin{equation}\label{eq:evenviolation}
	\chsh_n = n\cos\frac \pi n,
	\end{equation}
	so the noncontextual bound is saturated for $n=2$, and violated for all $n\ge 4$.

	Note that in both the even and odd cases $\lim_{n\to\infty}\chsh_n = n$, the algebraic bound.

	\subsection{The CHSH inequality}\label{sec:chsh}

	The $4$-cycle is actually a Bell scenario, since every observable in the set $\{X_0,X_2\}$ commutes with every observable in the set $\{X_1,X_3\}$. Renaming $A_0 = X_0$, $A_1 = X_2$, $B_0 = X_1$, and $B_1 = X_3$, we have the famous CHSH inequality \cite{chsh69}. 
	\begin{equation}\label{eq:chsh}
	\mean{A_0B_0} + \mean{A_0B_1} + \mean{A_1B_0}-\mean{A_0B_1} \le 2
	\end{equation}
	The maximal quantum violation for it -- its Tsirelson bound \cite{tsirelson80} -- is $2\sqrt{2}$. This inequality was used in countless experimental tests of nonlocality, of which the most famous are the first, by Freedman and Clauser \cite{freedman72}, and Aspect's \cite{aspect82}.

	\subsection{The Klyachko inequality}\label{sec:klyachko}

	The $5$-cycle was studied before by Klyachko \cite{klyachko02}, and the following inequality got his name:
	\begin{equation}\label{eq:klyachko}
	-\mean{X_0X_1} - \mean{X_1X_2} - \mean{X_2X_3} -\mean{X_3X_4} -\mean{X_4X_0} \le 3. 
	\end{equation}
	Its Tsirelson bound is $4\sqrt{5}-5$. It is the simplest Boole inequality that is not also a Bell inequality that can be violated by quantum mechanics. It was also the first such inequality to be discovered\footnote{Pitowsky found the inequalities for the $3$-cycle in 1989 \cite{pitowsky89}, but they can not be violated by quantum mechanics.}. Since this inequality can violated by qutrits, and only requires the measurement of 5 observables, it allows one of simplest possible tests of noncontextuality. Such an experimental test has in fact been carried out \cite{lapkiewicz11}.

	\section{Tsirelson bounds for Boole inequalities}\label{sec:cabelloseveriniwinter}

	In this section we consider the problem of calculating Tsirelson bounds for generic Boole inequalities. In general, this is extremely difficult to do. The best known algorithm for solving it involves an infinite hierarchy of semidefinite programs \cite{navascues07,navascues08,fritz12}, with each step providing a tighter upper bound to the Tsirelson bound. This algorithm, however, does not terminate, since it can never confirm that a given upper bound is in fact equal to the Tsirelson bound. For this and other reasons, Tsirelson bounds are conjectured to be in general uncomputable  \cite{fritz12,wolf11,fritz12b} (to the best of my knowledge, this was first conjectured by Tobias Fritz). 
	
	But since the computation of each step of the hierarchy is a semidefinite program, it can be done efficiently, and in practice good upper bounds can be obtained with little effort. Here we present a simple technique to find an upper bound, due to Cabello, Severini, and Winter \cite{cabello10}, that is closely related to the first step of the hierarchy \cite{fritz12}.

	To study the quantum value of a Boole inequality $\chsh$ it is more convenient to represent it as an operator; \ie, we define $\hat\chsh$ to be the operator such that $\chsh = \mean{\hat\chsh}_\rho$. For example, for the $n$-cycle inequalities
 	\[
	\chsh_n = \sum_{i=0}^{n-1} \gamma_i \mean{X_iX_{i+1}} \le n-2,
	\]
	the operator is
 	\[
	\hat\chsh_n = \sum_{i=0}^{n-1} \gamma_i X_iX_{i+1}
	\]
	Then we define the Tsirelson bound $\Omega_\text{Q}$ of a Boole inequality $\chsh$ from some marginal scenario $\marginal$ as
	\begin{equation}\label{eq:maxquantum}
	\Omega_\text{Q} =  \max_{\rho,\hat\chsh} \tr\rho\hat\chsh = \max_{\hat\chsh}\norm{\hat\chsh},
	\end{equation}
	where\footnote{$\norm{\cdot}$ is the standard operator norm.} the maximization is done over all quantum realizations $\hat\chsh$ of the marginal scenario $\marginal$, \ie, over all operators $X_i$ which respect the commutation relations implied by the marginal scenario. Note that the Tsirelson bound is always reached with pure states.
	
	The problem is that doing this maximization is a terribly difficult job, as the set of quantum realizations of a marginal scenario is anything but simple. It is possible, however, to do the maximization over a larger set, and thus obtain an upper bound on $\Omega_\text{Q}$. To show how to do this, we need a little detour through graph theory.
	
	\subsection{A graph-theoretical detour}	
	
	To be more precise, we are going to show that every quantum realization of a marginal scenario is also an orthonormal representation of a certain graph, and that the maximization over these orthonormal representations can be done efficiently.
	
	This graph is called the \emph{exclusivity} graph of a given Boole inequality. Please do not confound it with the compatibility graph\footnote{Actually, in the general case it is the compatibility hypergraph: it is only a graph when the maximum number of observables in a context is two.} that was introduced in the previous section; the compatibility graph encodes the marginal scenario. The exclusivity graph, on the other hand, encodes a specific representation of a Boole inequality.
	
	To define it, we first need to rewrite the desired Boole inequality as the conical sum
	\begin{equation}\label{eq:boolecsw}
	\Sigma = \sum_{c_i} \omega_i p(c_i|C_i),
	\end{equation}
	that is, as a sum of probabilities with positive coefficients\footnote{It can be proven that these coefficients are always rational.}. This can always be done, since we can always eliminate negative signs through normalization of probabilities, \ie, using the fact that $-p(A=a) = p(A\neq a)-1$. A useful identity for doing this with inequalities that are originally written in terms of expectation values is
	\begin{equation}\label{eq:cswtransform}
	\pm\mean{X_iX_j} = 2\bigg(p(+\pm|X_i,X_j)+p(-\mp|X_i,X_j)\bigg)-1.
	\end{equation}
	It is easy to see that this representation is not unique, because the representation of the Boole inequalities themselves is not unique: there is freedom in using no-disturbance conditions (or no-signalling) and normalization. For example, the inequality \eqref{eq:osspekkens} for the $3$-cycle can be represented as
	\begin{multline}\label{eq:3ciclocsw}
	p(+-|01) + p(-+|01) + p(+-|12) + p(-+|12) \\+ p(+-|20) + p(-+|20) \le 2,
	\end{multline}
	which can be further simplified through no-disturbance conditions to
	\[
	p(+-|01) + p(+-|12) + p(+-|20) \le 1.
	\]
	For clarity, we are omitting the $X$s from these inequalities. In general, different representations will give you different upper bounds for the Tsirelson bound\footnote{Although this is not the case with this example.}, and it is a bit of an art to find the best representation \cite{sadiq11}. 
	

	Now, we're ready to define the exclusivity graph:
	\begin{definition}
	The exclusivity graph of a Boole inequality written in the form \eqref{eq:boolecsw} is a graph that has the events $c_i|C_i$ as vertices, with edges connecting exclusive events.
	\end{definition}
	In the general case, we need to consider the exclusivity graph together with an assignment positive numbers $\omega_i$ to its vertices. It is often the case that $\omega_i = 1$, and then we don't need to talk about this.
	
	For example, the vertex $+-|01$ from inequality \eqref{eq:3ciclocsw} will be connected to the vertices $+-|12$, $-+|01$, and $+-|20$. Its exclusivity graph is the prism graph represented in figure \ref{fig:3ciclo}.

	\begin{figure}[ht]
	\centering
	\includegraphics[width=0.5\textwidth]{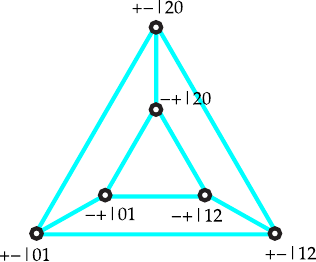}
	\caption{Exclusivity graph for the $3$-cycle.}
	\label{fig:3ciclo}
	\end{figure}

	Now we're ready to define what is an orthonormal representation\footnote{Unfortunately, our definition is the opposite of what is found in the graph theory literature \cite{lovasz79,knuth94}: what they call an orthonormal representation of a graph $G$ is equivalent to our definition of an orthonormal representation of the complement graph $\bar G$.}  of a graph:
	\begin{definition}
	An orthonormal representation of a graph $G$ with vertices $V_i$ is an assignment of projectors $\Gamma_i$ such that $V_i$ adjacent to $V_j$ implies that $\Gamma_i\Gamma_j=0$.
	\end{definition}
	
	Now we can show that any quantum realization of a marginal scenario gives rise to an orthonormal representation of the exclusivity graph. The idea is quite simple: in quantum mechanics, to obtain the probability $p(c_i|C_i)$ one calculates $\tr \rho \Pi_{c_i|C_i} $; then if we define $\Gamma_i = \Pi_{c_i|C_i}$, this is a valid orthonormal representation, as projectors associated to exclusive events are orthogonal.
	
	For example, in figure \ref{fig:3ciclo} the vertices $+-|01$ and $+-|20$ are adjacent. To calculate the probabilities we have 
	\[ p(+-|01) = \tr \rho \Pi_0^+\Pi_1^-\mathand p(+-|20) = \tr\rho \Pi_2^+\Pi_0^-,\]
	where $\Pi_0^+$ is the projector of the observable $X_0$ associated with the outcome $+$, and so on. Therefore, the projectors we assign to these vertices are $\Pi_0^+\Pi_1^-$ and $\Pi_2^+\Pi_0^-$, and their product is zero since they $\Pi_0^+\Pi_0^- = 0$.
	
	The converse statement is not true: given an orthonormal representation $\Gamma_i$ of the graph, it is in general not possible to find a quantum realization of the associated marginal scenario such that $\Pi_{c_i|C_i}=\Gamma_i$.

	An elegant counterexample can be found by considering the CHSH\footnote{Actually, this relabelling of it: $-\mean{X_0X_1} - \mean{X_1X_2}+\mean{X_2X_3}-\mean{X_3X_0} \le 2$} \eqref{eq:chsh} and Klyachko \eqref{eq:klyachko} inequalities \cite{sadiq11}. Using normalization and no-disturbance conditions they can be written, respectively, as 
	\begin{equation}\label{eq:sadiq}
	p({+-}|01)+p({+-}|12)+p({++}|23)+p({-+}|30)+p({-}|0) \le 2 \le \frac{3+\sqrt2}2
	\end{equation}
	and
	\begin{equation}\label{eq:klyachkoboole}
	p({+-}|01)+p({+-}|12)+p({+-}|23)+p({+-}|34)+p({+-}|40) \le 2 \le \sqrt5,
	\end{equation}
	where the last inequalities refer to the respective Tsirelson bounds. The surprising thing about these inequalities is that their exclusivity graph is the \emph{same}, the pentagon. They are shown in figure \ref{fig:sadiq}. 

	\begin{figure}[ht]
	\centering
	\includegraphics[width=0.35\textwidth]{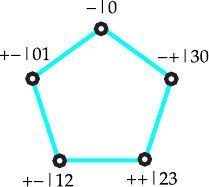} \hspace*{25pt} \includegraphics[width=0.35\textwidth]{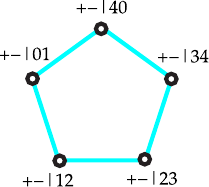}
	\caption{Exclusivity graphs for inequalities \eqref{eq:sadiq} and \eqref{eq:klyachkoboole}.}
	\label{fig:sadiq}
	\end{figure}
	
	The fact that the Tsirelson bound for the Klyachko inequality \eqref{eq:klyachkoboole} is $\sqrt{5}$ implies that there is an orthonormal representation of the pentagon $\Gamma_i$ such that $\norm{\sum_i \Gamma_i} = \sqrt{5}$; but if there existed a quantum realization $\Pi_{c_i|C_i}$ of the CHSH scenario such that $\Pi_{c_i|C_i} = \Gamma_i$, this would imply that it is possible to violate inequality \eqref{eq:sadiq} up to $\sqrt{5}$, a contradiction, since $\sqrt{5}$ is larger than its Tsirelson bound of $\frac{3+\sqrt2}2$.
	
	This shows that the set of orthonormal representations of an exclusivity graph is strictly larger than the set of quantum realizations of a marginal scenario, and therefore that optimizing over this larger set can only give us an upper bound on the Tsirelson bound of the Boole inequality.
	
	\subsection{The CSW theorem}

	Putting all these observations together shows us that
	\[
	\Omega_\text{Q} = \max_{\psi,\Pi_{c_i|C_i}} \sum_i \omega_i\tr\psi\Pi_{c_i|C_i} \le \max_{\psi,\Gamma_i} \sum_i \omega_i\tr\psi \Gamma_i = \Omega_\text{Q}',
	\]
	that is, the Tsirelson bound $\Omega_\text{Q}$ of a Boole inequality written in the form \eqref{eq:boolecsw} is upperbounded by maximizing the value of the inequality over all orthonormal representations $\Gamma_i$ of its exclusivity graph $G$.
	
	The significance of this observation comes from the fact that while $\Omega_Q$ is in general uncomputable, $\Omega_Q'$ can be calculated in polynomial time. Furthermore, 
	\[\Omega_Q' = \vartheta(G,\omega),\]
	that is, it is the weighted Lovász $\vartheta$-function of the exclusivity graph \cite{lovasz79,knuth94}. The proof of this equivalence\footnote{Note, once more, that Lovász's definition of an orthonormal representation of a graph $G$ is equivalent to our definition of an orthonormal representation of the complement graph $\bar G$.} for $\omega_i=1$ is theorem 5 in \cite{lovasz79}, or equation 10.1 in \cite{knuth94} for the general case. This is a famous graph-theoretical function, and therefore this equivalence allows us to access the vast literature existent about it to find Tsirelson bounds for the inequalities that interests us.
	
	There's one caveat: the usual definition of the Lovász function requires the $\Gamma_i$ to be one-dimensional projectors, and in our case this is not always true, since if we have $\Gamma_i=\Pi_{c_i|C_i}$ they will have in general rank larger than one. But this poses no problem, since restricting the maximization to be over one-dimensional projectors does not reduce the value of $\Omega_Q'$. To see that, suppose that the maximum is reached with a higher-dimensional orthonormal representation $\Gamma_i$. Then we can simply define define one-dimensional projectors 
	\[\Gamma_i' = \frac{\Gamma_i\psi\Gamma_i}{\tr\psi\Gamma_i} \]
	such that $\Gamma_i'$ is also an orthonormal representation of $G$ and $\tr\psi\Gamma_i' = \tr\psi\Gamma_i$, thereby giving the same value of $\Omega_Q'$.

	This connection of Boole inequalities with graph theory is known as the CSW theorem:
	
	\begin{theorem}[Cabello, Severini, Winter \cite{cabello10}]\label{teo:csw}
	Let $ \Sigma = \sum_{c_i} \omega_i p(c_i|C_i)$
	be a Boole inequality, $G$ its exclusivity graph, and $\Omega_Q$ its Tsirelson bound. Then 
	\begin{equation}\label{eq:maxlovasz}
	\Sigma \le \Omega_\text{Q} \le  \vartheta(G,\omega)
	\end{equation}
	\end{theorem}

	\subsection{\texorpdfstring{Tsirelson bounds for the $n$-cycle}{Tsirelson bounds for the n-cycle}}

	As an application of CSW theorem \ref{teo:csw}, we shall find the quantum bounds for the Boole inequalities found in section \ref{sec:nciclo}. As these inequalities only have terms $\pm\mean{X_iX_j}$, the transformation \eqref{eq:cswtransform} will be enough to bring them to the form of inequality \eqref{eq:boolecsw}, so
	\[ \chsh_n = 2\Sigma_n - n,\]
	where $\Sigma_n$ is the desired sum of probabilities. To find the exclusivity graph for odd $n$, the same strategy used in figure \ref{fig:3ciclo} works, so it will be the prism graph $Y_n$, and therefore the Tsirelson bound is upperbounded by $2\vartheta(Y_n)-n$. The Lovász function of the prism graph is\footnote{As was proven in \cite{araujo12}, and can also be derived from the results of \cite{liang11,cabello10}.}
	\[ \vartheta(Y_n) = \frac{2n \cos\frac\pi n}{1+\cos\frac\pi n}, \]
	thus proving that the quantum violation \eqref{eq:oddviolation} is the largest possible.

	To find the exclusivity graph for even $n$, the strategy is as represented in figure \ref{fig:4ciclo}, where it is done for $n=4$. It is clear that this strategy always works, so the exclusivity graph for even $n$ is the Möbius ladder $M_{2n}$. Its Lovász function is conjectured to be\footnote{See \cite{araujo12} for a discussion.}
	\[ \vartheta(M_{2n}) = \frac{n}{2}\de{1+\cos\frac\pi n}, \]
	which would prove that the quantum violation \eqref{eq:evenviolation} is in fact the largest possible. A proof can be obtained\footnote{See, again, \cite{araujo12} for a discussion.} from the results of \cite{wehner06}.

	\begin{figure}[ht]
	\centering
	\includegraphics[width=0.50\textwidth]{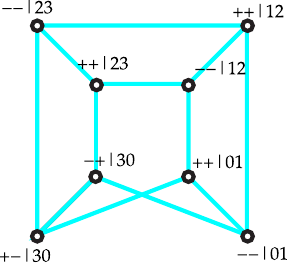}
	\caption{CSW graph for the $4$-cycle.}
	\label{fig:4ciclo}
	\end{figure}
	
	\section{State-independent Boole inequalities}

	All the Boole inequalities we have studied so far have quantum violations that depend on the quantum state: they are violated by some, but not violated by others. This situation stands in contrast with the proofs of contextuality we studied in section \ref{sec:kochenspecker}: they only considered predictions of quantum mechanics that were valid for \emph{any} state. Therefore, it would be quite surprising if we couldn't find a Boole inequality that were violated by \emph{any} quantum state.

\subsection{A Boole inequality from the 18-projector proof by Cabello, Estebaranz, and García-Alcaine}

	The 18-projector proof \cite{cabello96b} translates quite directly into a state-independent Boole inequality \cite{cabello08}. To see that, define $A_{ij} = 2v_{ij}-\id$, where $v_{ij}$ are the projectors from figure \ref{fig:adan18}. Then if we take the product of four commuting such $A_{ij}$, it will be always equal to $-\id$. Taking these products over all nine sets of commuting $A_{ij}$ and adding them together, we get
	\begin{multline}
	\hat{\mathcal{I}}_{18} = - A_{12} A_{16} A_{17} A_{18} - A_{12} A_{23}
	A_{28} A_{29} -  A_{23} A_{34} A_{37} A_{39} 
	 \\ -  A_{34} A_{45} A_{47} A_{48}  -
	 A_{45} A_{56} A_{58} A_{59}  -  A_{16} A_{56}
	A_{67} A_{69}  \\ -  A_{17} A_{37} A_{47}
	A_{67}  - A_{18} A_{28} A_{48} A_{58} 
	 - A_{29} A_{39} A_{59} A_{69} = 9\id
	\end{multline}
	but a computer program can easily check that in any noncontextual theory \begin{multline}
\mathcal{I}_{18} = -\langle A_{12} A_{16} A_{17} A_{18} \rangle-\langle A_{12} A_{23}
A_{28} A_{29}\rangle - \langle A_{23} A_{34} A_{37} A_{39} \rangle
 \\ - \langle A_{34} A_{45} A_{47} A_{48} \rangle -
\langle A_{45} A_{56} A_{58} A_{59} \rangle - \langle A_{16} A_{56}
A_{67} A_{69} \rangle \\ - \langle A_{17} A_{37} A_{47}
A_{67} \rangle -\langle A_{18} A_{28} A_{48} A_{58} \rangle
 - \langle A_{29} A_{39} A_{59} A_{69} \rangle \le 7.
\end{multline}
	This Boole inequality is therefore violated by \emph{any} quantum state.

\subsection{A Boole inequality from Yu and Oh's 13-projector proof}

	The projectors from Yu and Oh's 13-projector proof can also be used to form such a state-independent inequality \cite{yu12}, but their inequality is not a facet of the noncontextual polytope, and according to our definition not a Boole inequality at all. Fortunately, there \emph{is} a Boole inequality associated to their projectors, found by Cabello \etal \cite{cabello12}. It reads
	\begin{multline*}
	\mathcal{I}_{YO} = 2\mean{H_0} + \sum_{i=1}^3 \mean{Z_i} + \mean{Y_i^+} + \mean{Y_i^-} +2\mean{H_i} \\
	+\sum_{j=1}^3 \mean{Z_jY_j^+}+\mean{Y_j^+Y_j^-}+\mean{Y_j^-Z_j} \\
	-3\sum_{k=1}^3 \mean{Z_kY_k^+Y_k^-} -\sum_{C_i \in \,\mathcal C_2}\mean{C_i} \le 25,
	\end{multline*}
	where $Z_i = \id -2z_i$, $Y_i^\pm = \id -2y_i^\pm$, $H_i = \id -2h_i$, as defined in section \ref{sec:yuoh}, and $\mathcal C_2$ is the subset of two-observable contexts of Yu and Oh's marginal scenario. The operator $\hat{\mathcal{I}}_{YO} = (25+8/3)\id$ is again proportional to identity, and this inequality is the one in Yu and Oh's noncontextual polytope with the largest violation. As this inequality was found by a computer program we feel no need of reproducing a proof here.

\subsection{A Boole inequality from the Peres-Mermin square}

	Peres-Mermin's proof can also be adapted into such an inequality. Let $A_{ij}$ be the observables of the Peres-Mermin square as defined in equation \eqref{eq:peresmerminsquare}. Then it follows that
	\begin{multline*}
	\hat{\mathcal{I}}_{PM} = A_{11} A_{12} A_{13} + A_{21} A_{22} A_{23} + A_{31} A_{32} A_{33} \\
	+ A_{11} A_{21} A_{31} +  A_{12} A_{22} A_{32} -  A_{13} A_{23} A_{33} = 6\id,
	\end{multline*}
	but a computer program\footnote{Or in fact yourself, by some playing around with the triangle inequality.} can easily check that
	\begin{multline*}
	\mathcal{I}_{PM} = \langle A_{11} A_{12} A_{13}\rangle + \langle A_{21} A_{22} A_{23}\rangle + \langle A_{31} A_{32} A_{33}\rangle \\+ \langle A_{11} A_{21} A_{31}\rangle + \langle A_{12} A_{22} A_{32}\rangle - \langle A_{13} A_{23} A_{33}\rangle \le 4.
	\end{multline*}
	This Boole inequality was also found by Adán Cabello \cite{cabello08}.

	Note that in all these inequalities the operator $\hat{\mathcal{I}}$ was proportional to identity, but this is not a required condition for a state-independent violation: we only need $\mean{\hat{\mathcal{I}}}_\psi$ to be larger than the noncontextual bound for every $\psi$. It is an open question if there is a Boole inequality that satisfies the latter condition but not the former\footnote{It is trivial, however, to generate such inequalities that are not facets of the noncontextual polytope.}.

\chapter*{Conclusion}
\addcontentsline{toc}{chapter}{Conclusion}

	The attentive reader might have noticed that despite hints of \textit{quantum magic} as the motivation for this thesis, there has been almost no mention of it in the technical parts of the text. In part this is because of the limitations of time and space, but more importantly because I believe that to really understand \textit{quantum magic}, we must understand the foundations of quantum mechanics first; and this latter understanding is still sorely lacking. The goal of this thesis was therefore to help with this point.

	This goal can be naturally split in two parts (if not in two chapters): first, to summarize old research in a clear and consistent way, and second (and more important), to expose new research that is not as widely known as I think it deserves to be.

	Specifically, I hope to have convinced the reader that the formulation of noncontextuality exposed in chapter \ref{cha:prob} is a fruitful way of separating ``classical'' phenomena from those that are truly quantum.  The way ahead is to actually pick up those fruits: develop information processing protocols that derive their strenght from the violation of Boole inequalities. In a sense, this work has already begun: we know that the higher-than-classical power of quantum random access codes comes from contextuality \cite{galvao02}, and \cite{liang11} has a very colourful description of a game in which contextuality boosts the chance of success.

	But, in my opinion, these protocols lack a deeper appeal, since it's not clear if the fact that they have a quantum advantage means anything other than the fact that they have a quantum advantage. What would really please me is to find a connection between contextuality and a discovery that has far-reaching implications in physics, mathematics, and computer science: quantum computing.

\appendix

\chapter{The Bell-Mermin model}\label{sec:bellmermin}

	This ontological model was first proposed by Bell in 1964 \cite{bell66}, in order to provide a counterexample to von Neumann's theorem \cite{vonneumann32}, and later cleaned up by David Mermin \cite{mermin93}. It is certainly the simplest deterministic ontological model out there, having been constructed to describe the statistics coming from the measurement of any observable of a pure qubit. It is not contextual, but if extended to mixed states it would have to be preparation-contextual, by Spekkens' theorem, and if extended to higher dimensions it would become measurement-contextual, by Gleason's theorem. It also can't be extended to describe POVMs, by Busch's theorem. In a sense, then, it is the best that a realist commited to non-contextuality can do.

	This model is quite out of fashion, as it measures observables instead of its projectors; but we shall make no violence to it by ``fixing'' this feature. The concerned reader may do it himself quite easily, or simply consult Harrigan's work \cite{harrigan10}. 

	We formulate it by representing a two-dimensional self-adjoint observable $A$ in the Bloch basis, as \[A = a_0 \id + a \cdot \sigma,\] where $a_0 \in \R$, $a \in \R^3$ and $\sigma$ is the vector of Pauli matrices. 

	The ontic space $\Lambda = S^2 \times S^2$ is the cartesian product of two unit spheres. In the first one we shall embed the pure states via their Bloch vector $\hat{\psi} \in S^2$, defined by $\psi = \frac{1}{2}(\id + \hat{\psi}\cdot \sigma)$, and in the second one we shall use an auxiliar unit vector $\lambda$.

	The ontic state is then
	\[ \mu_\psi = \delta(\lambda_\psi-\hat{\psi}), \]
	and the response function is
	\begin{equation}\label{eq:bellmermin}
	\xi_{A}(\lambda_\psi,\lambda) = a_0 + \norm{a} \sign(a\cdot(\lambda + \lambda_\psi)).
	\end{equation}
	Notice that given $\lambda_\psi$ and $\lambda$, it gives deterministically $a_0 + \norm{a}$ or $a_0 - \norm{a}$, as required.

	To recover the quantum statistics, we take the uniform average of $\xi_{A}$ over $\Lambda$:
	\begin{align*} 
\mean{A}  &= \int_\Lambda \mu_\psi \xi_{A} \\
		   &= a_0 + \norm{a}\int_\Lambda \dint \lambda_\psi \dint \lambda \, \delta(\lambda_\psi -\hat{\psi})\sign(\hat{a}\cdot(\lambda + \lambda_\psi)) \\
		   &= a_0 + \norm{a}\int_{S^2} \dint \lambda\, \sign(\hat{a}\cdot(\lambda + \hat{\psi})) \\
		   &= a_0 + \norm{a}\frac{1}{4\pi}\int_0^{2\pi}\int_0^\pi \sin\theta_{a\lambda}\dint \theta_{a\lambda}\dint\varphi\, \sign(\cos \theta_{a\lambda} + \hat{a}\cdot\hat{\psi}) \\
		   &= a_0 + \norm{a}\frac{1}{2}\de{\int_0^{\cos^{-1}(-\hat{a}\cdot\hat{\psi})}\sin\theta_{a\lambda}\dint \theta_{a\lambda}-\int_{\cos^{-1}(-\hat{a}\cdot\hat{\psi})}^\pi\sin\theta_{a\lambda}\dint \theta_{a\lambda}} \\
		   &= a_0 + a\cdot\hat{\psi} \\
		   &= \tr A \psi
	\end{align*}

\bibmark
\printbibliography 
\end{document}